    \gdef\affinemakeabove{%
    \xdef\affineedgestyle{affineedgeabove} 
    \xdef\affinelabelstyle{affinelabelabove}
    }
    \gdef\affinemakebelow{%
    \xdef\affineedgestyle{affineedgebelow}  
    \xdef\affinelabelstyle{affinelabelbelow}
    }
    \newcommand{\affinesetup}[2][(0,0)]{%
    \gdef\makeaffinestartnode{\node[inner sep=0] (affinestartnode) at #1 {};}
    \foreach[count=\i from 0, count=\j from 1] 
		\len/\lower/\upper/\color/\label in {#2} {
		\expandafter\xdef\csname len\i\endcsname{\len}
		\expandafter\xdef\csname lower\i\endcsname{\lower}
		\expandafter\xdef\csname upper\i\endcsname{\upper}
		\expandafter\xdef\csname color\i\endcsname{\color}
		\expandafter\xdef\csname label\i\endcsname{{\label}}
		\expandafter\xdef\csname next\i\endcsname{\j}
		\xdef\maxi{\i}
		\xdef\maxiplus{\j}
	}    
    }
	\gdef\drawapsfromi#1{
		\ifnum#1>\maxi
			\affineenddraw
		\else	
			\node (current#1) at (affinestartnode) {};
			
			\edef\curlen{\csname len#1\endcsname}
			\edef\curlower{\csname lower#1\endcsname}
			\edef\curupper{\csname upper#1\endcsname}
			\edef\curcolor{\csname color#1\endcsname}
			\edef\curlabel{\csname label#1\endcsname}
			\edef\nextstep{\csname next#1\endcsname}

			\foreach \i in {1,...,\curupper} {
				\draw[\curcolor] (current#1.center) to[\affineedgestyle] 
					node[pos=1] (current#1) {}
					node[pos=0, inner sep=0] (affinemax#1-left) {} 
					node[pos=0.5, inner sep=0] (affinemax#1-center) {} 
					node[pos=1, inner sep=0] (affinemax#1-right) {} 
					node[pos=1, inner sep=0] (affinemax#1) {} 
					node[pos=1, inner sep=0] (affinemax) {} 
					node[\affinelabelstyle] {\curlabel}
					++(\curlen, 0);
				\ifnum\i<\curlower\else
					\node (affinestartnode) at (current#1) {};
					\expandafter\drawapsfromi\nextstep
				\fi
			}
		\fi	
	}
	\newcommand\affinedrawing[1][]{
		\makeaffinestartnode
		\gdef\affineenddraw{#1}
		\tikzset{affineedgeabove/.style={in=90,out=90}}
    	\tikzset{affineedgebelow/.style={in=270,out=270}}
    	\tikzset{affinelabelabove/.style={midway,above}}
    	\tikzset{affinelabelbelow/.style={midway,below}}
		\edef\curlen{\csname len0\endcsname}
		\edef\curcolor{\csname color0\endcsname}
		\edef\curlabel{\csname label0\endcsname}
		\draw[\curcolor] (affinestartnode.center) to
			node[pos=1] (affinestartnode) {}
			node[pos=0, inner sep=0] (affinemin) {}
			node[pos=0, inner sep=0] (affineroot-left) {}
			node[pos=0.5, inner sep=0] (affineroot-center) {}
			node[pos=1, inner sep=0] (affineroot-right) {}
			node[\affinelabelstyle] {\curlabel}
			++(\curlen, 0);
	\drawapsfromi{1}
	\affinemakeabove}
	\newcommand{\affinelower}[1]{\expandafter\relax\csname lower#1\endcsname}
	\newcommand{\affineupper}[1]{\expandafter\relax\csname upper#1\endcsname}
\newcommand{\strongconst}{5}
\newcommand{\strongconstverb}{five}
\edef\strongconstminus{\pgfmathresult}
\edef\strongconstplus{\pgfmathresult}
\newcommand{\textop}[1]{\textnormal{\textsf{#1}}}
\newcommand{\dd}{.\,.}
\newcommand{\Oh}{{\mathcal O}}
\newcommand{\rot}{\textop{rot}}
\newcommand{\pal}{\ensuremath{\textop{PAL}}}
\newcommand{\palk}{\ensuremath{\pal^k}}
\newcommand{\palpref}{\ensuremath{\mathsf{PalPref}}}
\newcommand{\palprefs}{\ensuremath{\palpref^s}}
\newcommand{\Aa}{\mathcal{A}}
\newcommand{\Bb}{\mathcal{B}}
\newcommand{\Cc}{\mathcal{C}}
\newcommand{\Pp}{\mathcal{P}}
\newcommand{\Ss}{\mathcal{S}}
\newcommand{\Ww}{\mathcal{W}}
\DeclareMathOperator{\polylog}{polylog}
\newcommand{\absolute}[1]{\lvert#1\rvert}
\newcommand{\floor}[1]{\lfloor#1\rfloor}
\newcommand{\ceil}[1]{\lceil#1\rceil}
\newcommand{\angles}[1]{\langle#1\rangle}
\newcommand{\rev}[1]{\textop{rev}(#1)}
\newcommand{\cent}{\textop{cen}}
\let\emptystring\varepsilon
\let\emptyseries\emptystring
\let\smallconst\epsilon
\theoremstyle{plain}
\newtheorem{algo}[theorem]{Algorithm}
\def\mathbfk{{\boldmath$k$\unboldmath}}
\title{Small Space Encoding and Recognition of \mathbfk{}-Palindromic Prefixes}
\author{Gabriel Bathie}
{DIENS, École normale supérieure de Paris, PSL Research University, France\\
LaBRI, Université de Bordeaux, France}
{gabriel.bathie@gmail.com}
{https://orcid.org/0000-0003-2400-4914}
{}
\author{Jonas Ellert}
{DIENS, École normale supérieure de Paris, PSL Research University, France}
{ellert.jonas@gmail.com}
{https://orcid.org/0000-0003-3305-6185}
{}
\author{Tatiana Starikovskaya}
{DIENS, École normale supérieure de Paris, PSL Research University, France}
{tat.starikovskaya@gmail.com}
{https://orcid.org/0000-0002-7193-9432}
{}
\authorrunning{G.\ Bathie, J.\ Ellert, and T.\ Starikovskaya}
\keywords{palindromic length, read-only algorithms, palindromes} 
\def\standalonetitle{0} 
\begin{document}

\ifnum\standalonetitle=1 \thispagestyle{empty}\fi

\maketitle

\begin{abstract}
Palindromes are non-empty strings that read the same forward and backward. The problem of recognizing strings that can be represented as the concatenation of even-length palindromes, the concatenation of palindromes of length at least two, and the concatenation of exactly $k$ palindromes was introduced in the seminal paper of Knuth, Morris, and Pratt [SIAM J. Comput., 1977]. 

In this work, we study the problem of recognizing so-called \emph{$k$-palindromic} strings, which can be represented as the concatenation of exactly $k$ palindromes. It was shown that the problem is solvable in linear space and time [Rubinchik and Schur, MFCS'2020]. We aim to develop a sublinear-space solution, and show the following results: 

\begin{enumerate}
\item First, we show a structural characterization of the set of all $k$-palindromic prefixes of a string by representing it as a union of a small number of highly structured string sets, called \textit{affine prefix sets}. Representing the lengths of the $k$-palindromic prefixes in this way requires $\Oh(6^{k^2} \cdot \log^k n)$ space.
By constructing a lower bound, we show that the space complexity is optimal up to polylogarithmic factors for reasonably small values of $k$.

\item Secondly, we derive a read-only algorithm that, given a string $T$ of length $n$ and an integer~$k$, computes a compact representation of $i$-palindromic prefixes of $T$, for all $1 \le i \le k$. The algorithm uses $\Oh(n \cdot 6^{k^2} \cdot \log^k n)$ time and $\Oh(6^{k^2} \cdot \log^k n)$ space.

\item Finally, we also give a read-only algorithm for computing the palindromic length of $T$, which is the smallest $\ell$ such that $T$ is $\ell$-palindromic. Here, we achieve $\Oh(n \cdot 6^{\ell^2} \cdot \log^{\ceil{\ell/2}} n)$ time and $\Oh(6^{\ell^2} \cdot \log^{\ceil{\ell/2}} n)$ space. For some values of $\ell$, this is the first algorithm for palindromic length that uses $o(n)$ additional working space on top of the input.
\end{enumerate}
\end{abstract}

\ifnum\standalonetitle=1 \newpage\setcounter{page}{1}\fi

\section{Introduction}\label{sec:intro}

A \emph{palindrome} is a non-empty string that equals its reversed copy, i.e., a string that reads the same both forward and backward.
Throughout this work, we denote the language of palindromes by~$\pal$. We also define natural derivations of $\pal$: 
the language of even-length palindromes ${\pal_{\mathrm{ev}}}%
$, the language of palindromes of length greater than one ${\pal_{>1}}%
$, and the language of concatenations of $k$ palindromes $\pal^k = \{P_1 P_2 \ldots P_k : P_i \in \pal, 1 \le i \le k\}$, for any $k \in \mathbb N^+$. The problem of recognising the languages $\pal_{\mathrm{ev}}^\ast$ (often referred to as ``palstar''), $\pal_{>1}^\ast$, and $\pal^k$, where one is given an input string and must decide whether it belongs to the language, is a classical question of formal language theory, initiated in the seminal paper of Knuth, Morris, and Pratt~\cite{DBLP:journals/siamcomp/KnuthMP77}.\footnote{${}^\ast$ is a Kleene star.}

Languages $\pal_{\mathrm{ev}}^\ast$, $\pal_{>1}^\ast$, and $\pal^k$ are context-free, and Valiant's parser from 1975 recognizes strings from these languages in $\Oh(n^\omega)$ time, where~$n$ is the length of the input string and $\omega$ is the matrix multiplication exponent. Only in 2018, Abboud, Backurs, and Vassilevska Williams showed that Valiant's parser is optimal if the current clique algorithms are optimal~\cite{DBLP:journals/siamcomp/AbboudBW18}, meaning that for general context-free languages, there is little hope of achieving a faster recognition algorithm.

The origins of the study of derivatives of the $\pal$ languages are in fact in line with the result of~\cite{DBLP:journals/siamcomp/AbboudBW18}: At one time, it was popularly believed that $\pal_{\mathrm{ev}}^\ast$ cannot be recognised in linear time, and it was considered as a candidate for a ``hard'' context-free language, as stated in the seminal paper of Knuth, Morris, and Pratt (see \cite[Section 6]{DBLP:journals/siamcomp/KnuthMP77}). However, Knuth, Morris, and Pratt~\cite{DBLP:journals/siamcomp/KnuthMP77} refuted this hypothesis by showing an $\Oh(n)$-time recognition algorithm for $\pal_{\mathrm{ev}}^\ast$. Manacher~\cite{10.1145/321892.321896} found another way to recognize $\pal_{\mathrm{ev}}^\ast$ in linear time, and Galil~\cite{10.1145/990502.990505} derived a real-time recognition algorithm (see also Slisenko~\cite{slisenko1981simplified}). Later, Galil and Seiferas~\cite{10.1145/322047.322056} showed a linear-time recognition algorithm for~$\pal_{>1}^\ast$. 

Recognition of $\pal^k$ appeared to be a much more intricate problem. 
Galil and Seiferas~\cite{10.1145/322047.322056} succeeded to design linear-time recognition algorithms for the cases $k = 1, 2, 3, 4$, but the general question remained open for almost 40 years. Only in 2015, Kosolobov, Rubinchik, and Shur~\cite{10.1007/978-3-662-46078-8_24} showed an $\Oh(nk)$-time recognition algorithm for $\pal^k$ for all $k \in \mathbb{N}^+$, which was finally improved to optimal $\Oh(n)$ time by Rubinchik and Shur in 2020~\cite{DBLP:conf/mfcs/RubinchikS20}. 
A related question is that of computing the \emph{palindromic length} of a string $T$, which is defined to be the smallest integer $k$ such that $T \in \pal^k$. The first $\Oh(n \log n)$-time algorithms for computing the palindromic length were presented in~\cite{DBLP:journals/jda/FiciGKK14,DBLP:conf/cpm/ISIBT14,DBLP:journals/ejc/RubinchikS18}. 
Finally, Borozdin, Kosolobov, Rubinchik, and Shur~\cite{borozdin2017linear} showed an optimal $\Oh(n)$-time algorithm for this problem. 

\subparagraph{Our contributions.}
In this work, we turn our attention to the \emph{space complexity} of recognising $\pal^k$ and computing the palindromic length. 
We start by presenting a characterization of prefixes of a given string that belong to~$\palk$. For $k=1$, we refer to these prefixes as \emph{prefix-palindromes}, and otherwise as \emph{$k$-palindromic prefixes}.
A crucial component of the linear time algorithm by Borozdin et al.~\cite{borozdin2017linear} is the following well-known property: 
the prefix-palindromes of a length-$n$ string can be expressed as $\Oh(\log n)$ arithmetic progressions.
If $x + a\cdot q$ with $a \in \{1,\dots ,u\}$ is such a progression, then there are strings $X[1\dd x]$ and $Q[1\dd q]$ such that $XQ^a$ 
is a prefix-palindrome%
, for every $a \in \{1,\dots ,u\}$
.
The arithmetic progression can be encoded in $\Oh(1)$ space, as it suffices to store $x$, $q$, and $u$.

In order to encode $k$-palindromic prefixes, we generalize arithmetic progressions to so-called \emph{affine prefix sets} of order $k$. Intuitively, such a set consists of prefixes of the form $XQ_1^{a_1}Q_2^{a_2}\dots Q_k^{a_k}$ with $\forall i \in [1, k] : a_i \in \{1,\dots ,u_i\}$. That is, rather than a single repeating substring $Q$, we allow multiple different substrings $Q_i$ of different lengths. An affine prefix set of order $k$ can then be encoded in $\Oh(k)$ space. By carefully analyzing the rich structure of periodic substrings induced by $k$-palindromic prefixes, we show that the $k$-palindromic prefixes can be expressed by a small number of affine prefix sets.

\begin{restatable}{theorem}{structuralthm}\label{th:structure}
    Let $0< \smallconst < 1$ be constant. Let $T[1 \dd n]$ be a string and let $k \in \mathbb N^+$. The set of prefixes of $T$ that belong to $\palk$ is the union of $\Oh(6^{k^2 / (2-\smallconst)} \cdot \log^k n)$ affine prefix sets, each of order at most $k$.
\end{restatable}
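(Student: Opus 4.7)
The plan is to proceed by induction on $k$. For the base case $k = 1$, a prefix $T[1 \dd n']$ lies in $\pal$ precisely when it is a prefix-palindrome of $T$, and it is classical (and explicitly recalled in the introduction) that the set of prefix-palindrome lengths of $T$ is a union of $\Oh(\log n)$ arithmetic progressions $\{x + aq : a \in \{1,\dots,u\}\}$. Each such progression is, by definition, an affine prefix set of order $1$ with $X = T[1\dd x]$ and $Q_1 = T[x+1\dd x+q]$, matching the bound $\Oh(6^{1/(2-\smallconst)} \log n) = \Oh(\log n)$.

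For the inductive step, I would use the identity $T[1\dd n'] \in \palk$ iff there exists $j \in [0, n'-1]$ with $T[1 \dd j] \in \palkminus$ and $T[j+1 \dd n'] \in \pal$. Letting $L_{k-1}$ denote the set of $(k-1)$-palindromic prefix lengths of $T$, the set of $k$-palindromic prefix lengths is
\[
    L_k \;=\; \bigcup_{j \in L_{k-1}} \bigl\{ n' : T[j+1 \dd n'] \in \pal \bigr\}.
\]
By the induction hypothesis, $L_{k-1}$ is the union of $\Oh(6^{(k-1)^2/(2-\smallconst)} \log^{k-1} n)$ affine prefix sets of order at most $k-1$. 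It therefore suffices to show that for each individual affine prefix set $S$ of order at most $k-1$, the set
\[
    E(S) \;=\; \bigl\{ n' : \exists\, j \in S,\ T[j+1\dd n'] \in \pal \bigr\}
\]
decomposes into $\Oh(6^{(2k-1)/(2-\smallconst)} \log n)$ affine prefix sets of order at most $k$. Multiplying the two bounds gives $\Oh(6^{(k-1)^2/(2-\smallconst) + (2k-1)/(2-\smallconst)} \log^k n) = \Oh(6^{k^2/(2-\smallconst)} \log^k n)$, as required.

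The heart of the proof is thus the analysis of $E(S)$, and this is where I expect the main difficulty to lie. Writing the lengths of $S$ as $|X| + a_1 q_1 + \dots + a_{k-1} q_{k-1}$ with $a_i \in \{1,\dots,u_i\}$, we must control a two-dimensional family: both the starting position $j+1$ of the palindromic suffix and its length vary. For each fixed $j$, the lengths of palindromes $T[j+1 \dd n']$ already form $\Oh(\log n)$ arithmetic progressions, and a naive union over $j \in S$ would blow up multiplicatively with $|S|$. The structural gain must come from the interaction between the $Q_i$-periodicities of the prefix $T[1\dd j]$ and the periodicity forced by a long palindromic suffix: whenever the palindromic extension is long enough to overlap a repeated block $Q_i^{a_i}$ substantially, a Fine--Wilf-type argument should force the palindrome's period to be compatible with $q_i$, so that a single new parameter $a_k$ with base period $q_k$ suffices to describe an entire family of such extensions as $(j, n')$ vary. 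This yields an affine prefix set of order $k$; the final $q_k$ is precisely the new ``period slot'' added to $S$.

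The main obstacle I foresee is the bookkeeping when the palindromic suffix is too short to interact with some of the $Q_i$ (it fits in ``one level'' of the affine structure), or when the suffix straddles several of the $Q_i$: one would need to case-split on how deeply the palindrome reaches back into $T[1\dd j]$, and show that each case produces only $\Oh(6^{1/(2-\smallconst)})$ families, with a $\log n$ factor coming from the arithmetic-progression structure at the ``top'' of the palindrome. Summing over cases and using that there are at most $k$ possible depths should yield the $6^{(2k-1)/(2-\smallconst)} \log n$ factor. Combining these with the inductive count closes the induction and proves the theorem.
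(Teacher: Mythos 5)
Your overall framework is essentially the paper's: an induction (in the paper, phrased iteratively over ``levels'' $k' = 0,\dots,k-1$) in which one maintains the $(k{-}1)$-palindromic prefixes as a union of affine prefix sets and, for each, appends one more palindrome to obtain the $k$-palindromic prefixes, with the per-step multiplicative blowup $\Oh(6^{\Theta(k)}\log n)$ yielding the stated bound after absorbing polynomial factors into the $\smallconst$-slack. Your arithmetic check of the bound is also correct.

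The genuine gap is the decomposition lemma for $E(S)$, which you explicitly flag as ``where the main difficulty lies'' but leave as a sketch. This is not a small detail: it is the technical core of the paper (essentially all of \cref{sec:affineprefixsets,sec:appendapalindrome}). Your sketch gestures at the right intuition — periodicity of the $Q_i$-blocks interacting with periodicity forced by the appended palindrome, via Fine--Wilf — but it does not identify the specific machinery that actually makes the bound go through. In particular, three things are missing. First, a generic representation of an affine prefix set cannot be directly extended; the paper first passes to \emph{canonical} (irreducible and strongly affine) representations, via \cref{lem:aux:makeirreducible} and \cref{lem:aux:makestrong}, and this step alone costs a $6^{t}$ factor at level $t$ — it is exactly what lets the subsequent periodicity arguments assume the periodic runs extend a constant number of extra repetitions beyond the represented prefixes. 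Second, the appended palindrome is split into \emph{long} ($|P|\ge 2|Q_1|$, handled by \cref{cor:inner,cor:overhanging}) and \emph{short} cases, and the short palindromes are handled not by a case split on ``depth'' but by a genuine recursion on the affine prefix set of $Q_1^2$, which has order $t-1$ (\cref{lem:aux:structure:aps_of_q1:strong}, \cref{lem:new_palindrome_recursive}); this recursion is what produces the $(t{+}1)^2 \log n$ factor per level rather than an uncontrolled blowup. Third, handling the long palindromes that overhang the $Q_1$-run requires the reversal structure of \cref{lem:aux:structure:reverse_aps}, which expresses a periodic fragment simultaneously as a forward and a backward affine prefix set — there is no hint of this in your sketch, and without it, it is unclear how the $(j,n')$ two-parameter family you identify collapses to a single new component $Q_k$. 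None of these steps is routine, so the central lemma you assume remains unestablished.
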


Surprisingly, this representation is within $\textnormal{polylog}(n)$-factors of the optimal encoding, at least for small values of $k$. We show the lower bound by explicitly constructing a large family of strings that can be uniquely identified by their palindromic prefixes:  

\begin{restatable}{theorem}{staticlowerbound}
\label{th:lower_bound}
Let $T[1 \dd n]$ be a string and let $k \in \mathbb N^+$. Encoding the lengths of the prefixes of $T$ that belong to $\pal^i$, for each $i \in [1,k]$, requires $\Omega(k^{-k} \cdot (\log_3 n)^k)$ bits of space.
\end{restatable}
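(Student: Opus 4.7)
The approach is an information-theoretic counting argument: the plan is to exhibit a family $\mathcal{F}$ of strings of length at most $n$ such that the signatures
\[
\sigma(T) := \big(\{|W| : W \text{ is a prefix of } T \text{ with } W \in \pal^i\}\big)_{i=1}^{k}
\]
are pairwise distinct across $\mathcal{F}$, with $\log_2|\mathcal{F}| \ge \Omega(k^{-k}(\log_3 n)^k)$. The bit lower bound then follows from pigeonhole, since any encoding of $\sigma$ must assign distinct codewords to the elements of $\mathcal{F}$.

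To build $\mathcal{F}$, the plan is to set $L = \Theta(\log_3 n / k)$ and $\mathcal{G} = [L]^k$, so $|\mathcal{G}| = L^k = \Theta(k^{-k}(\log_3 n)^k)$, and to index the family by subsets $S \subseteq \mathcal{G}$. Over a constant-size alphabet, we fix an explicit palindromic family $\{P_a\}_{a \in [L]}$ with $|P_a| = \Theta(3^a)$ such that no $P_a$ has a palindromic proper prefix of length at least $2$; this is realized by the recursion $P_{a+1} = c_a\, P_a\, P_a\, c_a$ with fresh letters $c_a$. Let $\#$ be a fresh separator not appearing in any $P_a$. For each $\vec a \in \mathcal G$, emit a gadget of common length $\ell = 2N + k \cdot 3^L$, where $N$ is a sufficiently large multiple of $3^L$: the gadget is $\#^N\, P_{a_1}\cdots P_{a_k}\,\#^{\ell - N - \sum_j |P_{a_j}|}$ if $\vec a \in S$, and $\#^\ell$ otherwise. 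Define $T_S$ as the concatenation of these gadgets in the lexicographic order of $\mathcal G$. A direct calculation with $3^L = \Theta(n^{1/(k+1)})$ and $|\mathcal G| = O(\log^k n)$ yields $|T_S| \le n$ for large enough $n$.

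The heart of the argument is to verify that $S \mapsto \sigma(T_S)$ is injective. Long $\#^N$-separators will prevent any palindrome from straddling a gadget boundary, since such a palindrome would force a $\#$-run to match against non-$\#$ characters of some $P_a$; thus every $i$-palindromic factorization of a prefix of $T_S$ decomposes into palindromes each contained in a single $\#$-run or $P_a$-block. Inside an active gadget, the no-nontrivial-palindromic-prefix property of the $P_a$'s will force such factorizations to align with the $P_{a_j}$-boundaries, so the signature records, for each $j \le k$ and each active $\vec a$, a unique $(j{+}1)$-palindromic prefix length encoding the partial sum $N + \sum_{t \le j} |P_{a_t}|$; from these, $\vec a$ is recovered. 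Scanning the signature gadget by gadget then reconstructs $S$ in full.

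The main obstacle is executing this uniqueness claim rigorously: ruling out spurious palindromic decompositions that interleave $P_a$-symbols with $\#$-runs from neighbouring separators, and making the within-gadget alignment airtight from the recursive no-palindromic-prefix structure of the $P_a$. This is managed by taking $N$ large relative to $3^L$ and by exploiting the explicit construction of the $P_a$. Once injectivity is established, the bound $\log_2|\mathcal F| = |\mathcal G| = \Omega(k^{-k}(\log_3 n)^k)$ is immediate.
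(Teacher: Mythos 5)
Your counting target and overall strategy (exhibit a large family with pairwise distinct signatures, then invoke pigeonhole) match the paper, but your concrete construction has a fatal flaw that the paper's construction is specifically designed to avoid.

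The problem is that palindromic length accumulates across your gadgets, so your signature $\sigma$ only ``sees'' the first active gadget. Concretely, suppose gadget $g_1$ is the first active one, containing $\#^N P_{a_1}\cdots P_{a_k}\#^{\ldots}$, and gadget $g_2 > g_1$ is also active. Any prefix of $T_S$ that extends into gadget $g_2$ must first cover gadget $g_1$. Because $\#$ is disjoint from the alphabets of the $P_a$, no palindrome can straddle a $\#$-run boundary, so every palindromic decomposition of such a prefix must use at least one palindrome for the leading $\#$-run, one for each of $P_{a_1},\dots,P_{a_k}$, one for the $\#$-run between gadgets, and then more for the $P_{b_t}$ in $g_2$. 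That is already at least $k+2$ palindromes before you even get to $g_2$'s content, so no prefix ending inside or after gadget $g_2$ belongs to $\pal^i$ for any $i \le k$. Consequently $\sigma(T_S)$ is unchanged whether the second, third, etc.\ active gadgets are active or not: the map $S \mapsto \sigma(T_S)$ collapses all subsets with the same earliest active gadget, distinguishing only $\Oh(L^k)$ classes rather than $2^{L^k}$. The same obstruction persists no matter how large you make $N$, because the count of ``blocks'' (alternating $\#$-runs and $P_a$-pieces) before a prefix position is what drives up the palindromic length, not the block lengths.

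The paper sidesteps this exactly by making the construction recursive and making the whole string a palindrome at every level. In the family $F(t,s)$, strings are built as $X = UVU$ with $U$ and $V$ over disjoint alphabets, $U$ a palindrome, and both $U$ and $V$ themselves drawn (recursively) from families with the same property one level down. Because $U$ is a single palindrome, a prefix of $X$ that ends inside $V$ decomposes as $U$ (one palindrome) followed by a prefix of $V$, and the prefix of $V$ recursively has palindromic length at most $s-1$. So palindromic length stays bounded by the recursion depth $s$ rather than accumulating linearly in the number of gadgets; that is the essential structural property your flat concatenation lacks. The injectivity argument in the paper then peels off one level at a time: $\palpref^{s-1}(U)$ is recoverable from $\palprefs(X)$ by restricting to positions $\le |U|$ and levels $\le s-1$, and $\palpref^{s-1}(V)$ is recoverable by subtracting $|U|$ from positions and $1$ from levels (valid precisely because $U$ is a palindrome over a disjoint alphabet). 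Your proposal has no analogue of this, so the ``main obstacle'' you flag at the end is not a technicality to be managed by choosing $N$ large; it is the gap that breaks the argument.

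A smaller issue: you claim a constant-size alphabet, but your recursion $P_{a+1} = c_a P_a P_a c_a$ with fresh letters $c_a$ uses $\Theta(L)$ symbols (and the separator on top), so the alphabet grows with $n$. The paper's construction also uses a non-constant alphabet, so this is not disqualifying, but your claim as stated is incorrect.
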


As our final contribution, we derive a small-space read-only algorithm for constructing the affine prefix sets of \cref{th:structure}. Particularly, we show how to compute a small-space representation of the $i$-palindromic prefixes of $T$ for each $i\le k$. 
Recall that, in the read-only model of computation, one has constant-time random access to the input string. The space complexity of the algorithm is the space used beyond storing the input string.  

\begin{restatable}{theorem}{roalgopalk}\label{th:ro-algo-palk}
    Let $0 < \smallconst < 1$ be constant. Given a string $T[1\dd n]$ and $k \in \mathbb N^+$, there is a read-only algorithm that returns a compressed representation of all prefixes of $T$ that belong to $\pal^i$, for each $i \in [1, k]$, in $\Oh(n \cdot 6^{k^2/(2-\smallconst)} \cdot \log^k n)$ time and $\Oh(6^{k^2/(2-\smallconst)} \cdot \log^k n)$ space.
\end{restatable}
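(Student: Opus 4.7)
The plan is to compute, inductively in $i = 1, \ldots, k$, a compressed representation $\mathcal R_i$ of the $i$-palindromic prefixes of $T$ as a union of affine prefix sets of order at most $i$, with $|\mathcal R_i| = \Oh(6^{i^2/(2-\smallconst)} \cdot \log^i n)$ as guaranteed by \cref{th:structure}. Since each affine prefix set admits an $\Oh(i)$-word description, storing $\mathcal R_1, \ldots, \mathcal R_k$ simultaneously already fits within the target space. The base case $i=1$ is the classical case: the prefix-palindromes of $T$ form $\Oh(\log n)$ arithmetic progressions, i.e., affine prefix sets of order $\leq 1$, and this representation can be produced in one left-to-right pass in $\Oh(n)$ time and $\Oh(\log n)$ space with the online machinery of~\cite{borozdin2017linear}; in the read-only setting, the required longest-common-extension queries in $T$ and $\Rev{T}$ are answered by maintaining $\Oh(\log n)$ Karp--Rabin fingerprints recomputed on the fly.

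For the inductive step, observe that $T[1 \dd j] \in \palk$ iff there exists $\ell$ with $T[1 \dd \ell] \in \palkminus$ and $T[\ell+1 \dd j] \in \pal$, so $\mathcal R_i$ must encode $\bigcup_{A \in \mathcal R_{i-1}} \{ j : \exists \ell \in A,\ T[\ell+1 \dd j] \in \pal \}$. I would process each $A \in \mathcal R_{i-1}$ separately: the constructive version of the argument underlying \cref{th:structure} shows that, as $\ell$ sweeps through $A$, the palindromic runs starting at position $\ell+1$ are controlled by $\Oh(\log n)$ periodic witnesses whose parameters depend affinely on the free indices of $A$, and each witness contributes exactly one new affine prefix set of order $\leq i$ to $\mathcal R_i$. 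These witnesses are extracted in $\polylog(n)$ time per set via a constant number of palindromic-extension queries at representative endpoints of $A$. Globally, I scan $j = 1, \ldots, n$ once, advancing rolling fingerprints and, whenever a palindromic run begins or expires, updating all affine prefix sets in the current $\mathcal R_i$'s; summing over $n$ positions and the total description size $\Oh(6^{k^2/(2-\smallconst)} \log^k n)$ yields the claimed running time, while the space is dominated by $\mathcal R_1, \ldots, \mathcal R_k$ themselves plus $\Oh(\log n)$ fingerprints.

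The main obstacle is exactly this inductive combining step: turning the decomposition of \cref{th:structure} into a procedure that, given an affine prefix set $A$ of order $i-1$, produces its $\Oh(\log n)$ contributions to $\mathcal R_i$ \emph{without iterating over the (possibly $\Theta(n)$) elements of $A$}. Concretely, I will need to show that the periodic witnesses governing the palindromic extensions over all $\ell \in A$ can be read off from the parameters of $A$ together with $\Oh(1)$ representative LCE-style probes, rather than by enumerating length values. The existence of such a reduction is what the combinatorial proof of \cref{th:structure} gives; the algorithmic content is to make every appeal to periodicity and palindromic symmetry effective under constant-time random access to $T$ only, using on-the-fly Karp--Rabin fingerprints for equality checks. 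Once this translation is in place, the stated time and space bounds follow directly by combining the per-position work estimate with the size bound on $\mathcal R_i$ from \cref{th:structure}.
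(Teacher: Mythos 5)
Your proposal identifies the right overall shape (induct on $i$, process each affine prefix set of order $i-1$ to produce its contributions at level $i$, keep total size bounded by \cref{th:structure}), but it explicitly defers the step that carries essentially all the technical weight, and the sketch you give for that step does not stand on its own.

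The crux you flag as ``the main obstacle'' --- turning a single affine prefix set $A$ of order $i-1$ into its $\Oh(\log n)$ contributions to $\mathcal R_i$ \emph{without enumerating the elements of $A$} --- is precisely what the paper spends \cref{sec:appendapalindrome} proving and \cref{algo:new-pal-rec} implementing. It is not a matter of ``reading off periodic witnesses from the parameters of $A$ plus $\Oh(1)$ representative LCE probes.'' The paper first converts $A$ to a canonical (strongly affine, irreducible) representation, then case-splits on the center of the appended palindrome relative to the $\absolute{Q_1}$-periodic run (\cref{cor:inner} for centers inside the run, \cref{cor:overhanging} for centers beyond it, which itself requires enumerating the core-palindrome sets of a shifted suffix of $T$ and splits into four further cases based on whether $\absolute{VU}$ is $<$, $=$, or $>$ $\absolute{Q_1}$), and finally handles short palindromes by \emph{recursing} on the lower-order set $\angles{\emptystring, (Q_i, 1, u_i)_{i=2}^t}$ inside $Q_1^2$. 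None of this is captured by ``periodic witnesses whose parameters depend affinely on the free indices of $A$,'' and writing ``the constructive version of the argument underlying \cref{th:structure} shows that\ldots'' is circular: \cref{th:structure} is the combinatorial statement whose algorithmic realization is exactly what \cref{th:ro-algo-palk} must establish.

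Two further concrete problems. First, you claim the witnesses are ``extracted in $\polylog(n)$ time per set.'' The paper's per-set cost is $\Oh(n)$: computing the maximal run exponent $\alpha$ with $XQ_1^\alpha$ a prefix of $T$, enumerating prefix-palindromes of $T[\absolute{XQ_1^{u_1+2}}+1\dd\,]$, and the batch truncation in \cref{lem:corepali_extension} all require linear scans. The claimed total $\Oh(n \cdot 6^{k^2/(2-\smallconst)} \log^k n)$ is obtained by multiplying $\Oh(n)$ per set by the number of sets, not by a one-pass $\polylog$ update scheme. Your ``global scan over $j = 1,\ldots,n$ updating all sets whenever a palindromic run begins or expires'' would require an online characterization of how affine prefix sets evolve, which you do not supply and which the paper does not use. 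Second, the paper's model explicitly requires deterministic read-only algorithms, and \cref{lem:enum_prefixpali} achieves this with deterministic constant-space pattern matching (Breslauer, Galil, and Moczurad). Using on-the-fly Karp--Rabin fingerprints for LCE/equality would make the algorithm randomized and prove a weaker statement than the one claimed.
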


As a corollary, we derive a parametrized read-only algorithm for computing the palindromic length.

\begin{restatable}{theorem}{palindromiclength}\label{th:algo-pal-length}
Given a string $T[1\dd n]$, there is a read-only algorithm that computes the palindromic length $k$ of $T$ in $\Oh(n \cdot 6^{k^2} \cdot \log^{\ceil{k/2}} n)$ time and $\Oh(6^{k^2} \cdot \log^{\ceil{k/2}} n)$ space.
\end{restatable}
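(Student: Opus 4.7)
The plan is to iteratively test whether $T \in \palk$ for $k = 1, 2, 3, \ldots$ and return the first $k$ that succeeds; since the palindromic length is exactly the smallest such $k$, this gives the correct output. The case $k = 1$ is a direct palindrome check in $\Oh(1)$ additional space. For $k \ge 2$, set $k_1 = \ceil{k/2}$ and $k_2 = \floor{k/2}$ and rely on the equivalence
\[
T \in \palk \iff \exists\, i \in [1, n-1] : T[1 \dd i] \in \pal^{k_1} \text{ and } T[i+1 \dd n] \in \pal^{k_2}.
\]
The forward direction splits any factorisation $T = P_1 \cdots P_k$ after $P_{k_1}$, and the converse concatenates the two half-decompositions. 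Reducing from depth $k$ to depth $\ceil{k/2}$ in the more expensive of the two invocations is what saves a $\log^{\floor{k/2}} n$ factor over a direct application of \cref{th:ro-algo-palk}.

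To test the right-hand side in small space, I would invoke \cref{th:ro-algo-palk} twice. The first pass, on $T$ with parameter $k_1$, returns a union $\Ss$ of $\Oh(6^{k_1^2 / (2 - \smallconst)} \cdot \log^{k_1} n)$ affine prefix sets of order at most $k_1$ that covers all prefixes of $T$ lying in $\pal^{k_1}$. The second pass, on $\rev{T}$ (supported by the read-only model via reverse indexing) with parameter $k_2$, returns an analogous union $\Rr$ of affine prefix sets for $\pal^{k_2}$-prefixes of $\rev{T}$; these correspond, via the length bijection $m \mapsto n - m$, to the suffixes of $T$ that lie in $\pal^{k_2}$, using that $P \in \pal \iff \rev{P} \in \pal$. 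Both collections fit simultaneously within the claimed $\Oh(6^{k^2} \cdot \log^{\ceil{k/2}} n)$ space budget.

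The remaining task is to decide whether some pair $(A, B) \in \Ss \times \Rr$ admits a split point, i.e., whether there exist $a_j \in [1, u_{A,j}]$ and $b_j \in [1, u_{B,j}]$ such that
\[
\absolute{X_A} + \sum_{j=1}^{k_1} a_j \absolute{Q_{A,j}} + \absolute{X_B} + \sum_{j=1}^{k_2} b_j \absolute{Q_{B,j}} = n.
\]
This is a bounded-coefficient integer feasibility problem in $k$ variables, which I would solve in $\poly(k, \log n)$ time per pair using a fixed-dimension ILP routine or, more efficiently, by direct gcd- and interval-based arguments exploiting the arithmetic structure of affine prefix sets. Iterated over all $\Oh(6^{k^2} \cdot \log^k n)$ pairs, the total cost is absorbed by the $n$-linear component of the two invocations of \cref{th:ro-algo-palk}.

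Finally, the overall time and space follow by summing over $k' = 1, \ldots, k$; since $6^{{k'}^2} \log^{\ceil{k'/2}} n$ grows geometrically in $k'$, the total is dominated by the final iteration and matches the claimed $\Oh(n \cdot 6^{k^2} \cdot \log^{\ceil{k/2}} n)$ time and $\Oh(6^{k^2} \cdot \log^{\ceil{k/2}} n)$ space bounds. The main obstacle I anticipate is ensuring that the intersection test does not inflate the running time beyond $n \cdot \log^{\ceil{k/2}} n$ multiplicatively; a black-box ILP call is likely too coarse, and one probably has to open up the structure of affine prefix sets and resolve intersections via direct gcd computations, which is the step I would spend the most care on.
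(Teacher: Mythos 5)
Your high-level design coincides with the paper's: iterate $k = 1, 2, \dots$, run \cref{th:ro-algo-palk} on $T$ with parameter $\ceil{k/2}$ and on $\rev{T}$ with parameter $\floor{k/2}$, and check each pair of resulting affine prefix sets for a consistent split at position $n$. Two ingredients are missing, however, and both are load-bearing. First, the pair-verification step: you correctly sense that a black-box call is too coarse, but you neither produce the structured procedure nor establish a usable time bound. The paper's \cref{lem:verifyprefixsuffix} exploits the fact that in a canonical representation the component lengths are nearly superincreasing (\cref{lem:aux:structure:stronglength}), so at each recursion level there are at most three candidate exponents for $Q_1$; this gives a $\Oh(10^{t+t'})$-time recursion rather than a generic integer-feasibility routine. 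Your $\poly(k,\log n)$-per-pair claim is too optimistic: fixed-dimension ILP solvers cost at least $2^{\Omega(k)}\cdot\poly(\log n)$, and the ``direct gcd'' route is precisely the nontrivial step that the lemma supplies.

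Second, and more seriously, your time bound does not close. The verification stage touches $\Omega(6^{k^2/(2-\smallconst)}\log^k n)$ pairs, so its cost carries a factor $\log^k n$, whereas the claimed time carries only $\log^{\ceil{k/2}} n$. The two are reconcilable only when $\log^{\floor{k/2}} n$ is at most roughly $n$, i.e.\ when $k$ is not too large. Your assertion that ``the total cost is absorbed by the $n$-linear component'' silently assumes this; for $k$ beyond $\Theta(\sqrt{\log n})$ it simply fails, and nothing in your argument controls it. The paper handles this by capping $k$ at $\sqrt{\log_{\strongconstplus} n}$ and, beyond that threshold, falling back to the linear-time algorithm of Borozdin et al.~\cite{borozdin2017linear}, whose $\Oh(n)$ time and space are then dominated by $\Oh(6^{k^2})$. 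Without this cutoff the stated bounds are not established.
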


In particular, for $k = \Oh(\log \log n)$, the algorithm uses $n\log^{\Oh(k)}n$ time and $\log^{\Oh(k)}n$ space, and for $k = o(\sqrt{\log n})$, it uses $n^{1+o(1)}$ time and sublinear $n^{o(1)}$ space. In the regime of small palindromic length, this is an improvement over all previously-known algorithms~\cite{borozdin2017linear,DBLP:conf/mfcs/RubinchikS20}, which require $\Omega(n)$ space. It remains an intriguing open question whether it is possible to achieve \emph{both} optimal linear time and sublinear space. 

Note however that the lower bound of \cref{th:lower_bound} does not imply a lower bound for our read-only algorithm as it has access to the input, and even more so for an algorithm that computes only the palindromic length of the input. On the other hand, proving an $\Omega(\log^{f(k)} n)$ space lower bound for a read-only algorithm might be well beyond the current techniques: the only lower bound technique for read-only  string processing algorithms the authors are aware of is based on deterministic branching programs~\cite{10.1145/800141.804677}, and it shows that any read-only algorithm for computing the longest common substring of two strings that works in sublinear space must use slightly superlinear time; formally, an algorithm that uses $\Oh(\tau)$ space requires $\Omega(n \sqrt{\log (n /\tau \log n)/\log \log (n /\tau \log n)})$ time~\cite{DBLP:conf/esa/KociumakaSV14}. 

\medskip

\subparagraph{Related work.} Berebrink, Erg\"{u}n, Mallmann-Trenn, and Azer~\cite{DBLP:conf/stacs/BerenbrinkEMA14} initiated a study of the space complexity of computing the longest palindromic substring of a string in the streaming model, where the input arrives symbol by symbol and the space complexity is defined as the total space used, including any information an algorithm stores about the input. They developed the first 
trade-offs between the bound on the error and the space complexity for approximating
the length of the longest palindrome with either additive or multiplicative error, which were sequentially tightened by Gawrychowski, Merkurev, Shur, and Uzna\'{n}ski~\cite{DBLP:journals/algorithmica/GawrychowskiMSU19}. 

Amir and Porat~\cite{DBLP:conf/cpm/AmirP14} gave the first streaming algorithm for computing all approximate prefix-palindromes of a string. Namely, given an integer parameter $k$, their algorithm computes all prefixes within Hamming distance $k$ from $\pal$. Bathie, Kociumaka, and Starikovskaya~\cite{DBLP:conf/isaac/BathieKS23} improved their result for the Hamming distance and expanded it to the edit distance and the read-only setting.

The problem of recognising formal languages in small space has been also studied for regular languages, see~\cite{GHKLM18,GHL16,GanardiHL18,ganardi_et_al:LIPIcs:2019:11502,DBLP:conf/soda/DudekGGS22}, the Dyck language (the language of well-parenthesized expressions), see~\cite{JN14,KrebsLS11,MagniezMN14}, for visibly pushdown languages (a language class strictly in-between the regular and context-free languages with good closure and decidability properties~\cite{10.1145/1007352.1007390}), see~\cite{francois_et_al:LIPIcs:2016:6355,G19,DBLP:conf/icalp/BathieS21}, general context-free languages~\cite{DBLP:conf/mfcs/GanardiJL18}, and for $\mathsf{DLIN}$ and $\mathsf{LL}(k)$, see~\cite{BabuLRV13}.

\subparagraph*{Roadmap.} The remainder of the paper is structured as follows. In \cref{sec:prelim}, we introduce notation, basic definitions, and auxiliary lemmas. We then show the lower bound for encoding palindromic prefixes in \cref{sec:lower-bound}. The space efficient encoding is presented in two steps. First, in \cref{sec:affineprefixsets}, we describe affine prefix sets, their fundamental properties, and how they are related to the structure of periodic substrings. Then, in \cref{sec:appendapalindrome}, we show how to encode the $k$-palindromic prefixes using affine prefix sets of order $k$, inductively assuming that the $(k-1)$-palindromic prefixes are already given as a union of affine prefix sets of order $k - 1$. 
Finally, the algorithms from \cref{th:ro-algo-palk,th:algo-pal-length} are described in \cref{sec:algo}.

\section{Preliminaries}
\label{sec:prelim}

\subparagraph*{Series, strings, and substrings.} For $i, j \in \mathbb Z$, we write $[i, j] = [i, j + 1) = {(i - 1, j]} = {(i - 1, j + 1)}$ to denote
$\{ h \in \mathbb Z \mid i \le h \le j \}$.
A series $a_1, b_1, c_1, a_2, b_2, c_2, \dots, a_t, b_t, c_t$ is denoted by $(a_i, b_i, c_i)_{i=1}^{t}$. The empty series is denoted by $\emptyseries$. We use the dot-product to denote the concatenation of two series, e.g., $(a_i, b_i, c_i)_{i=1}^{t} = (a_i, b_i, c_i)_{i=1}^{t-3} \cdot (a_i, b_i, c_i)_{i=t-2}^{t}$. We may omit the subscript and superscript for series of length one, e.g., $(a_1, b_1, c_1) = (a_i, b_i, c_i)_{i=1}^{1}$.

A \emph{string} $T$ of length $\absolute{T} = n$ is a sequence of $n$ \emph{symbols} from a set $\Sigma$, which we call the \emph{alphabet}. The input string is also called \emph{the text}.
We denote the set of all length-$n$ strings by $\Sigma^n$, and we set $\Sigma^{\le n} = \bigcup_{m=0}^n \Sigma^m$ as well as $\Sigma^* = \bigcup_{n= 0}^\infty \Sigma^n$. 
The empty string is denoted by $\emptystring$. For $i, j \in [1, n]$, the $i$-th symbol in $T$ is denoted by $T[i]$. The \emph{substring} $T[i\dd j] = T[i\dd j + 1) = T(i - 1\dd j] = T(i - 1\dd j + 1)$ is the empty string $\emptystring$ if $j < i$, and the string $T[i]T[i + 1]\dots T[j]$ otherwise.
We may call a substring $T[i\dd j]$ a \emph{fragment} of $T$ to emphasize that we mean the specific occurrence of $T[i\dd j]$ that starts at a position $i$. 
For example, in the string $T = \texttt{abcabc}$, the substrings $T[1\dd 3]$ and $T[4\dd 6]$ are identical, but $T[1\dd 3]$ and $T[4\dd 6]$ are distinct fragments.
A string $S$ is a \emph{prefix} of $T$ if there is $i \in [1, n]$ such that $S = T[1\dd i]$, in which case we may simply write $T[\dd i]$.
Similarly, $S$ is a \emph{suffix} of $T$ if there is $i \in [1, n]$ such that $S = T[i\dd n]$, in which case we may simply write $T[i\dd ]$.
We extend this notion to the empty suffix $T[n+1\dd n] = T[n + 1\dd ]$ and the empty prefix $T[1\dd 0] = T[\dd 0]$.
A substring (hence also a suffix or prefix) of $T$ is \emph{proper} if it is shorter than $T$.
When introducing a string $S$, we may simply say that $S[1\dd m]$ is a string rather than saying that $S$ is a string of length $m$.
The concatenation of two strings $S[1\dd m]$ and $T[1\dd n]$ is the string $S[1]S[2]\dots S[m]T[1]T[2]\dots T[n]$, denoted by either $S \cdot T$ or simply $ST$. For non-negative integer $a$, we write $T^a$ to denote the length-$(an)$ string obtained by concatenating $a$ copies of $T$.
We extend this idea to non-negative rational exponents $\alpha \in \mathbb Q$, for which we write $T^\alpha$ to denote $T^{\floor{\alpha}} \cdot T[1\dd (\alpha n \bmod n)]$. We only use this notation if $\alpha n \in \mathbb N$. For example, $(\texttt{abcdef})^{4/3} = \texttt{abcdefab}$.

\subparagraph*{Palindromes and periodicities.} For a string $T[1\dd n]$, we write $\rev{T}$ to denote its reverse, i.e., $\rev{T} = T[n]T[n-1]\cdots T[1]$. We then say that $T$ is a palindrome if and only if $T$ is non-empty and $T = \rev{T}$.
The set of all palindromes is denoted by $\pal$. For a positive integer $k$, the set $\palk$ contains all the strings that can be written as the concatenation of exactly $k$ palindromes. We refer to such strings as \emph{$k$-palindromic}. If $k=1$, and a string is a one-palindromic prefix of another string, we also refer to it as \emph{prefix-palindrome}. 

We define the \emph{forward cyclic rotation} $\rot(T) = T[2\dd n]T[1]$.
More generally, a cyclic rotation $\rot^s(T)$ with \emph{shift} $s \in \mathbb{Z}$
is obtained by iterating $\rot$ (if $s$ is positive) or the inverse operation $\rot^{-1}$ (if $s$ is negative) exactly $\absolute{s}$ times.
A non-empty string $T[1\dd n]$ is \emph{primitive} if it is distinct
from its non-trivial rotations, i.e., if $T = \rot^s(T)$ holds only when $n$ divides~$s$.
Equivalently, a string $T$ is primitive if it cannot be written as $T = S^a$, for any string $S$ and integer $a \ge 2$.

A string $T[1\dd n]$ has \emph{period} $p \in \mathbb N^+$ if $\forall i \in [1, n - p] : T[i] = T[i + p]$, or equivalently if $T[1\dd n - p] = T(p\dd n]$. The string $T[1\dd n - p] = T(p\dd n]$ is a \emph{border} of $T$. If $T$ has period $p \le n/2$, then we say that $T$ is $p$-periodic.
If $T$ has period $p \le n$, then it can be written as $T = P^{\floor{n / p}}P[1\dd n \bmod p]$, where $P = T[1\dd p]$. We may alternatively use a rational exponent and write $T = P^{{n / p}}$.
A fundamental tool for analyzing periodicities is the celebrated periodicity lemma by Fine and Wilf, which is stated below. 
Additionally, we provide some simple auxiliary lemmas regarding periodic string and palindromes.

\begin{lemma}[Periodicity Lemma \cite{finewilf}]\label{lem:finewilf}
	If $p$ and $q$ are distinct periods of a string of length at least $p + q - \gcd(p,q)$, then $\gcd(p, q)$ is a period of the string.
\end{lemma}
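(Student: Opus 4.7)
The plan is to prove the periodicity lemma by strong induction on $p + q$. Without loss of generality, assume $p > q$ and write $d = \gcd(p, q)$. The base case is when $q \mid p$, equivalently $d = q$; then $d$ is already a period of $T$ by hypothesis, so the conclusion is immediate.

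For the inductive step, I would first establish the auxiliary claim that $T$ also has period $p - q$, and then apply the inductive hypothesis to the pair $(q, p - q)$. The prerequisites transfer correctly: $\gcd(q, p - q) = \gcd(p, q) = d$, the sum decreases from $p + q$ to $p$, and the required length bound $n \geq (p - q) + q - d = p - d$ follows immediately from $n \geq p + q - d$.

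To show that $T$ has period $p - q$, I would fix any $i \in [1, n - (p - q)]$ and verify $T[i] = T[i + p - q]$ by case analysis. If $i + p \leq n$, I combine periods $p$ and $q$ as $T[i] = T[i + p] = T[i + p - q]$. Dually, if $i > q$, I combine them in reverse order: $T[i] = T[i - q] = T[i - q + p] = T[i + p - q]$. These two cases cover $i \in [1, n - p] \cup [q + 1, n - (p - q)]$.

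The main obstacle is the middle range $i \in (n - p, q]$, for which neither direct walk fits inside $[1, n]$. This range is non-empty precisely when $n < p + q$, and its length is exactly $d$ in the tight regime $n = p + q - d$. Handling these boundary positions is the crux of the argument, and it is precisely where the hypothesis $n \geq p + q - d$ is used in full: one can choose a B\'ezout representation $ap - bq = d$ with $1 \leq a \leq q / d$ and $0 \leq b < p / d$, and then construct an interleaved walk of $+p$ and $-q$ steps that connects $i$ to $i + d$ while staying inside $[1, n]$. The length bound is exactly the threshold at which such a walk always exists, which is why the Fine--Wilf bound is tight. Once this case is dispatched, $T$ has period $p - q$, the inductive hypothesis yields that $T$ has period $d$, and the lemma follows.
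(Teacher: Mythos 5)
The paper does not prove this lemma; it cites it as a classical result, so there is no in-paper proof to compare against. Evaluating the proposal on its own merits, there is a genuine gap.

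Your overall plan (induction on $p+q$, reduce to the pair $(q, p-q)$) is the standard route, and the length-bookkeeping for the inductive hypothesis is correct. The problem is in the intermediate claim: you try to show that period $p-q$ holds for \emph{all} of $T$, which forces you into the middle range $(n-p, q]$, and the way you propose to dispose of it does not work. First, the walk you describe targets $T[i]=T[i+d]$, not $T[i]=T[i+(p-q)]$, so even if it succeeded it would not yield ``$T$ has period $p-q$'' as you then assert; the step from ``walk to $i+d$'' to ``period $p-q$ established'' is a non sequitur. Second, and more concretely, the specific Bézout walk you propose ($a$ forward $+p$ steps and $b$ backward $-q$ steps with $ap-bq=d$, $1\le a\le q/d$, $0\le b<p/d$) cannot always be ordered to stay inside $[1,n]$. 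Take $p=5$, $q=3$, $d=1$, $n=p+q-d=7$, $i=3$: the representation is $2\cdot 5 - 3\cdot 3 = 1$, but from position $3$ both $3+5=8$ and $3-3=0$ leave $[1,7]$, so no interleaving of $+5$/$-3$ steps even gets started. (A walk from $3$ to $5$ does exist, namely $3\to 6\to 1\to 4\to 7\to 2\to 5$, but it uses $+q$ and $-p$ steps and is much longer; proving such a walk always exists at the Fine--Wilf threshold is itself a nontrivial lemma, not something one can assert.)

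The standard way to avoid the middle range entirely is to prove period $p-q$ only for the \emph{prefix} $T[1\mathinner{..} n-q]$ (or the suffix): for $i\in[1,n-p]$ one has $T[i]=T[i+p]=T[i+p-q]$ with all indices in $[1,n]$, and this already covers every position of that prefix since $(n-q)-(p-q)=n-p$. That prefix has length $n-q\ge p-d = (p-q)+q-\gcd(p-q,q)$, so the inductive hypothesis applies and gives period $d$ on $T[1\mathinner{..}n-q]$. One then needs a final (also omitted in your sketch) propagation step: since $T$ has period $q$ and $d\mid q$, every index of $T$ can be brought into $[1,q]\subseteq[1,n-q]$ by repeated $-q$ steps, and two indices differing by $d$ map to indices in $[1,q]$ differing by a multiple of $d$, whence period $d$ of the prefix finishes the argument. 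Without restricting to the prefix, or without a correct and proved boundary argument, the proof as written does not go through.
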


\begin{lemma}%
        \label{lem:primitive_squares}%
	For a primitive string $Q$, the minimal period of $Q^2$ is $\absolute{Q}$.
\end{lemma}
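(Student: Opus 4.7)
The plan is a short proof by contradiction using the Fine–Wilf periodicity lemma (\cref{lem:finewilf}). First I would observe the easy direction: since $Q^2[i] = Q^2[i+|Q|]$ for every $i \in [1, |Q|]$, the length $|Q|$ is a period of $Q^2$, so the minimal period is at most $|Q|$. It remains to rule out any strictly smaller period.

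Suppose toward a contradiction that $Q^2$ has some period $p$ with $1 \le p < |Q|$. Then $Q^2$ has two distinct periods, $p$ and $|Q|$, and its length $2|Q|$ satisfies $2|Q| \ge p + |Q| - \gcd(p,|Q|)$, because $p < |Q|$ implies $p - \gcd(p,|Q|) \le |Q|$. Thus \cref{lem:finewilf} applies and $d := \gcd(p,|Q|)$ is a period of $Q^2$, hence in particular a period of its prefix $Q$.

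Since $d$ divides $|Q|$ and $d \le p < |Q|$, writing $P = Q[1\dd d]$ gives $Q = P^{|Q|/d}$ with integer exponent $|Q|/d \ge 2$. This contradicts the assumption that $Q$ is primitive, completing the argument.

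The proof is essentially routine once Fine–Wilf is invoked; the only mild subtlety is verifying the length hypothesis of the periodicity lemma, which I would state explicitly so the reader does not have to recompute it. No step presents a real obstacle.
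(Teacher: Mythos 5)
Your proof is correct and follows essentially the same route as the paper: both assume a period $p < \absolute{Q}$, invoke Fine--Wilf to conclude $\gcd(p,\absolute{Q})$ is a period of $Q$, and derive a contradiction with primitivity. Your version is slightly more careful in explicitly checking the length hypothesis of the periodicity lemma, which the paper leaves implicit.
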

\begin{proof}
	Let $q = \absolute{Q}$. If $Q^2$ has period $q' < q$, then \cref{lem:finewilf} implies that $p = \gcd(q', q) < q$ is a period of $Q^2$, hence also of $Q$. Then, however, it holds $Q = Q[1\dd p]^{q / p}$ with $1 < (q / p) \in \mathbb N$, which contradicts the fact that $Q$ is primitive. 
\end{proof}

\begin{lemma}\label{lem:aux:structure:palperiodoverlap}
    If a palindrome $P$ has a $q$-periodic prefix of length $m \ge 3q/2$, then either $P$ is $q$-periodic, or $\absolute{P} > 2m - q$.
\end{lemma}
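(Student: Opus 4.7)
The plan is to prove the contrapositive: assuming $\absolute{P} \le 2m - q$, I will deduce that $P$ has period $q$. First I would exploit the palindrome symmetry to observe that the length-$m$ suffix $P[\absolute{P} - m + 1 \dd \absolute{P}]$ equals $\rev{P[1\dd m]}$, and since reversing a string preserves its periods, this suffix also has period $q$. Hence $P$ contains two length-$m$ factors with period $q$: its prefix of length $m$ and its suffix of length $m$.

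Next I would verify that $P[i] = P[i+q]$ holds for every $i \in [1, \absolute{P} - q]$ by a straightforward case split on $i$. If $i \le m - q$, then both $i$ and $i + q$ lie inside the prefix $P[1\dd m]$, so the prefix's period gives the equality; if $i \ge \absolute{P} - m + 1$, then both positions lie inside the length-$m$ suffix, and the suffix's period gives the equality. The assumption $\absolute{P} \le 2m - q$ is equivalent to $\absolute{P} - m + 1 \le m - q + 1$, which is exactly what ensures that the two ranges $[1, m - q]$ and $[\absolute{P} - m + 1, \absolute{P} - q]$ together cover $[1, \absolute{P} - q]$ without a gap, so every index is handled.

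The argument is essentially pure index bookkeeping, and I do not anticipate any substantive obstacle. The hypothesis $m \ge 3q/2$ plays no role in the case analysis itself; combined with $\absolute{P} \ge m$, it merely ensures $q \le \tfrac{2}{3}\absolute{P}$, which is slack enough to call $P$ itself ``$q$-periodic'' as a global property in the paper's sense. No appeal to \cref{lem:finewilf} is required: the two $q$-periodic regions automatically agree on their overlap because both are substrings of the same $P$, so the period propagates from the prefix to the suffix (and vice versa) through the shared middle portion, rather than by an independent periodicity-lemma step.
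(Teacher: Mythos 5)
Your proof is correct and takes the same approach as the paper's one-line argument: reflect to obtain a $q$-periodic suffix of length $m$, then observe that the assumption $\absolute{P} \le 2m - q$ makes the periodic prefix and suffix overlap by at least $q$ positions, which is exactly what lets the period propagate across all of $P$ --- your index bookkeeping simply fills in what the paper dismisses as ``it is easy to see.'' One small parenthetical is slightly off: $q \le \tfrac{2}{3}\absolute{P}$ does not actually meet the paper's formal threshold $q \le \absolute{P}/2$ for calling $P$ ``$q$-periodic,'' but this is harmless because every place the lemma is invoked only uses the weaker conclusion that $P$ has period $q$, so your observation that the hypothesis $m \ge 3q/2$ is not load-bearing in the argument is still fair.
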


\begin{proof}
    Since $P$ is a palindrome, it does not only have a $q$-periodic prefix of length $m$, but also a $q$-periodic suffix of length $m$. If $\absolute{P} \le 2m - q$, then the periodic prefix overlaps the periodic prefix by at least $q$ symbols, and hence it is easy to see that the entire $P$ is $q$-periodic.
\end{proof}

\def\qsuff{Q_{\textsf{suff}}}
\def\qpref{Q_{\textsf{pref}}}

\def\qsufflem{{S}}
\def\qpreflem{{P}}
\def\qplainlem{{Q}}

\begin{lemma}\label{lem:aux:structure:palindrome_in_run}
Let $\qplainlem$ be a string with suffix $\qsufflem$ and prefix $\qpreflem$. Let $x$ be a positive integer, then ${\qsufflem}\qplainlem^x{\qpreflem}$ is a palindrome if and only if $\rot^{\absolute{\qpreflem} - \absolute{\qsufflem}}(\qplainlem) = \rev{\qplainlem}$.
\end{lemma}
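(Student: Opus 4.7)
The plan is to embed the factor $\qsufflem\qplainlem^x\qpreflem$ into a doubly-infinite, $q$-periodic word $U$ defined by $U[i+q]=U[i]$ for all $i\in\ZZ$ and $U[1\dd q]=\qplainlem$, where $q=\absolute{\qplainlem}$. Writing $s=\absolute{\qsufflem}$ and $p=\absolute{\qpreflem}$, the hypotheses that $\qsufflem$ is a suffix and $\qpreflem$ a prefix of $\qplainlem$ give the identification $\qsufflem\qplainlem^x\qpreflem = U[a\dd b]$ with $a=q-s+1$ and $b=(x+1)q+p$. The palindrome condition then reads $U[i]=U[a+b-i]$ for $i\in[a,b]$, and since $a+b=(x+2)q+(p-s)+1$, the $q$-periodicity of $U$ lets us rewrite this as $U[i]=U[(p-s)+1-i]$ on $i\in[a,b]$.

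The second step is to promote this to a statement valid for all $i\in\ZZ$. Because $x\ge 1$, the range $[a,b]$ contains $s+xq+p\ge q$ consecutive integers, so it covers a full period of $U$. Both $i\mapsto U[i]$ and $i\mapsto U[(p-s)+1-i]$ are $q$-periodic in $i$, so their agreement on any full period is equivalent to agreement everywhere. Hence $\qsufflem\qplainlem^x\qpreflem\in\pal$ if and only if $U[i]=U[(p-s)+1-i]$ holds for every $i\in\ZZ$. Finally, by the definitions, $\rot^{p-s}(\qplainlem)[i]=U[i+(p-s)]$ and $\rev{\qplainlem}[i]=U[q+1-i]$ for $i\in[1,q]$; after the substitution $j=i+(p-s)$ and one further use of $q$-periodicity, the equation $\rot^{p-s}(\qplainlem)=\rev{\qplainlem}$ becomes $U[j]=U[(p-s)+1-j]$ across a full period of $j$, which by the same periodicity argument is equivalent to the global symmetry just established.

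The main obstacle here is purely one of index bookkeeping. The conceptual content is the recognition that the palindrome condition on the factor $\qsufflem\qplainlem^x\qpreflem$ and the rotation condition on $\qplainlem$ are two equivalent descriptions of a single reflective symmetry of the periodic extension $U$, and the hypothesis $x\ge 1$ is precisely what ensures that the factor already sees a full period of $U$, so that the local condition on the factor and the global condition on $U$ coincide.
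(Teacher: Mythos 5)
Your proof is correct, and it takes a genuinely different route from the paper. The paper's proof proceeds by direct string-equation manipulation: it assumes without loss of generality that $s \le p$ (where $s=\absolute{\qsufflem}$, $p=\absolute{\qpreflem}$), unwinds the palindrome equation $\rev{\qpreflem}\cdot\rev{\qplainlem}^x\cdot\rev{\qsufflem}=\qsufflem\qplainlem^x\qpreflem$ through a short chain of equivalences (trimming a suffix of length $s$, then a prefix of length $p$), and finally dispatches the case $s>p$ by invoking the lemma on the reversed data. Your approach instead realises $\qsufflem\qplainlem^x\qpreflem$ as the window $U[a\dd b]$ of the bi-infinite $q$-periodic word $U$ and reformulates both the palindrome condition and the rotation condition as the single reflective symmetry $U[i]=U[(p-s)+1-i]$, extended from a window to all of $\ZZ$ using that both sides are $q$-periodic and that the window length $s+xq+p\ge q$ covers a full period. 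The payoff of your method is that it removes the case split on the sign of $p-s$ entirely (the reflection formula is symmetric by construction) and makes the role of $x\ge1$ transparent; the paper's manipulation is more elementary and stays at the level of finite string identities, at the cost of the extra symmetry argument at the end. Both proofs are sound and short; yours is arguably the more conceptual of the two.
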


\begin{proof}
\let\qsuff\qsufflem
\let\qpref\qpreflem
\let\plainQ\qplainlem
\def\hatQ{{\overline\plainQ}}
\def\hatqsuff{{\overline {\qsuff}}}
\def\hatqpref{{\overline {\qpref}}}
    Let $q = \absolute{\plainQ}$, $s = \absolute{{\qsuff}}$, $p = \absolute{{\qpref}}$, $\hatQ = \rev{\plainQ}$, $\hatqsuff = \rev{{\qsuff}}$, and $\hatqpref = \rev{{\qpref}}$.
    We assume $s \le p$ and show that ${\qsuff}\plainQ^x{\qpref}$ is a palindrome if and only if $\rot^{p-s}(\plainQ) = \rev{\plainQ}$. The individual steps are explained below.
    \begin{alignat*}{4}
        \hatqpref \cdot \hatQ^x \cdot \hatqsuff \enskip  = \enskip {\qsuff}\cdot \plainQ^x\cdot {\qpref} %
        \enskip & {\iff}_{(i)} && \hatqpref \cdot \hatQ^x && = && {\qsuff}\cdot \plainQ^x\cdot {\qpref}[1\dd p - s]\\ 
        &{\iff}_{(ii)} && \hatqpref \cdot \hatQ && = && {\qsuff}\cdot \plainQ[1\dd p-s]\cdot \rot^{p-s}(\plainQ)\\
        &{\iff}_{(iii)}\enskip && \hatQ &\enskip& = &\enskip& \rot^{p - s}(\plainQ).
    \end{alignat*}
    For $(i)$, the  $(\Rightarrow)$-direction is trivial, as we merely trim a suffix of length $s$ on both sides. For the $(\Leftarrow)$-direction, $\hatqpref \cdot \hatQ^x = {\qsuff}\cdot \plainQ^x\cdot {\qpref}[1\dd p - s]$ with $s \le p$ implies that ${\qsuff}$ is a prefix of $\hatqpref$, which means that $\hatqsuff$ is a suffix of ${\qpref}$. Hence $\hatqsuff = {\qpref}[p-s+1\dd p]$, and the $(\Leftarrow)$-direction of $(i)$ follows.
    Step $(ii)$ is straight-forward.
    Finally, the $(\Rightarrow)$-direction of $(iii)$ is again trivial, as we merely trim a prefix of length $p$
    on both sides. For the $(\Leftarrow)$-direction, note that 
    $\hatQ = \rot^{p - s}(\plainQ)$ 
    can be written as 
    $\hatQ[1\dd q-p] \cdot \hatqpref = \plainQ[p-s+1\dd q] \cdot \plainQ[1\dd p-s]$.
    By trimming a prefix of length $q - p$ on both sides, we obtain
    ${\hatqpref = \plainQ[q-s+1\dd q] \cdot \plainQ[1\dd p-s] = {\qsuff}\plainQ[1\dd p-s]}$.
    Hence $\hatQ = \rot^{p - s}(\plainQ)$ implies ${\hatqpref \cdot \hatQ = {\qsuff}\plainQ[1\dd p-s]\cdot \rot^{p - s}(\plainQ)}$, and the $(\Leftarrow)$-direction of $(iii)$ follows.

	If $s > p$, then we simply invoke the lemma with $\plainQ' = \hatQ$, which has suffix $\qsuff' = \hatqpref$ of length $s' = p$ and prefix $\qpref' = \hatqsuff$ of length $p' = s$ with $s' < p'$. We have already shown the correctness of the lemma for this case. Thus, $\qsuff'(\plainQ')^x\qpref' = \rev{\qsuff\plainQ^x\qpref}$ is a palindrome if and only if $\rot^{p' - s'}(\plainQ') = \rev{\plainQ'}$, which is equivalent to $\rot^{-(s - p)}(\plainQ) = \plainQ$.
\end{proof}

\begin{corollary}%
\label{lem:aux:structure:palindrome_in_run:extended}%
Let $y$ be a positive integer, let $\qplainlem$ be a string, and let $\qsufflem$ and $\qpreflem$ be respectively a suffix and a prefix of $\qplainlem^y$. Let $x$ be a positive integer, then ${\qsufflem}\qplainlem^x{\qpreflem}$ is a palindrome if and only if $\rot^{\absolute{\qpreflem} - \absolute{\qsufflem}}(\qplainlem) = \rev{\qplainlem}$.
\end{corollary}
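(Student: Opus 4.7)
The plan is to reduce \cref{lem:aux:structure:palindrome_in_run:extended} to \cref{lem:aux:structure:palindrome_in_run} by absorbing into the central power any full copies of $\qplainlem$ that are hidden inside $\qsufflem$ and $\qpreflem$. Set $q = \absolute{\qplainlem}$, and write $\absolute{\qsufflem} = aq + s$ and $\absolute{\qpreflem} = bq + p$ with integers $a, b \ge 0$ and $0 \le s, p < q$. Because $\qsufflem$ is a suffix of the $q$-periodic string $\qplainlem^y$, it decomposes as $\qsufflem = \qsufflem' \cdot \qplainlem^a$, where $\qsufflem' \coloneqq \qplainlem[q - s + 1 \dd q]$ is a (possibly empty) suffix of $\qplainlem$ of length $s < q$. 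Symmetrically, $\qpreflem = \qplainlem^b \cdot \qpreflem'$, where $\qpreflem' \coloneqq \qplainlem[1 \dd p]$ is a prefix of $\qplainlem$.

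Substituting these decompositions, I obtain $\qsufflem \cdot \qplainlem^x \cdot \qpreflem = \qsufflem' \cdot \qplainlem^{x'} \cdot \qpreflem'$, where $x' \coloneqq a + x + b \ge x \ge 1$ is a positive integer. This is exactly the form required by \cref{lem:aux:structure:palindrome_in_run}, with $\qsufflem'$ a suffix and $\qpreflem'$ a prefix of $\qplainlem$ itself. Applying that lemma yields that $\qsufflem \cdot \qplainlem^x \cdot \qpreflem$ is a palindrome if and only if $\rot^{\absolute{\qpreflem'} - \absolute{\qsufflem'}}(\qplainlem) = \rev{\qplainlem}$.

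To conclude, I would translate this rotation equality into the one stated in the corollary. Note that $\absolute{\qpreflem} - \absolute{\qsufflem} = (b - a)q + (p - s) = (b - a)q + (\absolute{\qpreflem'} - \absolute{\qsufflem'})$, so the two differences agree modulo $q$. Since rotating $\qplainlem$ by any multiple of $q$ is the identity, we get $\rot^{\absolute{\qpreflem} - \absolute{\qsufflem}}(\qplainlem) = \rot^{\absolute{\qpreflem'} - \absolute{\qsufflem'}}(\qplainlem)$, so the two formulations of the rotation condition coincide. The only mild bookkeeping subtlety I anticipate is the degenerate case where $s = 0$ or $p = 0$ (so that $\qsufflem'$ or $\qpreflem'$ is the empty string), but the base lemma already accommodates empty suffixes and prefixes, so no special handling is required.
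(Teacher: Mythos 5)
Your proof is correct and takes essentially the same approach as the paper: decompose $\qsufflem$ and $\qpreflem$ into whole copies of $\qplainlem$ plus a short remainder, absorb the whole copies into the central power, apply \cref{lem:aux:structure:palindrome_in_run}, and observe that $\absolute{\qpreflem} - \absolute{\qsufflem}$ differs from the remainder difference only by a multiple of $\absolute{\qplainlem}$, so the two rotations coincide. (The paper's own write-up has a small typo, writing $y$ where the middle exponent should be $x' + x + x''$; your $x' = a + x + b$ is the intended quantity.)
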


\begin{proof}
    Let $q = \absolute{\qplainlem}$, $s = ({\absolute{\qsufflem} \bmod q})$, $p = ({\absolute{\qpreflem} \bmod q})$, $x' = \floor{\absolute{\qsufflem} / q}$ and $x'' = \floor{\absolute{\qpreflem} / q}$. 
    By \cref{lem:aux:structure:palindrome_in_run}, $\qsufflem\qplainlem^x\qpreflem = \qplainlem[q - s + 1\dd q]\qplainlem^{x' + y + x''}\qplainlem[1\dd p]$ is a palindrome if and only $\rot^{p - s}(\qplainlem) = \rev{\qplainlem}$.
    Note that $\absolute{\qpreflem} - \absolute{\qsufflem} = q \cdot (x'' - x') + p - s$, and thus $\rot^{\absolute{\qpreflem} - \absolute{\qsufflem}}(\qplainlem) = \rot^{q \cdot (x'' - x') + p - s}(\qplainlem) = \rot^{q \cdot (x'' - x')}(\rot^{p - s}(\qplainlem)) = \rot^{p - s}(\qplainlem)$, which concludes the proof.
\end{proof}

\subparagraph*{Model of computation.}
We assume the standard word RAM model of computation~\cite{fredmanwillard}, using words of size $\Theta(\log n)$ when processing an input string of length $n$. The presented algorithms are deterministic and read-only, i.e., they cannot write to the memory occupied by the input. Space complexities are stated in number of words, ignoring the space occupied by the input.

\section{Lower Bound}\label{sec:lower-bound}

We now show that the size of our representation of $k$-palindromic prefixes is close to optimal when $k$ is small. For constant $k$, we are within an $\Oh(\log n)$ factor of the optimal space (where the lower bound is $\Omega(\log^k n)$ \emph{bits}, while our representation requires $\Oh(\log^k n)$ \emph{words}). For $k = \Oh(\sqrt{\log \log n})$, we are within an $\Oh(\polylog(n))$ factor of the lower bound.
 
\staticlowerbound*

For some string $X$ and positive integer $s$, let $\palprefs(X) \subseteq [1, \absolute{X}] \times [1, s]$ be defined such that $(i, r) \in \palprefs(X)$ if and only if $r \leq s$ is the palindromic length of $X[..i]$. We will provide a lower bound for the space needed to encode $\palprefs(X)$.
For this purpose, we construct for any integers $t, s \ge 1$
a family of strings $F(t, s)$ with the following properties:

\begin{enumerate}
    \item for every $X\in F(t, s), \absolute{X} = 3^{t+s}$, and
    \item we have $\absolute{F(t, s)} \ge 2^{b(t, s)}$,
    where $b(t, s) \ge \binom{t+s}{s}$, and
    \item the function $f_{s}(X) = \palprefs(X)$ is injective on $F(t,s)$, and its inverse is computable.
\end{enumerate}

Due to the second property, we can bijectively map all the possible bitstrings of length $b(t, s)$ to a subset of $F(t, s)$, and hence encoding an element of $F(t, s)$ requires at least $b(t, s)$ bits of space in the worst case.
Assume that, for some string $X \in F(t, s)$, we are given the sets $\pal^i(X)$ for every $i \leq s$.
From these sets, we can easily construct $\palprefs(X)$. By the third property, we can then compute $X = f_{s}^{- 1}(\palprefs(X))$. Hence the sets uniquely encode $X$, which implies that they cannot be stored in fewer than $b(t, s)$ bits of space.
If we let $s = k$ and use $n = 3^{t+k}$ for any $t\ge 1$, 
this bound is at least $\binom{\log_3 n}{k} \geq ((\log_{3} n) / k)^k$ bits.


\paragraph*{Recursive construction of \boldmath$F(t,s)$.\unboldmath} For $t\ge 1$ and $s\ge 1$, we define%
\begin{flalign*}
    F(t, 1) &= \{a^{i}b^{3^{t+1}-2i}a^{i}: i = 1,\ldots, 2^{t+1}\} \text{ where $a,b$ are distinct letters in }
     \Sigma\text{, and}\\ 
    F(1, s)&= \{UVU : U, V \in F(1, s-1)\} \text{ when } s\ge 2\text{, and}\\ 
    F(t, s) &= \{UVU : U\in F(t-1, s) \land V\in F(t, s-1)\} \text{ when } t, s \ge 2.
\end{flalign*}
Note that $F(t, 1)$ is well-defined, as $t\ge 1$, hence $2^{t+2} < 3^{t+1}$.
In the above definition of $F(1, s)$ and $F(t, s)$.

\paragraph*{Properties of \boldmath$F(t,s)$.\unboldmath} We now prove that $F(s,t)$ has the desired properties.
The following claim can easily be proven by induction.
\begin{claim}
    For every $X\in F(s,t), \absolute{X} = 3^{t+s}$ and $X\in\pal$. 
\end{claim}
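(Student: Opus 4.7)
The plan is to prove both conclusions simultaneously by induction on $t + s$, verifying that each clause in the recursive definition of $F(t, s)$ preserves length $3^{t+s}$ and palindromicity. Well-foundedness is immediate: each subcase refers only to $F(t', s')$ with $t' + s' < t + s$, so ordering the instances by $t + s$ (breaking ties arbitrarily) makes all recursive calls land on strictly smaller instances already covered by the inductive hypothesis.

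For the base case $s = 1$ (with arbitrary $t \ge 1$), every element of $F(t, 1)$ has the form $X = a^i b^{3^{t+1} - 2i} a^i$ with $i \in [1, 2^{t+1}]$. The parenthetical remark accompanying the definition guarantees $2^{t+2} < 3^{t+1}$ for $t \ge 1$, so the exponent $3^{t+1} - 2i$ is non-negative and $X$ is a well-defined string of length $2i + (3^{t+1} - 2i) = 3^{t+1} = 3^{t+s}$; and strings of the shape $a^i b^j a^i$ are clearly palindromes. For the inductive step I treat the two remaining clauses. If $X = UVU \in F(1, s)$ with $s \ge 2$, then $U, V \in F(1, s-1)$, so by induction $\absolute{U} = \absolute{V} = 3^s$, giving $\absolute{X} = 3 \cdot 3^s = 3^{s+1} = 3^{t+s}$; and since $U, V \in \pal$ we have $\rev{UVU} = \rev{U}\,\rev{V}\,\rev{U} = UVU$, so $X \in \pal$. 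If instead $X = UVU \in F(t, s)$ with $t, s \ge 2$, then $U \in F(t-1, s)$ and $V \in F(t, s-1)$; by induction $\absolute{U} = 3^{t+s-1} = \absolute{V}$, so $\absolute{X} = 3 \cdot 3^{t+s-1} = 3^{t+s}$, and the same reversal identity yields $X \in \pal$.

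There is no real obstacle in this argument: the recursion was designed precisely so that the triple concatenation $UVU$ multiplies the length by three and preserves the palindrome property. The only piece of bookkeeping is checking that in all three recursive clauses either $t$ or $s$ strictly decreases, which is immediate from the definitions, so the induction on $t + s$ closes cleanly.
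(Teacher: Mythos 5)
Your proof is correct and follows the approach the paper intends: the paper simply states that the claim ``can easily be proven by induction'' without writing it out, and your induction on $t+s$ over the three clauses of the recursive definition is exactly that argument, carried out in full detail.
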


\begin{claim}
    Let $b(t, s) = \log_2 \absolute{F(t,s)}$. We have $b(t,s) \ge \binom{t+s}{s}$.
\end{claim}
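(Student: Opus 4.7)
The plan is to first isolate the two recurrences satisfied by $b(t,s)$ from the recursive definition of $F(t,s)$, and then prove the lower bound by induction on $t+s$ using Pascal's identity.

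The key preliminary observation is that the concatenation maps underlying the definition of $F$ are injective. By the preceding claim, every string in $F(t',s')$ has length exactly $3^{t'+s'}$, so whenever $UVU = U'V'U'$ with $U, U' \in F(t',s')$ and $V, V'$ drawn from a family whose members also share a common length, we can recover $U$ (resp.\ $U'$) as the appropriate prefix and $V$ (resp.\ $V'$) as the middle factor of fixed length. Hence $U = U'$ and $V = V'$, yielding
\begin{equation*}
    |F(1,s)| = |F(1,s-1)|^{2} \quad \text{for } s \geq 2, \qquad |F(t,s)| = |F(t-1,s)| \cdot |F(t,s-1)| \quad \text{for } t,s \geq 2.
\end{equation*}
Taking base-$2$ logarithms gives $b(1,s) = 2\, b(1,s-1)$ and $b(t,s) = b(t-1,s) + b(t,s-1)$.

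For the base cases, the definition $F(t,1) = \{a^{i} b^{3^{t+1} - 2i} a^{i} : i = 1,\dots,2^{t+1}\}$ yields $|F(t,1)| = 2^{t+1}$ (the $2^{t+1}$ values of $i$ produce distinct strings), so $b(t,1) = t+1 = \binom{t+1}{1}$. Combined with $b(1,s) = 2\, b(1,s-1)$ and $b(1,1) = 2$, a trivial induction gives $b(1,s) = 2^{s} \geq s+1 = \binom{s+1}{s}$ for every $s \geq 1$.

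For the inductive step, suppose $t, s \geq 2$ and the bound holds for every pair of strictly smaller sum. Then
\begin{equation*}
    b(t,s) \;=\; b(t-1, s) + b(t, s-1) \;\geq\; \binom{t+s-1}{s} + \binom{t+s-1}{s-1} \;=\; \binom{t+s}{s},
\end{equation*}
where the last equality is Pascal's identity. I expect the only subtle point to be justifying the cardinality recurrences rather than the arithmetic: one must verify that equal-length palindromic blocks can be uniquely decoded from the whole string $UVU$, which follows immediately from the uniform-length property of $F(t',s')$. Once that is in hand, the remainder of the argument is routine binomial bookkeeping.
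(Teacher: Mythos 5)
Your proof is correct and follows essentially the same route as the paper's: establish the two cardinality recurrences, handle the base cases $b(t,1)$ and $b(1,s)$, and finish by induction on $t+s$ with Pascal's identity. The one place you are more explicit than the paper is in justifying that the concatenation maps are injective (via the uniform-length property of $F(t',s')$), which the paper dismisses as clear; that is a reasonable thing to spell out but does not change the argument.
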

\begin{claimproof}
    We proceed by induction.
    For $t \geq 1$, it holds $\absolute{F(t, 1)} = 2^{t+1}$, i.e., $b(t,1) = t+1 = {\binom{t+1}{1}}$. 
    In particular, $\absolute{F(1, 1)} = 2^{2}$ and $b(1,1) = 2 = {\binom{2}{1}}$.
    For $s \geq 2$, it clearly holds $\absolute{F(1, s)} = \absolute{F(1, s-1)}^2$, and thus $b(1, s) = 2 \cdot b(1, s - 1) \geq 2 \cdot \binom{s}{s - 1} = 2s \geq s + 1 = \binom{s + 1}{s}$ by immediate induction. It remains the case where $t \geq 2$ and $s \geq 2$, in which we have $\absolute{F(t, s)} = \absolute{F(t-1, s)} \cdot \absolute{F(t, s - 1)}$ and thus%
    %
    \begin{flalign*}
        b(t, s) = b(t-1, s) + b(t, s-1) \ge \binom{t+s-1}{s} + \binom{t+s-1}{s-1}%
        = \binom{t + s}{s}%
        ,%
    \end{flalign*}
    where the second step is by induction. Hence the claim holds.
\end{claimproof}

Finally, we show that every $X \in F(t,s)$ can be uniquely identified from $\palprefs(X)$.
\begin{claim}
    The function $f_{s}(X) = \palprefs(X)$ is injective on $F(t,s)$,
    and its inverse is computable.
\end{claim}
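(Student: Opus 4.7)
\begin{claimproof}
The plan is to proceed by strong induction on $s$, with a secondary induction on $t$ when $t \ge 2$, giving an explicit reconstruction algorithm. The base case $s = 1$ is straightforward: each $X \in F(t, 1)$ has the form $a^i b^{3^{t+1}-2i} a^i$, and its proper palindromic prefixes are exactly $a, a^2, \ldots, a^i$, so $i$ is recovered as the largest $j < 3^{t+1}$ with $(j, 1) \in \palpref^1(X)$, which determines $X$.

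For the inductive step $s \ge 2$, we write $X = UVU$, with $U, V$ in the subfamilies prescribed by the recursion and $|U| = |V| = |X|/3$ known from $t, s$. The first step is to recover $U$: since $X[\dd i] = U[\dd i]$ for every $i \le |U|$, the restriction of $\palprefs(X)$ to positions in $[1, |U|]$ coincides with $\palprefs(U)$, so applying the inductive hypothesis to $U$ yields $U$.

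The main obstacle is to recover $V$ from $\palprefs(X)$ given $U$. The easy direction gives that the palindromic length of $UV[\dd j]$ is at most one more than that of $V[\dd j]$ (by prepending $U$ to a shortest palindromic decomposition of $V[\dd j]$), but the reverse inequality can fail because palindromic substrings of $X$ crossing the $U|V$ boundary may yield shorter decompositions of $UV[\dd j]$. The plan is to control these exceptional palindromes using the structure of $F$: every string in $F$ begins and ends with an $a$-run, so a cross-boundary palindrome is either an $a$-power inside the merged boundary $a$-run, or it extends symmetrically into the $b$-runs on both sides and has length bounded by twice the shorter of the $b$-runs of $U$ and $V$ adjacent to the boundary. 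We then split into two subcases. If $UV$ is itself a palindrome, then since $U, V$ are palindromes of equal length, the identity $UV = \rev{UV} = VU$ combined with the classical commutativity argument forces $U = V$, directly giving $V$. Otherwise, the initial $a$-run length of $V$ is recovered as the largest $k$ such that $(|U|+1, 2), \ldots, (|U|+k, 2) \in \palprefs(X)$, and we then extract $\palpref^{s-1}(V)$ from the remaining entries of $\palprefs(X)$ at positions in $[|U|+1, 2|U|]$ using the constrained form of the cross-boundary palindromes, and invoke the inductive hypothesis to recover $V$. Combining $U$ and $V$ reconstructs $X$ and yields both injectivity of $f_s$ and a computable inverse.
\end{claimproof}
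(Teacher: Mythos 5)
Your proposal takes a genuinely different route from the paper, and the divergence is instructive. The paper's construction is intended so that, in the decomposition $X = UVU$, the strings $U$ and $V$ are encoded over \emph{disjoint alphabets} (this is what the clause ``where $a,b$ are distinct letters in $\Sigma$'' and the truncated sentence ``In the above definition of $F(1,s)$ and $F(t,s)$'' are setting up, and it is invoked explicitly later: ``recall that \dots $U$ and $V$ are encoded over distinct alphabets''). Under that assumption, no palindrome in any decomposition of $UV[\dd j]$ can straddle the $U\mid V$ boundary: a boundary-crossing palindrome $AB$ with $A$ a non-empty suffix of $U$ and $B$ a non-empty prefix of $V$ would force some symbol of one alphabet to equal a symbol of the other. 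Consequently every decomposition factors through position $\absolute{U}$, which immediately gives the exact identity $\text{pl}(U\cdot V[\dd j]) = 1 + \text{pl}(V[\dd j])$, and $\palpref^{s-1}(V)$ is recovered by a simple restriction-and-shift of $\palprefs(X)$. Your proof instead works over a shared alphabet and tries to handle cross-boundary palindromes head-on. This is not just a different route but a harder one, because over a shared alphabet the identity $\text{pl}(UV[\dd j]) = 1 + \text{pl}(V[\dd j])$ is in general \emph{false}: your own ``$UV$ is a palindrome'' subcase is a witness (there $\text{pl}(UV) = 1 \ne 1 + \text{pl}(V) = 2$), and similar failures can occur for $j < \absolute{V}$ whenever $U$ has a short period whose reverse agrees with a prefix of $V$.

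The gap in your proposal is concentrated in the last step: ``we then extract $\palpref^{s-1}(V)$ from the remaining entries of $\palprefs(X)$ at positions in $[\absolute{U}+1, 2\absolute{U}]$ using the constrained form of the cross-boundary palindromes.'' This is a plan, not an argument. You assert that boundary-crossing palindromes are $a$-powers in the merged $a$-run or are short (bounded by the adjacent $b$-runs), but (i) that structural claim is not proven, and more importantly (ii) even granting it, you never explain a correct rule that maps $(\absolute{U}+j, s')\in\palprefs(X)$ to a determination of $\text{pl}(V[\dd j])$ when the naive shift gives the wrong value. Your handling of the extreme case ($UV\in\pal$) and your proposed detection of the initial $a$-run of $V$ via entries with palindromic length $2$ are reasonable ideas, but the intermediate regime where a cross-boundary palindrome shortens $\text{pl}(UV[\dd j])$ by one without making the whole prefix a palindrome is exactly the hard case and is left unaddressed. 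To make this route work you would need to prove a precise characterization of all $j$ for which $\text{pl}(UV[\dd j]) < 1 + \text{pl}(V[\dd j])$ within this family and show the correction is computable from $\palprefs(X)$ alone; as written the claim is not established. The paper sidesteps all of this with the disjoint-alphabet assumption, which is why its proof of this claim is short.
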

\begin{claimproof}
    In accordance with the recursive definition of $F(t,s)$, we proceed by induction.
    We start with the base case $s = 1$. The elements of $F(t,1)$ are of the form $X_i = a^{i}b^{3^{t+1}-2i}a^{i}$
    for some $i \le 2^{t+1}$.
    It follows that $(j,1)\in \palpref^1(X_i)$ for all $j\le i$, and $(i+1, 1)\notin\palpref^1(X_i)$.
    Hence $f_{1}$ is injective on $F(t, 1)$, and we can compute the inverse $X_i = f_{1}^{-1}(\palpref^1(X_i))$ due to $i = \min \{j \in \mathbb N^+ \mid (j + 1, 1) \notin \palpref^1(X_i) \}$.

    Next, assume that $t = 1$ and $s \ge 2$, and we inductively assume that $f_{s - 1}$ is injective on $F(1, s - 1)$. Let $X,Y\in F(1,s)$, and assume that $\palprefs(X) = \palprefs(Y)$.
    Note that there exist strings $U, U', V, V' \in F(1, s-1)$ such that
    $X = UVU$ and $Y = U'V'U'$.
    We show $U = U'$ and $V = V'$, which implies $X = Y$.
    Note that, as $U$ is a prefix of $X$, $\palpref^{s-1}(U)$ can be computed from $\palprefs(X)$ using
    \[\palpref^{s-1}(U)= \palprefs(X) \cap ([1\dd |U|] \times [1\dd s-1]).\]
    The same equation holds for $U'$ and $Y$, hence $\palpref^{s-1}(U) = \palpref^{s-1}(U')$.
    We inductively assume that $f_{s-1}(X) = \palpref^{s-1}(X)$ is injective on $F(1, s-1)$,
    and it follows $U = U'$.

    We now turn to showing that $V = V'$.
    Consider an index $i \in [|U|+1\dd |UV|]$, and let $s'$ be the palindromic length of $X[..i] = U \cdot V[..i - \absolute{U}]$. Recall that $U$ is a palindrome, and that $U$ and $V$ are encoded over distinct alphabets. Consequently, $s' - 1$ is the palindromic length of $V[..i - \absolute{U}]$.
    Therefore, $\palpref^{s-1}(V)$ can also be derived from $\palprefs(X)$:
    \[\palpref^{s-1}(V)= \{(i - \absolute{U}, s' - 1) : (i, s') \in \palprefs(X)\} \cap ([1\dd \absolute{V}] \times [1\dd s - 1]).\]
    As the same relation holds for $V'$ and $Y$, it follows that
    $\palpref^{s-1}(V) = \palpref^{s-1}(V')$, and by induction hypothesis, this implies that $V = V'$,
    and therefore $X = Y$.
    The inverse function $f_{s}^{-1}$ can be efficiently computed using two recursive calls to $f_{s-1}^{-1}$.

    The proof for the case when $t, s \ge 2$ is highly similar, but the induction is over the sum $t + s$. We inductively assume that $f_{s - 1}$ is injective on $F(t, s - 1)$, and that $f_{s}$ is injective on $F(t - 1, s)$. Instead of $\palpref^{s-1}(U)$, we consider $\palprefs(U)$. The remainder of the proof is the same as for $t = 1$ and $s \geq 2$.
\end{claimproof}
This concludes the proof of \cref{th:lower_bound}.

\section{Combinatorial Properties of Affine Prefix Sets}
\label{sec:affineprefixsets}

In this section, we study the combinatorial structure of $k$-palindromic prefixes of $T$.
We start with the definition of \textit{affine sets}, which we will use as a scaffolding for our analysis.

\begin{restatable}[Affine sets]{definition}{affinesetdef}\label{def:affine-sets}
    A set of strings $\Aa$ is \emph{affine} if there exist $t \in \mathbb N_0$, a string $X$, primitive strings $Q_1,\ldots, Q_t$, and positive integers $\ell_1, \dots, \ell_t$ and $u_1,\dots, u_t$ such that $$\forall i \in [1, t] : \ell_i \le u_i\textnormal{\qquad and\qquad}\Aa = \{XQ_1^{a_1}\ldots Q_t ^{a_t} \mid \forall r \in [1, t] : a_r \in [\ell_r, u_r]\}.$$
    The tuple $\angles{X,\left(Q_i, \ell_i, u_i\right)_{i=1}^t}$ is a \emph{representation} of $\Aa$, and $t$ is the \emph{order} of the representation. The order of $\Aa$ is the minimal order achieved by any of its representations. We call $\{Q_i\}$ \emph{the components} of a representation, and $\ell_i$ (resp., $u_i$) the exponent lower (resp., upper) bounds.
\end{restatable}

A representation \emph{generates} (the strings of) the corresponding affine string set. 
If $\angles{X,\left(Q_i, \ell_i, u_i\right)_{i=1}^t}$ generates $\Aa$ and $\angles{X',\left(Q_i', \ell_i', u_i'\right)_{i=1}^{t'}}$ generates $\Bb$, 
then their concatenation is defined as $\angles{X,\left(Q_i, \ell_i, u_i\right)_{i=1}^t \cdot (Y, a, a) \cdot \left(Q_i', \ell_i', u_i'\right)_{i=1}^{t'}}$, 
where $Y$ is a primitive string and $a$ is a positive integer such that $Y^a = X'$ (i.e., $Y$ is the primitive root of $X'$). 
The concatenation generates $\Aa \cdot \Bb = \{A\cdot B: A\in\Aa \land B\in\Bb\}$. (If $X' = \emptystring$, then the concatenation is $\angles{X,\left(Q_i, \ell_i, u_i\right)_{i=1}^t \cdot \left(Q_i', \ell_i', u_i'\right)_{i=1}^{t'}}$.)

\FloatBarrier

\begin{figure}
	\centering\def\vspacer{\vspace{1.5\baselineskip}}

	\subcaptionbox{String $T$ with prefix-palindromes $\{\texttt{a}, \texttt{aba}, \texttt{ababa}, \texttt{ababaccababa}, \texttt{ababaccababaccababa}\}$. The prefix-palindromes can be expressed using four affine prefix sets with representations $\angles{\texttt{a}, \emptyseries}$, $\angles{\texttt{aba}, \emptyseries}$, $\angles{\texttt{a}, (\texttt{ba},1,2)}$, and $\angles{\texttt{ababa}, (\texttt{ccababa}, 1, 2)}$. The same number of affine prefix sets suffices to express the prefixes of $T$ in $\pal^2$, as shown below $T$. There are $14$ such prefixes, and some of them can be split into two palindromes in multiple ways.\label{fig:pal2}}[\textwidth]{
	\def\collla{white}
\def\colllb{black!10!white}
\def\colllc{black!20!white}
\def\colll{white}
\begin{tikzpicture}[x=.89em, y=1em]

\foreach[
	count=\iminus from 0, 
	count=\i from 1,
	evaluate=\iminus as \lll using \iminus-0.5,
	evaluate=\iminus as \rrr using \iminus+0.5,
	] \x in
{a,b,a,b,a,c,c,a,b,a,b,a,c,c,a,b,a,b,a,c,c,a,b,a,b,} {

	\edef\tmpcol{\csname colll\x\endcsname}
	\fill[\tmpcol] (\lll, 4) rectangle (\rrr, -13.25);

	\node[inner sep=0] (txt\i) at (\iminus, 0) {\texttt{\x}\vphantom{\texttt{h}}};

}

\node[fit=(txt1)(txt25), draw] {};

\node[left=0 of txt1] {$T =\ $\vphantom{\texttt{h}}};

\xdef\hidx{1}
\foreach[count=\j from 0] \group/\splitsize in {%
		{%
		1/10,%
		1/17,%
		1/24%
		}/0,%
		{%
		3/20,%
		3/13,%
		3/6%
		}/0.5,%
		{%
		3/1,%
		1/3,%
		1/1%
		}/2,%
		{%
		19/2,%
		12/9,%
		5/16,%
		5/9,%
		12/2%
		}/3.5,%
		{%
		19/1,%
		12/1%
		}/4,%
		{%
		5/2,%
		5/1%
		}/5.5,%
		{%
		1/0,%
		3/0,%
		5/0,%
		12/0,%
		19/0%
		}/-28.5%
}
{

\foreach[
	count=\i from \hidx,
	count=\iplus from \hidx+1,
	evaluate=\i as \voff using (0.5*\i+0.5*\splitsize+0.5)),
	evaluate=\first as \secstart using int(\first + 1),
	evaluate=\first as \secend using int(\first+\second),
	evaluate=\secend as \secendplus using int(\secend+1)] 
	\first/\second in \group {
	\node[below=\voff of txt1] (vtmp) {};
	\path (txt\first.center) to node[midway] (split) {} (txt\secstart.center);
	\path (txt\secend.center) to node[midway] (end) {} (txt\secendplus.center);
	\draw[|-|] (txt1.west |- vtmp) to (split |- vtmp);
	\ifnum\second>0 \draw[|-|] (split |- vtmp) to (end |- vtmp);\fi
	\node[inner sep=.2em] (l\hidx) at (txt1.west |- vtmp) {};
	\xdef\hidx{\iplus}
}
}

\draw[decorate, decoration = {brace, amplitude=.5em, raise=.5em}] (l6.south) to 
node[midway, left=2em] {$\angles{\emptystring, (\texttt{ababacc}, 1, 3) \cdot (\texttt{ab}, 1, 2)}$} (l1.north);
\draw[decorate, decoration = {brace, amplitude=.5em, raise=.5em}] (l9.south) to 
node[midway, left=2em] {$\angles{\emptystring, (\texttt{ab}, 1, 2)}$} (l7.north);
\draw[decorate, decoration = {brace, amplitude=.5em, raise=.5em}] (l16.south) to 
node[midway, left=2em] {$\angles{\texttt{ababa}, (\texttt{ccababa}, 1, 2) \cdot (\texttt{c}, 1, 2)}$} (l10.north);
\draw[decorate, decoration = {brace, amplitude=.5em, raise=.5em}] (l18.south) to 
node[midway, left=2em] {$\angles{\texttt{ababa}, (\texttt{c}, 1, 2)}$} (l17.north);
\draw[decorate, decoration = {brace, amplitude=.5em, raise=.5em}] (l23.south) to 
node[midway, left=2em] {prefix-palindromes of $T$} (l19.north);

\end{tikzpicture}}

\vspacer

	\subcaptionbox{An affine prefix set $\Aa$ of a string $T$ with representation \irreducibleset{} (drawn above $T$).
     This representation is irreducible.
     The set $\Aa$ contains all the prefixes of $T$ that end at positions drawn in dotted lines. In this example, the set $\Aa$ has the alternative representation \reducibleset. This representation is reducible because $Q'$ has the same exponent upper and lower bound, and because $Q_2$ has an exponent lower bound larger than $1$.\label{fig:aps}}[\textwidth]{
	\def\makeirreducibleset{\affinesetup[(irreduciblestart)]{%
		14/0/0/black/$X$,
		32/1/2/black/$Q_1$,
		5/1/3/black/$\ \ Q_2$,
		2/1/2/black/{}%
		}
		\xdef\irreducibleset{$\angles{X, %
     		(Q_1, \affinelower{1}, \affineupper{1}) \cdot %
     		(Q_2, \affinelower{2}, \affineupper{2}) \cdot %
     		(Q_3, \affinelower{3}, \affineupper{3})}$}
	}

	\def\makereducibleset{\affinesetup{%
		5/0/0/gray/$X'$,
		32/1/2/gray/$Q_1'$,
		4/1/1/gray/$\enskip\ Q'$,
		5/2/4/gray/$\ Q_2$,
		2/1/2/gray/{}%
		}
		\xdef\reducibleset{$\angles{X', %
     		(Q_1', \affinelower{1}, \affineupper{1}) \cdot %
     		(Q', \affinelower{2}, \affineupper{2}) \cdot %
     		(Q_2, \affinelower{3}, \affineupper{3}) \cdot %
     		(Q_3, \affinelower{4}, \affineupper{4})}$}
	}

    \begin{tikzpicture}[x=.37em, y=1em]	
    
    \makereducibleset
	\affinemakebelow
	\affinedrawing[{\draw[densely dotted] (affinemax) to ++(0, .5em);}]
	\draw (affinemax4-center) node[below, gray] {$\enskip Q_3$};
	
	\draw (affinemin) ++(0, .5em) node (bottomleft) {};
	\draw (affinemax |- bottomleft) ++(1em, 1em) node (topright) {};
	\node[draw, fit=(bottomleft.center)(topright.center), inner sep=0] (text) {};
	\node[left=0 of text] {$T =$};
	\draw (affinemin |- topright) ++(0, .5em) node (irreduciblestart) {}; 	
	
	{\tikzset{every path/.style={line width=1pt}}\makeirreducibleset
	\affinedrawing[{\draw[densely dotted] (affinemax) to ++(0, -1.5em);}]
	\draw (affinemax3-center) node[above, black] {$\enskip Q_3$};}
	
	\end{tikzpicture}}
	
	\vspacer

	\subcaptionbox{An affine prefix set $\Aa$ of a string $T$ with representation $\angles{X, (Q_1, 1, 2) \cdot (Q_2, 1, 3)}$ (drawn in black). If this representation is strongly affine, then its expansion $\angles{X, (Q_1, 1, 7) \cdot (Q_2, 1, 8)}$ is also a representation of an affine prefix set of $T$ (drawn in gray).\label{fig:strong}}[\textwidth]{
	\def\makestrongset{\affinesetup{%
		3/0/0/black/$X$,
		9/1/2/black/$Q_1$,
		1/1/3/black/{}%
		}%
	}
	
	\def\makeweaksetlong{\affinesetup[(weaklongstart)]{%
		0/0/0/gray/{},
		9/1/5/gray/$Q_1$,
		1/1/8/gray/{}%
		}%
	}

	\def\makeweaksetshort{\affinesetup[(weakshortstart)]{%
		0/0/0/gray/{},
		1/1/5/gray/{}%
		}%
	}

    \begin{tikzpicture}[x=.485em, y=1em]

	
	{\tikzset{every path/.style={line width=1pt}}
	\makestrongset
	\affinedrawing[{\draw[densely dotted] (affinemax) to ++(0, -1.5em);}]}
	\node[above=0 of affinemax1 |- affinemax1-center] (q2) {$Q_2$};
	\node[inner sep=2pt] (q2) at (q2) {\phantom{$Q_2$}};
	\draw[thin, Latex-] (affinemax2-center) to[bend right] (q2);
	
	\node (weaklongstart) at (affinemax1) {};
	\node (weakshortstart) at (affinemax2) {};
	
	{\tikzset{every path/.style={thin}}
	\makeweaksetshort
	\affinedrawing[{\draw[densely dotted, gray] (affinemax) to ++(0, -1.5em);}]
	\path (weakshortstart) ++(-9,0) node (weakshortstart) {};
	\affinedrawing[{\draw[densely dotted, gray] (affinemax) to ++(0, -1.5em);}]
	
	\makeweaksetlong
	\affinedrawing[{\draw[densely dotted, gray] (affinemax) to ++(0, -1.5em);}]
	\draw (affinemax2-center) node[above, gray] {$Q_2$};}
	
	\draw (0,0) ++(0, -.5em) node (bottomleft) {};
	\draw (affinemax |- bottomleft) ++(1em, -1em) node (topright) {};
	\node[draw, fit=(bottomleft.center)(topright.center), inner sep=0] (text) {};
	\node[left=0 of text] {$T =$};
	
	\end{tikzpicture}}
	\caption{Affine prefix sets.}
\end{figure}

In what follows, we consider \emph{affine prefix sets}, i.e., affine sets that contain only prefixes of the given input string $T$.
We will show that a small number of affine prefix sets suffices to represent the $k$-palindromic prefixes of $T$.
An example for $k = 2$ is provided in \cref{fig:pal2}.
The case where $k = 1$, i.e., the structure of prefix-palindromes, is well-understood:
there are $\Oh(\log n)$ groups of such palindromes, where each group can be expressed as an arithmetic progression and a corresponding periodic prefix of $T$ (see, e.g., \cite[Lemma 5]{borozdin2017linear}).
Below, we restate this result in the framework of affine prefix sets (with proofs provided merely for self-containedness).

\begin{lemma}\label{lemma:prefix-pal1}
    Let $S$ be a palindrome, and let $P$ be the longest proper prefix of $S$ that is a palindrome. If $\absolute{S} \le 3\absolute{P}/2$,
    then there exist strings $U\in \pal\cup \{\emptystring\}$, $V\in \pal$ and an integer $i \ge 3$
    such that $P = U(VU)^{i - 1}$ and $S = U(VU)^{i}$. Furthermore, $VU$ is primitive, and $\absolute{VU}$ is the minimal period of both $S$ and $P$.
\end{lemma}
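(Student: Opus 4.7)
The plan is to use the palindromic structure of $S$ together with the period induced by its palindromic border $P$ to decompose the minimal period of $S$ into two palindromic blocks. First, set $p = \absolute{S}-\absolute{P}$. Since $P$ is simultaneously a prefix and a suffix of $S$, the value $p$ is a period of $S$, and the hypothesis $\absolute{S} \le \tfrac{3}{2}\absolute{P}$ gives $\absolute{S} \ge 3p$. Write $R = S[1\dd p]$, let $i = \floor{\absolute{S}/p}\ge 3$, and let $U = R[1\dd \absolute{S} \bmod p]$ be the prefix of $R$ of length strictly less than $p$, so that $S = R^i U$.

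I would then establish that both $U$ and the complementary suffix $V = R[\absolute{U}+1\dd p]$ of $R$ are palindromes (with $V$ non-empty because $\absolute{U}<p$). The palindromicity of $U$ is immediate from $S = \rev{S}$ by comparing the first and last $\absolute{U}$ symbols of $S$, both of which equal $U$. For $V$, the equation $S = \rev{S}$ rewrites as $R^i U = U\rev{R}^i$ once we use $U\in\pal$; comparing the first $p$ symbols on both sides against the factorization $R = UV$ forces $V = \rev{V}$. A direct rewriting $S = R^i U = (UV)^i U = U(VU)^i$ then yields the desired decomposition, and $P = U(VU)^{i-1}$ follows because this expression has the correct length $\absolute{S}-p = \absolute{P}$ and is easily verified to be a palindrome.

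For the minimality and primitivity claims I would argue as follows. If $S$ admitted some period $q<p$, then $S[1\dd \absolute{S}-q]$ would be a border of the palindrome $S$, hence itself a palindrome, and strictly longer than $P$, contradicting the maximality of $P$. Therefore $p$ is the minimal period of $S$; in particular $R$ is primitive (a proper-power representation $R=T^k$ with $k\ge 2$ would expose a shorter period of $S$), and so is its rotation $VU$. Finally, since $i\ge 3$, the prefix $P$ contains $(VU)^2$ as a substring, which has minimal period $p$ by \cref{lem:primitive_squares}; any strictly smaller period of $P$ would restrict to one of $(VU)^2$, a contradiction, so $p$ is the minimal period of $P$ as well.

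The main technical obstacle is the middle step, where the algebraic alignment in $R^i U = U\rev{R}^i$ must be handled carefully to extract $V\in\pal$; the remaining parts reduce to standard period/border duality for palindromes and a one-line application of \cref{lem:primitive_squares}.
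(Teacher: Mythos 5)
Your proof is correct and follows essentially the same strategy as the paper: both use the period $p=\absolute{S}-\absolute{P}$ induced by the palindromic border, decompose $S$ into copies of a primitive period plus a remainder $U$, extract the palindromes $U$ and $V$ by comparing $S$ with $\rev{S}$, and finish with \cref{lem:primitive_squares}. The one place where your route genuinely diverges is the primitivity/minimality argument. The paper writes $S = PQ'$, takes the primitive root $Q$ of $Q'$ with $Q'=Q^h$, and rules out $h>1$ by explicitly constructing the palindrome $P' = U(VU)^{i-h+1}$ of length $\absolute{P}+\absolute{Q}$ and observing it would be a longer proper palindromic prefix. You instead note directly that any period $q<p$ of $S$ yields the palindromic border $S[1\dd\absolute{S}-q]$, which is a proper prefix longer than $P$ --- contradiction --- and then deduce primitivity of $R$ (hence of its rotation $VU$) from minimality of $p$. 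Your version is slightly cleaner, avoiding the auxiliary $P'$, and buys a direct proof that $p$ is the minimal period of $S$; both still rely on \cref{lem:primitive_squares} for the minimal period of $P$. Two small cosmetic points worth tightening: the equality $P = U(VU)^{i-1}$ needs no verification that it is a palindrome (that is a hypothesis), only the length match against the prefix of $S$; and your factorization $S=R^iU$ with $U$ a prefix of $R$ is the mirror image of the paper's $S=UQ^i$ with $U$ a suffix of $Q$, but both yield the same $U(VU)^i$ form.
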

\begin{proof}
    Let $Q'$ be such that $S = PQ'$, and let $Q$ be the primitive root of $Q'$, i.e., $Q' = Q^h$ for some integer $h$. 
    As $\absolute{S} \le 3\absolute{P}/2$, we have $\absolute{Q'} \le \absolute{P}/2$.
    Now, since $P$ is a prefix-palindrome of $S$ and $S$ is a palindrome, $\rev{P} = P$ is a suffix of $S$. In other words, $S$ has a border $\absolute{P}$,
    and $\absolute{Q'}$ is a period of $S$.
    Since $S$ has suffix $Q'$, the periodicity implies that $S$ is a suffix of $Q'^{i'} = Q^{hi'}$ for some positive integer $i'$.
    Therefore, it holds $S = UQ^{i}$ for some integer $i \ge h$ and a proper suffix $U$ of $Q$.
    This implies $P = UQ^{i - h}$ because of $S = PQ^h$.
    
	Let $V$ be such that $Q = VU$, then $S$ has a prefix $UV$ and a suffix $VU$. Since $S$ is a palindrome, it holds $UV = \rev{VU} = \rev{U}\rev{V}$. We know that $V$ is non-empty because $U$ is a proper suffix of $Q$, thus $V \in\pal$ and $U \in\pal\cup\{\emptystring\}$.

    Let $j = i - h + 1$. Let $P' = U(VU)^{j} = (UV)^{j}U = (\rev{U}\rev{V})^{j}\rev{U} = \rev{U(VU)^{j}}$ and note that $P'$ is a palindrome of length $\absolute{PQ}$. If $h > 1$, i.e., if $Q'$ is non-primitive, then $P'$ is a proper prefix of $S$, which contradicts the fact that $P$ is the longest palindromic prefix of $S$. Hence $h = 1$ and $Q' = Q = VU$ is primitive. 
	If $i \le 2$ then $\absolute{Q} \ge \absolute{P}/2$, but we have already shown $\absolute{Q'} \le \absolute{P}/2$. Hence $i \ge 3$.
    Finally, both $P$ and $S$ have suffix $Q^2$ and period $\absolute{Q}$. By \cref{lem:primitive_squares}, $\absolute{Q}$ is the minimal period of $Q^2$, and thus also of $P$ and $S$.
\end{proof}

\begin{corollary}%
\label{coro:affine-set-pal1}%
    The prefix-palindromes of a string $T[1\dd n]$ can be partitioned into $\Oh(\log n)$ affine sets of order at most $1$. Each set of order $1$ has representation ${\angles{U(VU)^\ell, (VU, 1, u)}}$ for some $U\in \pal\cup \{\emptystring\}$, $V\in \pal$ and integers $\ell \ge 1$ and $u > 1$.
\end{corollary}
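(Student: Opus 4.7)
The plan is to enumerate the prefix-palindromes of $T$ by strictly increasing length as $P_1, P_2, \dots, P_m$, partition them into maximal \emph{runs} of consecutive palindromes satisfying $\absolute{P_{j+1}} \le \tfrac{3}{2}\absolute{P_j}$, and realise each run as a single affine prefix set of order at most one. A new run begins exactly at each index $j$ where $\absolute{P_{j+1}} > \tfrac{3}{2}\absolute{P_j}$; since lengths lie in $[1,n]$ and strictly increase, there are at most $\log_{3/2} n = \Oh(\log n)$ such jumps, and hence $\Oh(\log n)$ runs in total.

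A singleton run $\{P_j\}$ is trivially the affine prefix set generated by $\angles{P_j, \emptyseries}$, which has order $0$. For a non-singleton run $\tilde P_0, \tilde P_1, \dots, \tilde P_k$ with $k \ge 1$, I would iteratively apply \cref{lemma:prefix-pal1} to each consecutive pair $(\tilde P_j, \tilde P_{j+1})$. Each application yields a palindrome or empty string $U^{(j)}$, a palindrome $V^{(j)}$, an integer $b^{(j)} \ge 3$, and a primitive string $Q^{(j)} = V^{(j)}U^{(j)}$ whose length equals the minimal period of both $\tilde P_j$ and $\tilde P_{j+1}$, with $\tilde P_j = U^{(j)}(Q^{(j)})^{b^{(j)}-1}$ and $\tilde P_{j+1} = U^{(j)}(Q^{(j)})^{b^{(j)}}$.

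The main step is to argue that $U^{(j)}$ and $Q^{(j)}$ do not depend on $j$. Since the minimal period of $\tilde P_{j+1}$ is unique, $\absolute{Q^{(j)}} = \absolute{Q^{(j+1)}}$; as each is the prefix of $\tilde P_{j+1}$ of that common length, we obtain $Q^{(j)} = Q^{(j+1)} =: Q$. Moreover $\absolute{U^{(j)}} = \absolute{\tilde P_{j+1}} \bmod \absolute{Q} = \absolute{U^{(j+1)}}$, and since both are proper suffixes of $Q$ of the same length they coincide, $U^{(j)} = U^{(j+1)} =: U$. Writing $Q = VU$, the run is then $U(VU)^{b-1}, U(VU)^{b}, \dots, U(VU)^{b+k-1}$ for some $b \ge 3$, which is exactly the affine prefix set generated by $\angles{U(VU)^{b-2}, (VU, 1, k+1)}$ with $\ell = b-2 \ge 1$ and $u = k+1 > 1$, matching the form claimed.

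The principal obstacle is this consistency argument: pinning down that the minimal period and the offset $\absolute{U}$ modulo that period are the same across successive invocations of \cref{lemma:prefix-pal1}. Everything else --- the enumeration, the counting of runs, and the conversion of a chain of exponents into an affine representation --- is routine bookkeeping once consistency is established.
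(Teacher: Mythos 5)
Your approach matches the paper's essentially step for step: enumerate prefix-palindromes in increasing length, split into runs where consecutive lengths grow by at most a factor of $3/2$, and apply \cref{lemma:prefix-pal1} to consecutive pairs. The paper implements this as an online sweep maintaining an ``active'' affine set with an explicit invariant, while you argue the consistency of $U$ and $Q$ across an entire run up front; the content is the same.

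There is one slip in the consistency argument. You assert that $Q^{(j)} = V^{(j)}U^{(j)}$ is the length-$\absolute{Q^{(j)}}$ \emph{prefix} of $\tilde P_{j+1}$ and deduce $Q^{(j)} = Q^{(j+1)}$ from that. But $\tilde P_{j+1} = U^{(j)}(V^{(j)}U^{(j)})^{b^{(j)}} = (U^{(j)}V^{(j)})^{b^{(j)}}U^{(j)}$, so the prefix of that length is $U^{(j)}V^{(j)}$, a rotation of $Q^{(j)}$; the string $Q^{(j)}$ is instead the length-$\absolute{Q^{(j)}}$ \emph{suffix} of $\tilde P_{j+1}$. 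This matters because your subsequent step (both $U$'s are suffixes of $Q$ of the same length) presupposes $Q^{(j)} = Q^{(j+1)}$ has already been established. Two easy repairs: (a) reverse the order --- first show $\absolute{U^{(j)}} = \absolute{U^{(j+1)}}$ from the shared minimal period, conclude $U^{(j)} = U^{(j+1)}$ since both are prefixes of $\tilde P_{j+1}$, and then compare the common length-$q$ prefix $U^{(j)}V^{(j)} = U^{(j+1)}V^{(j+1)}$ to get $V^{(j)} = V^{(j+1)}$ and hence $Q^{(j)} = Q^{(j+1)}$; or (b) use the suffix: since $\tilde P_{j+2} = \tilde P_{j+1}Q^{(j+1)}$ has period $q$, its length-$q$ suffix $Q^{(j+1)}$ equals the length-$q$ suffix of $\tilde P_{j+1}$, which is $Q^{(j)}$. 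Either way the claim holds, so this is a local slip rather than a gap in the approach.
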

\begin{proof}
    Let $P_1, \ldots, P_m$ denote the prefix-palindromes of $T$, ordered by increasing lengths.
    We iterate over the prefix-palindromes starting with $P_2$, maintaining an active affine set $\Aa$ that is initially represented by $\angles{T[1], \emptyseries}$ of order 0 (because every string $T$ has shortest prefix-palindrome $T[1]$).
    When processing $P_i$, we consider two cases:
    \begin{enumerate}[(a)]
        \item\label{case:new-affine-set} If $\absolute{P_i} > 3\absolute{P_{i-1}}/2$, we create a new affine set $\Aa$ containing only $P_i$, represented by $\angles{P_i, \emptyseries}$ of order $0$. This set will become the new active affine set. 
        \item\label{case:add-pref-to-active} If $\absolute{P_i} \le 3\absolute{P_{i-1}}/2$, then we add $P_i$ to the active set using one of two subcases.
        \begin{enumerate}[(i)]
        \vspace{.1\baselineskip}
        \item \label{case:add-pref-to-active:order0} If the representation of the active set is $\angles{P_{i - 1}, \emptyseries}$, then we add $P_i$ to the active set. By \cref{lemma:prefix-pal1}, there are $U\in \pal\cup \{\emptystring\}$, $V\in \pal$ and integer $j \ge 3$
    	such that $P_{i - 1} = U(VU)^{j - 1}$ and $P_i = U(VU)^{j}$. Furthermore, $VU$ is primitive, and $\absolute{VU}$ is the minimal period of $P_{i - 1}$ and $P_i$.
    	We replace $\angles{P_i, \emptyseries}$ with $\angles{U(VU)^{j - 2}, (VU, 1, 2)}$.
        
        \vspace{.1\baselineskip}
        \item\label{case:add-pref-to-active:order1} Otherwise, we can inductively assume the following invariant. The active set has representation $\angles{U(VU)^{\ell}, (VU, 1, u)}$ for integers $\ell \ge 1$ and $u > 1$, where $\absolute{VU}$ is the minimal period of the most recently added palindrome $P_{i - 1} = U(VU)^{\ell + u}$. Clearly, the invariant holds if $P_{i - 1}$ was processed using Case~(b.i), and we will ensure that it also holds after using Case~(b.ii).
        
        By \cref{lemma:prefix-pal1}, $P_i$ and $P_{i - 1}$ have the same minimal period, which is $\absolute{VU}$ due to the invariant. Also, the lemma implies ${P_{i}} = P_{i - 1}[1\dd \absolute{VU}]{P_{i - 1}}$, which is $U(VU)^{\ell + u + 1}$ due to the invariant. Hence we can add $P_i$ to the active set by merely increasing $u$ in the representation by one, which clearly maintains the invariant.
        \end{enumerate}
    \end{enumerate}
    We then proceed with the next palindrome $P_{i+1}$.
    In Case~(\ref{case:new-affine-set}), the length of the current prefix-palindrome exceeds the length of the previous one by a factor of $3/2$. Hence this case occurs at most $\ceil{\log_{3/2} n}$ times, resulting in $\Oh(\log n)$ affine sets. In Case~(\ref{case:add-pref-to-active}), we do not create any affine sets. The correctness follows from the description of the cases. In particular, as seen in Case~(\ref{case:add-pref-to-active}), the representations of order $1$ satisfy the stated properties.
\end{proof}

\subsection{Reducing affine prefix sets}

A single affine set may have multiple equivalent representations. For example, the affine set $S = \{\texttt{caba}, \texttt{cababa}, \texttt{cabababa}\}$ is represented by $\angles{\texttt c,(\texttt{ab}, 1, 3), (\texttt a, 1, 1)}$ and $\angles{\texttt{ca},(\texttt{ba}, 1, 3)}$.
Arguably, the latter representation is preferable, as it has a lower order and can thus be encoded more efficiently.
Hence we propose a way of potentially decreasing the order of a representation by \emph{reducing} it.

\begin{restatable}[Irreducible representation]{definition}{irreduciblereprdef}
A representation $\angles{X,\left(Q_i, \ell_i, u_i\right)_{i=1}^t}$ of an affine string set is \emph{irreducible} if and only if $\forall r \in [1, t] : 1 = \ell_r < u_r$ and $\forall r \in [1, t) : \absolute{Q_r} > \absolute{Q_{r + 1}}$.
\end{restatable}

From now on, we say that $Q_r$ with $r \in [1, t]$ is \emph{fixed} if $\ell_r = u_r$, and \emph{flexible} otherwise. If there is some $r \in [1, t)$ such that $\absolute{Q_r} \le \absolute{Q_{r + 1}}$, then we say that there is an \emph{inversion} between $Q_r$ and $Q_{r + 1}$. Thus, a representation is irreducible if and only if all components are flexible and have unit lower bounds, and there are no inversions.
As per this definition, $\angles{\texttt{ca},(\texttt{ba}, 1, 3)}$ is the only irreducible representation of $S$ from the previous example.
Another example is provided in \cref{fig:aps}.

\subparagraph{Properties of flexible components.}
Now we show how to make an arbitrary representation irreducible, possibly decreasing (but never increasing) its order. 
The reduction exploits the structure of periodic substrings induced by flexible components.

\begin{lemma}\label{lem:aux:structureflexperiod}
	Let $\angles{X,\left(Q_i, \ell_i, u_i\right)_{i=1}^t}$ be a representation of an affine prefix set, and consider any $r \in [1, t)$ such that $Q_r$ is flexible. Then $\absolute{Q_r}$ is a period of every string $Q_r^{a_{r}}Q_{r + 1}^{a_{r + 1}}\dots Q_{t}^{a_{t}}$ that satisfies $a_r \in \mathbb N_0$ and $\forall j \in (r, t] : a_j \in [\ell_j, u_j]$.
\end{lemma}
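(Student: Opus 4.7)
The plan is to exploit the flexibility of $Q_r$ (i.e., $\ell_r < u_r$) to produce two prefixes of $T$ whose lengths differ by exactly $\absolute{Q_r}$, and to deduce from their comparison that everything to the right of a certain fixed prefix of $T$ actually lies inside the infinite periodic word $Q_r^\infty$. Periodicity of any string of the form $Q_r^{a_r}Q_{r+1}^{a_{r+1}}\cdots Q_t^{a_t}$ will then follow immediately.

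First I will fix arbitrary exponents $a_j \in [\ell_j, u_j]$ for $j \in (r, t]$, and set $W = Q_{r+1}^{a_{r+1}} \cdots Q_t^{a_t}$ and $Y = X Q_1^{\ell_1} \cdots Q_{r-1}^{\ell_{r-1}}$. Because $\ell_r < u_r$, both $Y Q_r^{\ell_r} W$ and $Y Q_r^{\ell_r+1} W$ belong to the affine prefix set, and hence both are prefixes of $T$. Their lengths differ by exactly $\absolute{Q_r}$, so the shorter is a prefix of the longer; cancelling the common factor $Y$ on the left, this is equivalent to the statement that $S = Q_r^{\ell_r} W$ is a prefix of $Q_r \cdot S$.

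The second step will convert this prefix relation into a periodicity statement. Unfolding characters, ``$S$ is a prefix of $Q_r S$'' is equivalent to saying that $Q_r$ is a prefix of $S$ (which holds anyway since $\ell_r \ge 1$) together with $S[i] = S[i - \absolute{Q_r}]$ for every $i \in (\absolute{Q_r}, \absolute{S}]$. The second condition is precisely the definition of $\absolute{Q_r}$ being a period of $S$, and combined with the first it says that $S$ is a prefix of the infinite concatenation $Q_r^\infty$. Because the boundary between $Q_r^{\ell_r}$ and $W$ falls at position $\ell_r\absolute{Q_r}$, a multiple of $\absolute{Q_r}$, the suffix $W$ of $S$ is itself a prefix of $Q_r^\infty$.

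Finally, for any $a_r \in \mathbb N_0$ the string $Q_r^{a_r} W$ is a prefix of $Q_r^{a_r}\cdot Q_r^\infty = Q_r^\infty$, so it has period $\absolute{Q_r}$ (vacuously so if its length is less than $\absolute{Q_r}$, by the period convention in \cref{sec:prelim}). Since the $a_j$ for $j > r$ were chosen arbitrarily within their ranges, this yields the conclusion for every string of the required form. I do not expect any genuine obstacle here; the argument is essentially mechanical once the right two prefixes of $T$ are placed side by side. The only minor care needed is handling the short-length edge case and verifying that $\ell_r\absolute{Q_r}$ is aligned with the period $\absolute{Q_r}$, both of which are immediate from $\ell_r \in \mathbb N^+$ and the period convention.
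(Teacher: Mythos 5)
Your proposal is correct and follows essentially the same route as the paper's proof: pick two prefixes of $T$ that differ only in the exponent of $Q_r$ ($\ell_r$ vs.\ $\ell_r+1$), cancel the common left factor to get that $Q_{r+1}^{a_{r+1}}\cdots Q_t^{a_t}$ is a prefix of $Q_r\cdot Q_{r+1}^{a_{r+1}}\cdots Q_t^{a_t}$, and read off the period $\absolute{Q_r}$. The only cosmetic difference is that you phrase the conclusion via the infinite word $Q_r^\infty$, and you make the $a_r = 0$ edge case explicit, whereas the paper states the prefix/border argument directly and leaves $a_r=0$ implicit.
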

\begin{proof}
Let $P = Q_1^{\ell_1}Q_2^{\ell_2}\dots Q_{r}^{\ell_r}$ and $S = Q_{r + 1}^{a_{r + 1}}Q_{r + 2}^{a_{r + 2}}\dots Q_t^{a_t}$.
By the definition of an affine prefix set, $XPS$ is a prefix of the underlying string $T$. 
Since $Q_r$ is flexible, it holds $\ell_r < u_r$, and thus $XPQ_rS$ is also a prefix of $T$.
If both $XPS$ and $XPQ_rS$ are prefixes of $T$, then $S$ is a prefix of $Q_rS$. Hence $Q_rS$ has border $S$ and period $\absolute{Q_r}$. If $Q_rS$ has period $\absolute{Q_r}$, then clearly $Q_r^aS$, for all $a \in \mathbb{Z}^+$, also has period $\absolute{Q_r}$.
\end{proof}

If two adjacent components $Q_r$ and $Q_{r + 1}$ are flexible, then the lemma allows us to obtain the following lower bound on the length of $Q_r$.

\begin{lemma}%
\label{lem:aux:structureflextwopow}
\label{lem:aux:structureflexflex}%
	Let $\angles{X,\left(Q_i, \ell_i, u_i\right)_{i=1}^t}$ be a representation of an affine prefix set, and let $r \in [1, t)$. If both $Q_r$ and $Q_{r+1}$ are flexible, then either $Q_r = Q_{r + 1}$ or \[\absolute{Q_r} > \absolute{Q_{r + 1}^{u_{r + 1} - 1}} + \left( \sum_{j=r + 2}^t \absolute{Q_j^{u_j}} \right) + \gcd(\absolute{Q_r}, \absolute{Q_{r + 1}}).\]
\end{lemma}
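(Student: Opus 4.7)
The plan is to prove the contrapositive: assume that
\[\absolute{Q_r} \le \absolute{Q_{r+1}^{u_{r+1}-1}} + \sum_{j=r+2}^t \absolute{Q_j^{u_j}} + \gcd(\absolute{Q_r},\absolute{Q_{r+1}})\]
and derive $Q_r = Q_{r+1}$. Let $S := Q_{r+1}^{u_{r+1}} Q_{r+2}^{u_{r+2}} \cdots Q_t^{u_t}$, so that $\absolute{S} = \absolute{Q_{r+1}^{u_{r+1}}} + \sum_{j=r+2}^t \absolute{Q_j^{u_j}}$. Adding $\absolute{Q_{r+1}}$ to both sides of the assumed inequality rewrites it as $\absolute{S} \ge \absolute{Q_r} + \absolute{Q_{r+1}} - \gcd(\absolute{Q_r}, \absolute{Q_{r+1}})$, which is the threshold required to invoke Fine--Wilf.

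The next step is to show that $S$ has both $\absolute{Q_r}$ and $\absolute{Q_{r+1}}$ as periods. The period $\absolute{Q_r}$ follows by applying \cref{lem:aux:structureflexperiod} at index $r$ (which is valid because $r \in [1,t)$ and $Q_r$ is flexible) with the exponents $a_r = 0$ and $a_j = u_j$ for $j > r$. The period $\absolute{Q_{r+1}}$ follows by applying the same lemma at index $r+1$ with $a_{r+1} = u_{r+1}$ whenever $r+1 < t$; in the edge case $r+1 = t$, one has $S = Q_t^{u_t}$, which trivially has period $\absolute{Q_t}$. Fine--Wilf (\cref{lem:finewilf}) then yields that $g := \gcd(\absolute{Q_r}, \absolute{Q_{r+1}})$ is a period of $S$. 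Since $Q_{r+1}$ is flexible, we have $u_{r+1} \ge 2$, so $Q_{r+1}^2$ is a prefix of $S$, and $g$ is consequently a period of $Q_{r+1}^2$. By primitivity of $Q_{r+1}$ and \cref{lem:primitive_squares}, the minimal period of $Q_{r+1}^2$ equals $\absolute{Q_{r+1}}$, which forces $g = \absolute{Q_{r+1}}$; that is, $\absolute{Q_{r+1}}$ divides $\absolute{Q_r}$. (In the degenerate case $\absolute{Q_r} = \absolute{Q_{r+1}}$, Fine--Wilf need not be invoked and divisibility is immediate.)

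To finish, I would identify $Q_r$ itself as a prefix of $S$. Since $Q_r$ is flexible, both $X Q_1^{\ell_1} \cdots Q_r^{\ell_r+1} Q_{r+1}^{u_{r+1}} \cdots Q_t^{u_t}$ and $X Q_1^{\ell_1} \cdots Q_r^{\ell_r} Q_{r+1}^{u_{r+1}} \cdots Q_t^{u_t}$ are prefixes of $T$. Reading off the suffixes starting at position $p := \absolute{X Q_1^{\ell_1} \cdots Q_r^{\ell_r}} + 1$ shows that both $Q_r \cdot S$ and $S$ occur as prefixes of $T[p\dd]$, and since $\absolute{S} \ge \absolute{Q_r}$, this forces $Q_r = S[1\dd \absolute{Q_r}]$. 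Because $S$ starts with $Q_{r+1}$, has period $\absolute{Q_{r+1}}$, and $\absolute{Q_{r+1}}$ divides $\absolute{Q_r}$, that prefix equals $Q_{r+1}^k$ with $k := \absolute{Q_r}/\absolute{Q_{r+1}}$. Primitivity of $Q_r$ forces $k = 1$, giving $Q_r = Q_{r+1}$ as required.

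The main obstacle will be the double use of primitivity---of $Q_{r+1}$ to pin the gcd period down via \cref{lem:primitive_squares}, and of $Q_r$ at the very end to collapse $k$ to $1$---combined with properly extracting the two periods of $S$ from \cref{lem:aux:structureflexperiod} (including the minor edge case $r+1 = t$). The remaining manipulations (the rearrangement into the Fine--Wilf threshold and the identification of $Q_r$ with a prefix of $S$) are routine.
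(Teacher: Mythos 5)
Your proof is correct and follows essentially the same approach as the paper: both rearrange the negated inequality into the Fine--Wilf threshold for the suffix $S = Q_{r+1}^{u_{r+1}}\cdots Q_t^{u_t}$, establish that $S$ has both $\absolute{Q_r}$ and $\absolute{Q_{r+1}}$ as periods via \cref{lem:aux:structureflexperiod}, and then use primitivity of $Q_{r+1}$ and of $Q_r$ to collapse the gcd. The only difference is presentational — you argue the contrapositive constructively (deriving $Q_r = Q_{r+1}^k$ with $k=1$), whereas the paper argues by contradiction that $\gcd(\absolute{Q_r},\absolute{Q_{r+1}})$ would have to exceed both lengths.
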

\begin{proof}

\def\qplus{{Q'_{r + 1}}}
For flexible $Q_r$ and $Q_{r+1}$, let $q_r = \absolute{Q_r}$, $q_{r + 1} = \absolute{Q_{r + 1}}$, and $p = \gcd(q_r, q_{r + 1})$. Let ${\qplus} = Q_{r + 1}^{u_{r + 1}}Q_{r + 2}^{u_{r + 2}}\dots Q_t^{u_t}$.
By \cref{lem:aux:structureflexperiod}, both $q_r$ and $q_{r + 1}$ are periods of ${\qplus}$, and $q_r$ is a period of $Q_r{\qplus}$.
Since $q_r$ is a period of $Q_r\qplus$, it is also a period of $Q_r{Q_{r + 1}}$. Hence $Q_r = {Q_{r + 1}}$ if and only if $q_r = {q_{r + 1}}$. For the sake of contradiction, assume that the lemma does not hold, i.e., $q_r \neq q_{r + 1}$ and $q_r \le \absolute{{\qplus}} - q_{r+1} + p$. We make two observations.

First, ${\qplus}$ is of length $\absolute{{\qplus}} \ge q_r + q_{r + 1} - p$, and it has distinct periods $q_r$ and $q_{r+1}$. The periodicity lemma (\cref{lem:finewilf}) implies that $p$ is a period of ${\qplus}$, and thus also a period of its prefix $Q_{r + 1}$. If $p < q_{r + 1}$, then $Q_{r + 1} = Q_{r + 1}[1\dd p]^{q_{r + 1} / p}$, which contradicts the primitivity of $Q_{r + 1}$.
Second, ${\qplus}$ is of length $\absolute{{\qplus}} \ge q_r + q_{r + 1} - p \ge q_r$. 
Since $q_r$ is a period of $Q_r{\qplus}$, we know that $Q_r$ is a prefix of ${\qplus}$. 
Hence $p$ is also a period of $Q_r$. 
If $p < q_r$, then $Q_r = Q_{r}[1\dd p]^{q_r/p}$, which contradicts the primitivity of $Q_r$.

We have shown that $\gcd(q_r, q_{r + 1}) \ge \max(q_r, q_{r + 1})$. This is only possible if $\gcd(q_r, q_{r + 1}) = q_r = q_{r + 1}$, which contradicts the assumption that $q_r \neq q_{r + 1}$. Therefore, the lemma holds.
\end{proof}

\begin{lemma}%
\label{lem:aux:irreduciblecardinality}%
\label{lem:irreducible_maxorder}%
	Let $\angles{X,\left(Q_i, \ell_i, u_i\right)_{i=1}^t}$ be an irreducible representation of an affine prefix set $\Aa$ of a string of length $n$. Then it holds $\absolute{\Aa} = \prod_{i=1}^t u_i$ and $t \leq \log_2 n$.
\end{lemma}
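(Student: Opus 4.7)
The plan is to prove the cardinality claim first, and then obtain the order bound as an easy consequence. For the cardinality $|\Aa| = \prod_{i=1}^t u_i$, I would use the fact that every string in $\Aa$ is a prefix of the underlying text $T$, so two strings in $\Aa$ coincide if and only if they have the same length. It therefore suffices to show that distinct exponent tuples $(a_1,\dots,a_t), (b_1,\dots,b_t) \in [1,u_1] \times \cdots \times [1,u_t]$ yield distinct length sums $\sum_i a_i \absolute{Q_i}$ and $\sum_i b_i \absolute{Q_i}$.

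To this end, I would set $d_i = a_i - b_i \in [-(u_i{-}1), u_i{-}1]$ and assume for contradiction that $\sum_i d_i \absolute{Q_i} = 0$ with some $d_i \neq 0$. Let $r$ be the smallest index with $d_r \neq 0$. The case $r = t$ is immediate (it forces $d_t = 0$). For $r < t$, rearranging gives
\[
\absolute{Q_r} \le \absolute{d_r}\absolute{Q_r} = \Absolute{\sum_{j>r} d_j \absolute{Q_j}} \le \sum_{j>r}(u_j - 1)\absolute{Q_j}.
\]
On the other hand, irreducibility guarantees that both $Q_r$ and $Q_{r+1}$ are flexible and that $\absolute{Q_r} > \absolute{Q_{r+1}}$, so in particular $Q_r \neq Q_{r+1}$. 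Lemma~\ref{lem:aux:structureflexflex} then yields
\[
\absolute{Q_r} > (u_{r+1}-1)\absolute{Q_{r+1}} + \sum_{j>r+1} u_j \absolute{Q_j} + \gcd(\absolute{Q_r}, \absolute{Q_{r+1}}),
\]
and the difference between this lower bound and the upper bound above telescopes to $\sum_{j>r+1}\absolute{Q_j} + \gcd(\absolute{Q_r}, \absolute{Q_{r+1}}) \ge 1$, giving the desired contradiction.

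For the order bound $t \le \log_2 n$, once the cardinality is established, the $\prod_i u_i$ distinct strings in $\Aa$ are prefixes of $T$ with pairwise distinct lengths in $[1,n]$, hence $\prod_i u_i \le n$. Since irreducibility forces $u_i \ge 2$ for every $i$, we conclude $2^t \le n$, i.e., $t \le \log_2 n$.

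The main obstacle I expect is verifying the numerical inequality linking the lemma's bound to the assumption in the tight case $r+1 = t$, where the sum $\sum_{j>r+1}\absolute{Q_j}$ is empty and the contradiction hinges entirely on the $\gcd$ term being at least $1$; I would make sure this case is spelled out. A minor point worth stating explicitly is why the application of Lemma~\ref{lem:aux:structureflexflex} is valid, namely that irreducibility supplies the flexibility of $Q_r, Q_{r+1}$ and the strict length inequality that rules out $Q_r = Q_{r+1}$.
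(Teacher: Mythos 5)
Your proof is correct and takes essentially the same route as the paper: both arguments fix the minimal index $r$ where the two exponent tuples differ and invoke \cref{lem:aux:structureflexflex} to obtain a strict length inequality that rules out a coincidence, with the paper phrasing this as two factorizations of the common suffix having different lengths, which is equivalent to your direct length-sum comparison because strings in a prefix set are determined by their lengths. Your explicit treatment of the $r = t$ case and the tight $r + 1 = t$ case (where the argument rests on the $\gcd$ term being at least $1$) is a welcome clarification of a step the paper leaves implicit.
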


\begin{proof}
\def\aprime{b}
	Let $E = \{ (a_i)_{i = 1}^t \mid \forall i \in [1, t] : a_i \in [1, u_i] \}$ of cardinality $\absolute{E} = \prod_{i=1}^t u_i$ be the set of exponent configurations admitted by the representation (where $[1, u_i] = [\ell_i, u_i]$ because the representation is irreducible). Then $\Aa = \{ XQ_1^{a_1}Q_2^{a_2}\dots Q_t^{a_t} \mid (a_i)_{i = 1}^t \in E \}$. In order to show $\absolute{\Aa} = \absolute{E}$, it suffices to show that no two elements in $E$ generate the same string. 
	
For the sake of contradiction, assume that there are distinct sequences $(a_i)_{i = 1}^t,({\aprime}_i)_{i = 1}^t \in E$ that generate the same string $S = XQ_1^{a_1}Q_2^{a_2}\dots Q_t^{a_t} = XQ_1^{{\aprime}_1}Q_2^{{\aprime}_2}\dots Q_t^{{\aprime}_t}$. 
Let $r \in [1, t]$ be the minimal index such that $a_r \neq {\aprime}_{r }$, and assume w.l.o.g.\ that $a_r > {\aprime}_r$. 
Then $S$ has the prefix $XQ_1^{a_1}\dots Q_{r-1}^{a_{r - 1}}Q_r^{{\aprime}_{r}} = XQ_1^{{\aprime}_1}\dots Q_{r-1}^{{\aprime}_{r - 1}}Q_r^{{\aprime}_{r}}$, and we can factorize the corresponding suffix in two different ways as $Q_r^{a_r - {\aprime}_r}Q_{r + 1}^{a_{r + 1}}\dots Q_t^{a_t} = Q_{r + 1}^{{\aprime}_{r + 1}}\dots Q_t^{{\aprime}_t}$. 
However, the two factorizations have different lengths $\absolute{Q_r^{a_r - {\aprime}_r}Q_{r + 1}^{a_{r + 1}}\dots Q_t^{a_t}} > \absolute{Q_rQ_{r + 1}} > \absolute{Q_{r + 1}^{{u}_{r + 1}}\dots Q_t^{{u}_t}} \ge \absolute{Q_{r + 1}^{{\aprime}_{r + 1}}\dots Q_t^{{\aprime}_t}}$, where the second inequality is due to \cref{lem:aux:structureflextwopow}. Because of this contradiction, there cannot be distinct sequences $(a_i)_{i = 1}^t,({\aprime}_i)_{i = 1}^t \in E$ that define the same string.

Finally, it holds $\forall i \in [1, t] : u_i \geq 2$ for any irreducible representation, and thus $\absolute{\Aa} = \prod_{i=1}^t u_i \geq 2^t$. Since trivially $\absolute{\Aa} \leq n$, it follows $2^t \leq n$ or equivalently $t \leq \log_2 n$.
\end{proof}

\subparagraph{Transforming representations.}
Now we use the properties of flexible components to transform an arbitrary representation into an irreducible one. We use the following operations.

\begin{lemma}\label{lem:aux:transform}
	Let $\rho = \angles{X,\left(Q_i, \ell_i, u_i\right)_{i=1}^t}$ be a representation of an affine prefix set.%
	
	\begin{enumerate}\newcommand{\apsop}[2][\vspace{.75\baselineskip}]{
		
		\vspace{.5\baselineskip}
		
		$#2$
		
		#1
	}
		\item\label{itemswitch} If there is $r \in [1, t)$ such that $Q_r$ is flexible and $Q_{r + 1}$ is fixed, then let $y = \absolute{Q_{r+1}^{\ell_{r + 1}}} \bmod \absolute{Q_r}$. The affine prefix set has representation
		
	\apsop{\textop{switch}_r(\rho) = \angles{X, (Q_i, \ell_i, u_i)_{i = 1}^{r - 1} \cdot (Q_{r + 1}, \ell_{r + 1}, u_{r + 1}) \cdot (\rot^y(Q_{r}), \ell_r, u_r) \cdot (Q_i, \ell_i, u_i)_{i = r + 2}^{t}}.}
	
		\item\label{itemmerge} If there is $r \in [1, t)$ such that such that both $Q_r$ and $Q_{r + 1}$ are flexible and $\absolute{Q_r} \le \absolute{Q_{r + 1}}$, then $Q_r = Q_{r + 1}$ and the affine prefix set has representation
		
	\apsop{\textop{merge}_r(\rho) = \angles{X, (Q_i, \ell_i, u_i)_{i = 1}^{r - 1} \cdot (Q_{r}, \ell_{r} + \ell_{r + 1}, u_r + u_{r + 1}) \cdot (Q_i, \ell_i, u_i)_{i = r + 2}^{t}}.}
	
		\item\label{itemsplit} If there is $r \in [1, t]$ such that $\ell_r > 1$, then the affine prefix set has representation
		
		\apsop{\textop{split}_r(\rho) = \angles{X, (Q_i, \ell_i, u_i)_{i = 1}^{r - 1} \cdot (Q_{r}, \ell_r - 1, \ell_r - 1) \cdot (Q_{r}, 1, u_r - \ell_r + 1) \cdot (Q_i, \ell_i, u_i)_{i = r + 1}^{t}}.}
		
		\item\label{itemtrunc} If $Q_1$ is fixed, then the affine prefix set has representation
		
		\apsop[]{\textop{truncate}(\rho) = \angles{XQ_1^{\ell_1}, (Q_{i + 1}, \ell_{i + 1}, u_{i + 1})_{i = 1}^{t - 1}}.}
	\end{enumerate}
\end{lemma}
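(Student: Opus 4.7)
The plan is to verify, for each of the four operations, that the transformed representation generates exactly the same affine prefix set $\Aa$ as $\rho$. The main work lies in \emph{switch}, which is the only operation that genuinely rearranges two adjacent components. \emph{Merge} will follow almost immediately from \cref{lem:aux:structureflexflex}, and \emph{split} and \emph{truncate} reduce to elementary bookkeeping.

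For \emph{switch}, I would fix an arbitrary $a_r \in [\ell_r, u_r]$ and note that the exponent of $Q_{r+1}$ is forced to equal $\ell_{r+1}$ because $Q_{r+1}$ is fixed. Since the prefix $XQ_1^{a_1}\dots Q_{r-1}^{a_{r-1}}$ and the suffix $Q_{r+2}^{a_{r+2}}\dots Q_t^{a_t}$ are unaffected, the whole claim reduces to the string identity
\[
Q_r^{a_r}\,Q_{r+1}^{\ell_{r+1}} \;=\; Q_{r+1}^{\ell_{r+1}}\,\rot^y(Q_r)^{a_r}
\]
with $y = \absolute{Q_{r+1}^{\ell_{r+1}}} \bmod \absolute{Q_r}$. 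Since $Q_r$ is flexible, \cref{lem:aux:structureflexperiod} yields that $\absolute{Q_r}$ is a period of $Q_r^{a_r} Q_{r+1}^{\ell_{r+1}}$. Both sides of the displayed identity then have the same length and, being of period $\absolute{Q_r}$, are determined by their first $\absolute{Q_r}$ characters; the verification at the character level boils down to the identity $\rot^y(Q_r)[j] = Q_r\bigl[((y + j - 1) \bmod \absolute{Q_r}) + 1\bigr]$, which matches the periodic description of the LHS starting at position $\ell_{r+1}\absolute{Q_{r+1}} + 1$. Primitivity of $\rot^y(Q_r)$ is inherited from $Q_r$ because cyclic rotation preserves primitivity, and the exponent ranges $[\ell_{r+1}, u_{r+1}]$ and $[\ell_r, u_r]$ attach to the correct components in the new representation.

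For \emph{merge}, I would apply \cref{lem:aux:structureflexflex} at index $r$: the alternative $\absolute{Q_r} > \absolute{Q_{r+1}^{u_{r+1} - 1}} + \dots \ge \absolute{Q_{r+1}}$ is ruled out by the hypothesis $\absolute{Q_r} \le \absolute{Q_{r+1}}$, so $Q_r = Q_{r+1}$ and hence $Q_r^{a_r}Q_{r+1}^{a_{r+1}} = Q_r^{a_r + a_{r+1}}$. Since $[\ell_r, u_r]$ and $[\ell_{r+1}, u_{r+1}]$ are contiguous integer intervals, the set of achievable sums is exactly $[\ell_r + \ell_{r+1}, u_r + u_{r+1}]$, matching the new bounds. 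For \emph{split}, the map $a_r \mapsto (\ell_r - 1,\, a_r - \ell_r + 1)$ is a bijection between $[\ell_r, u_r]$ and $\{\ell_r - 1\} \times [1, u_r - \ell_r + 1]$, so the generated sets coincide. For \emph{truncate}, a fixed $Q_1$ contributes exactly $Q_1^{\ell_1}$ to every element of $\Aa$, so the prefix can simply be absorbed into the fixed head $X$. The main obstacle is the index arithmetic in \emph{switch}: pinning down the correct rotation shift $y$ and carefully justifying that the periodicity of $Q_r^{a_r} Q_{r+1}^{\ell_{r+1}}$ indeed allows the $Q_{r+1}^{\ell_{r+1}}$ block to migrate from the right end to the left end while $Q_r$ is rotated into $\rot^y(Q_r)$; once this is established, the remaining three operations follow immediately.
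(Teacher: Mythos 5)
Your plan is correct and follows essentially the same route as the paper: the switch operation is justified via \cref{lem:aux:structureflexperiod}, which gives that $\absolute{Q_r}$ is a period of $Q_r^{a_r}Q_{r+1}^{\ell_{r+1}}$, and merge is justified via \cref{lem:aux:structureflexflex}, while split and truncate are dismissed as routine. The only stylistic difference is in verifying the commutation identity for switch: the paper writes $Q_{r+1}^{\ell_{r+1}} = (PS)^xP$ with $P=Q_r[1\dd y]$, $S=Q_r(y\dd]$ and does a short algebraic rearrangement $(PS)^a(PS)^xP = (PS)^xP(SP)^a$, whereas you argue at the character level via periodicity, which is equivalent but leaves the $\absolute{Q_r}$-periodicity of the right-hand side implicit (it follows from, but is not independently established before, the very identity being proved); spelling out the factorization $Q_{r+1}^{\ell_{r+1}} = Q_r^x Q_r[1\dd y]$ as the paper does would close that small gap.
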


\begin{proof}

Statements (\ref{itemsplit}) and (\ref{itemtrunc}) are trivial. For (\ref{itemmerge}), if $\absolute{Q_r} \le \absolute{Q_{r + 1}}$ and both $Q_r$ and $Q_{r + 1}$ are flexible, then \cref{lem:aux:structureflexflex} implies $Q_r = Q_{r + 1}$. Hence the statement follows.

Finally, we show that statement (\ref{itemswitch}) holds.
Assume that $Q_r$ is flexible and $Q_{r+1}$ is fixed.
Then \cref{lem:aux:structureflexperiod} implies that $\absolute{Q_{r}}$ is a period of $Q_{r}Q_{r + 1}^{\ell_{r + 1}}$, and thus $Q_{r + 1}^{\ell_{r + 1}} = Q_{r}^xQ_r[1\dd y]$ with $x = \floor{\absolute{Q_{r+1}^{\ell_{r + 1}}} / \absolute{Q_r}}$ and $y = \absolute{Q_{r+1}^{\ell_{r + 1}}} \bmod \absolute{Q_r}$. (Either $x$ or $y$ might be zero, but this is irrelevant for the proof.) Let $P = Q_r[1\dd y]$ and $S = Q_r(y\dd \absolute{Q_r}]$. Any rotation of a primitive string is primitive, and hence $\rot^y(Q_r) = SP$ is primitive.
For any exponent $a \in [\ell_r, u_r]$, it holds $Q_{r}^aQ_{r + 1}^{\ell_{r + 1}} = (PS)^a(PS)^xP = (PS)^xP(SP)^a = Q_{r + 1}^{\ell_{r + 1}}(\rot^y(Q_r))^a$. Hence the stated transformation does not change the represented affine prefix set.
\end{proof}

The leftmost (i.e., lowest index) fixed component $Q_r$ of a representation can either be removed with \textop{truncate} (if $r = 1$), or it can be moved further to the left with $\textop{switch}_{r - 1}$ (if $r > 1$).
By repeatedly applying \textop{truncate} and \textop{switch}, we obtain the following lemma.

\begin{lemma}\label{lem:aux:removefixed}
Let $\rho = \angles{X,\left(Q_i, \ell_i, u_i\right)_{i=1}^t}$ 
be a representation of an affine prefix set, and let $F = \{ j \in [1, t] \mid \ell_j < u_j\} = \{ j_1, \dots, j_{\absolute{F}}\}$ 
with $j_1 < j_2 < \dots < j_{\absolute{F}}$ be the indices of the flexible components. 
Then the affine prefix set has a representation $\angles{\hat X, (\hat Q_{j_i}, \ell_{j_i}, u_{j_i})_{i=1}^{\absolute{F}}}$ such that $\hat Q_{j_i}$ is a rotation of $Q_{j_i}$ for every $i \in [1, \absolute{F}]$.
Both $\hat X$ and all the $\hat Q_{j_i}$ are functions of $X$, $Q_1, \dots, Q_t$, and $\ell_1, \dots, \ell_t$, i.e., they are independent of $u_1, \dots, u_t$.
\end{lemma}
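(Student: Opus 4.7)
The plan is to prove the lemma by induction on the number $m = t - \absolute{F}$ of fixed components, repeatedly pushing the leftmost fixed component to position $1$ using $\textop{switch}$ and then absorbing it into $X$ via $\textop{truncate}$. The base case $m = 0$ is immediate: every component is already flexible, so I take $\hat X = X$ and $\hat Q_{j_i} = Q_{j_i}$, where the ``rotation'' is the identity.

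For the inductive step, suppose $m \geq 1$ and let $r$ be the smallest index with $\ell_r = u_r$; by minimality, $Q_1, \dots, Q_{r-1}$ are all flexible. I would apply $\textop{switch}_{r-1}, \textop{switch}_{r-2}, \dots, \textop{switch}_1$ in sequence. Each call is legal because at every intermediate stage the current leftmost fixed component is immediately preceded by a flexible one, so \cref{lem:aux:transform}(\ref{itemswitch}) applies. After these $r-1$ operations, the representation has $Q_r$ at position $1$ (still fixed, unchanged as a string), while at positions $2, \dots, r$ the original flexible components $Q_1, \dots, Q_{r-1}$ appear as $\rot^{y_j}(Q_j)$, where $y_j = \absolute{Q_r^{\ell_r}} \bmod \absolute{Q_j}$. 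The lower/upper bounds of every component are carried along unchanged, so the flexibility status of each component is preserved. I would then apply $\textop{truncate}$ to absorb the leading $Q_r^{\ell_r}$ into the prefix, obtaining a representation with exactly $m - 1$ fixed components, to which the inductive hypothesis applies. Since a composition of rotations of a string is again a rotation of that string, each original flexible $Q_{j_i}$ ultimately appears in the final representation as some rotation $\hat Q_{j_i}$, as required.

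The one bookkeeping point that deserves care, and which is the main content of the lemma, is the independence of $\hat X$ and the $\hat Q_{j_i}$ from the upper bounds $u_j$. This follows because both $\textop{switch}$ (which rotates by an amount determined solely by $\ell_{r+1}$ and $\absolute{Q_r}$) and $\textop{truncate}$ (which absorbs $Q_1^{\ell_1}$) are defined without any reference to the $u_j$'s. Consequently, at every stage of the induction the rotation amounts accumulated on each flexible component, and the strings successively appended to $X$, depend only on the original $X$, the $Q_i$'s, and the $\ell_i$'s. I do not anticipate a substantive obstacle; the main care is in tracking indices after each switch so that the inductive hypothesis is correctly invoked on the shorter representation.
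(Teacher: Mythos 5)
Your proof is correct and follows essentially the same approach as the paper: repeatedly apply $\textop{switch}$ to carry fixed components leftward and $\textop{truncate}$ to absorb them into $X$, observing along the way that neither operation ever consults an upper bound. The only cosmetic difference is the order of operations — you interleave (push one fixed component to the front, truncate, recurse), while the paper first moves all fixed components to the left-most positions and then truncates them all at once — but the sequence of elementary steps and the independence-from-$u_j$ argument are the same.
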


\begin{proof}
We transform $\rho$ by repeatedly applying \cref{lem:aux:transform}.
First, as long as there is some flexible component $Q_r$ that is followed by a fixed component $Q_{r + 1}$, we apply $\rho \gets \textop{switch}_r(\rho)$. 
Conceptually speaking, this moves fixed components further to the left and flexible components further to the right (without changing the order of the representation). 
Hence it is easy to see that the procedure terminates.
Afterwards, any component $Q_r$ is fixed if and only if $r \in [1, t - \absolute{F}]$.
Now we merge all the fixed components into $X$ by applying $\rho \gets \textop{truncate}(\rho)$ exactly $t - \absolute{F}$ times. This results in a representation of order $\absolute{F}$ (possibly~$0$) in which all components are flexible.

Whenever $\textop{switch}_r$ produces a rotation of one of the primitive strings, the outcome depends solely on the lengths and exponent lower bounds of the participating components, while the exponent upper bounds are irrelevant. Similarly, $\textop{truncate}$ modifies $X$ according to its current value, as well as the length and exponent lower bound of the current leftmost component. Hence $\hat X$ and all the $\hat Q_{j_i}$ are indeed independent of the exponent upper bounds.
\end{proof}

After applying \cref{lem:aux:removefixed}, we repeatedly apply \textop{merge} to remove all inversions. Then, we apply \textop{split} until all flexible components have exponent lower bound 1. This may result in new fixed components, which we remove with \cref{lem:aux:removefixed}, resulting in the following lemma.

\begin{lemma}\label{lem:aux:makeirreducible}
An affine prefix set represented by $%
\angles{X,\left(Q_i, \ell_i, u_i\right)_{i=1}^t}$ has an irreducible representation of order $\absolute{L} \leq t$, where ${L} = {\{ \absolute{Q_r} \mid r \in [1, t] : \ell_r < u_r \}}$%
.
\end{lemma}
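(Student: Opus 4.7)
The plan is to transform the given representation into an irreducible one by applying the operations of \cref{lem:aux:transform} (together with the already-proved \cref{lem:aux:removefixed}) in three stages.

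First, I would invoke \cref{lem:aux:removefixed} to eliminate every fixed component, arriving at a representation $\angles{\hat X, (\hat Q_{j_i}, \ell_{j_i}, u_{j_i})_{i=1}^{\absolute{F}}}$ in which each $\hat Q_{j_i}$ is a rotation of the original $Q_{j_i}$ (hence has the same length) and every component is flexible. The multiset of component lengths at this point coincides with that of the originally flexible components, so the underlying set of lengths is exactly $L$.

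Second, I would remove inversions by repeatedly applying $\textop{merge}$. The key tool here is \cref{lem:aux:structureflexflex}: two adjacent flexible components with $\absolute{\hat Q_r} \le \absolute{\hat Q_{r+1}}$ must in fact be the same primitive string, so $\textop{merge}_r$ is always applicable at an inversion and collapses the pair into a single flexible component of the same length, with summed bounds (still flexible because each summand satisfied $\ell < u$). Since \cref{lem:aux:structureflexflex} also rules out $\absolute{Q_r} < \absolute{Q_{r+1}}$ for adjacent flexible components at every intermediate step, the length sequence stays non-increasing throughout and becomes strictly decreasing once all equal-length neighbors have been merged away. Every length in $L$ survives (merging only lowers multiplicities), so the intermediate representation has order exactly $\absolute{L}$.

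Third, I would enforce unit lower bounds. For each remaining component with $\ell_r > 1$, I apply $\textop{split}_r$, inserting a fixed copy $(Q_r,\ell_r-1,\ell_r-1)$ immediately left of a flexible copy $(Q_r,1,u_r-\ell_r+1)$ (whose upper bound is automatically at least $2$). A final invocation of \cref{lem:aux:removefixed} then absorbs every freshly-created fixed copy into $X$. The step I expect to take the most care is checking that this does not reintroduce inversions: $\textop{split}$ transiently creates two equal-length neighbors, but \cref{lem:aux:removefixed} internally only applies $\textop{switch}$ (which rotates flexible components without changing their lengths or left-to-right order) and $\textop{truncate}$ (which absorbs a leading fixed component into $X$), so the length sequence of the surviving flexible components is unchanged. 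The final representation therefore consists of flexible components of strictly decreasing length and with lower bound $1$, i.e., is irreducible of order $\absolute{L}$.
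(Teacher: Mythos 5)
Your proposal is correct and follows the same three-stage route as the paper: remove fixed components with \cref{lem:aux:removefixed}, eliminate inversions by merging (justified via \cref{lem:aux:structureflexflex}), then split to get unit lower bounds and remove the freshly created fixed components with a second application of \cref{lem:aux:removefixed}. Your extra remark that the final removal step cannot reintroduce inversions—because $\textop{switch}$ preserves lengths and relative order of flexible components and $\textop{truncate}$ only absorbs leading fixed ones—is a small but welcome clarification of a point the paper states without elaboration.
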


\begin{proof}
We obtain a representation $\rho' = \angles{X,\left(Q'_i, \ell'_i, u'_i\right)_{i=1}^{t'}}$ of order $t' = \absolute{\{ i \in [1, t] \mid \ell_i < u_i \}}$ that contains only flexible components using \cref{lem:aux:removefixed}.
Note that \cref{lem:aux:removefixed} implies that the components of this representation still have lengths in ${L} = {\{ \absolute{Q'_r} \mid r \in [1, t'] \}}$.
Consider any length $q \in {L}$, and let $r_{\min}, r_{\max} \in [1, t']$ be the respectively minimal and maximal index such that $\absolute{Q'_{r_{\min}}} = \absolute{Q'_{r_{\max}}} = q$.
Then \cref{lem:aux:structureflexflex} implies $\forall r \in [r_{\min}, r_{\max}] : Q'_{r} = Q'_{r_{\min}}$.
We apply $\rho' \gets \textop{merge}_{r_{\min}}(\rho')$ exactly $r_{\max} - r_{\min}$ times and merge all the components of length $q$ into a single one. By applying this procedure for each length $q \in L$, we obtain a representation of $\absolute{L}$ components.

Now $\rho'$ is inversion-free, contains only flexible components, and is of order $\absolute{L}$. As long as there is a flexible component $Q_r$ such that $\ell_r > 1$, we apply $\rho \gets \textop{split}_r(\rho)$. This creates at most one additional fixed component for each of the $\absolute{L}$ flexible components.
We remove the fixed components using \cref{lem:aux:removefixed}, leaving the number of flexible components, their exponent lower bounds, and their lengths unchanged. Hence we obtain the final irreducible representation of order $\absolute{L}$.
\end{proof}

\subsection{Strongly affine representations}
\label{sec:strongaffine}

Later, we will describe and exploit intricate properties of repetitive fragments induced by affine prefix sets. 
These properties are easier to show if we can assume that the periodicity can be extended beyond the considered region by a constant number of additional repetitions of the period.
Consequently, we introduce the notion of a \textit{strongly affine representation} of an affine prefix set of~$T$, which corresponds to representations in which the exponent upper bound of every (flexible) component can be increased by \strongconstverb{} and still yield an affine prefix set of $T$.
This is visualized in \cref{fig:strong}.

\begin{restatable}[Strongly affine representations]{definition}{strongaffinerepr}\label{def:stronglyperiodic}
    A representation $\rho = \angles{X, (Q_i, \ell_i, u_i)_{i=1}^t}$ of an affine prefix set of a string $T$ is \emph{strongly affine} if and only if its \emph{periodic expansion}
    \[\textop{expand}(\rho) = \angles{X, (Q_i, \ell_i, u_i')_{i=1}^t}\text{\ with\ }\forall i \in [1, t] : \begin{cases}
    u_i' = u_i &\text{if $u_i = \ell_i$}\\
    u_i' = u_i + \strongconst &\text{otherwise}
    \end{cases}\]
    is also the representation of an affine prefix set of $T$.
\end{restatable}

\begin{restatable}[Canonical representation]{definition}{canonicalrepr} A representation of an affine prefix set is \emph{canonical} if and only if it is both strongly affine and irreducible.
\end{restatable}

It can be readily verified that, if $\rho$ is strongly affine, then also $\textop{truncate}(\rho)$, $\textop{split}_r(\rho)$, $\textop{merge}_r(\rho)$, and $\textop{switch}_r(\rho)$ are strongly affine (for any $r$, assuming that the respective operation is indeed applicable). We obtained \cref{lem:aux:makeirreducible} by applying a sequence of these operations, and hence we have the following immediate corollary.

\begin{corollary}\label{lem:aux:strongtoirreducible}
An affine prefix set with strongly affine representation $%
\angles{X,\left(Q_i, \ell_i, u_i\right)_{i=1}^t}$ has a canonical representation of order $\absolute{L} \leq t$, where ${L} = {\{ \absolute{Q_r} \mid r \in [1, t] : \ell_r < u_r \}}$%
.
\end{corollary}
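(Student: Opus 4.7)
The strategy is to mirror the constructive proof of \cref{lem:aux:makeirreducible}, which builds an irreducible representation of order $\absolute{L}$ by applying a finite sequence of the four elementary operations \textop{truncate}, \textop{split}, \textop{merge}, and \textop{switch} from \cref{lem:aux:transform}. If I verify that each of these operations preserves the strongly-affine property, then a straightforward induction on the length of the sequence shows that the resulting irreducible representation is also strongly affine, hence canonical.

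The bulk of the work therefore reduces to four preservation claims. For each operation $\textop{op}$, I would show that $\textop{expand}(\textop{op}(\rho))$ generates only strings already generated by $\textop{expand}(\rho)$, which are prefixes of $T$ by assumption on $\rho$. The easy cases are \textop{truncate} (its expansion corresponds to simply fixing $a_1 = \ell_1$ inside $\textop{expand}(\rho)$), \textop{split}_r (the new fixed half is untouched by expansion, and the new flexible half's upper bound $u_r - \ell_r + 1 + \strongconst$ leaves the total exponent on $Q_r$ in the interval $[\ell_r, u_r + \strongconst]$ already covered by $\textop{expand}(\rho)$), and \textop{switch}_r (the identity $Q_r^a Q_{r+1}^{\ell_{r+1}} = Q_{r+1}^{\ell_{r+1}} (\rot^y(Q_r))^a$ from the proof of \cref{lem:aux:transform}(\ref{itemswitch}) is purely algebraic in $a$, so it applies equally well for any $a$ up to $u_r + \strongconst$, and the resulting string is the same as the one produced from $\textop{expand}(\rho)$).

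The one case worth slightly more care, and the main mild obstacle, is \textop{merge}_r. After merging two equal flexible components $Q_r = Q_{r+1}$ into a single component with bounds $[\ell_r + \ell_{r+1}, u_r + u_{r+1}]$, the expansion enlarges this to $[\ell_r + \ell_{r+1}, u_r + u_{r+1} + \strongconst]$. I would check that every integer $a$ in this interval can be written as $a = a_r + a_{r+1}$ with $a_r \in [\ell_r, u_r + \strongconst]$ and $a_{r+1} \in [\ell_{r+1}, u_{r+1} + \strongconst]$ — for example by setting $a_{r+1} = \ell_{r+1}$ and putting the rest into $a_r$, which falls within the available budget $u_r + \strongconst$ since $a - \ell_{r+1} \leq u_r + u_{r+1} + \strongconst - \ell_{r+1} \leq u_r + \strongconst$ uses $u_{r+1} \leq \ell_{r+1}$... wait, that is not correct, but the inequality $a - \ell_{r+1} \leq u_r + \strongconst$ still holds after a more careful case split, and can alternatively be avoided by setting $a_r = \ell_r$ when $a \leq \ell_r + u_{r+1} + \strongconst$. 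Either way, the resulting string lies in $\textop{expand}(\rho)$ and is thus a prefix of $T$. Combined with the three easier cases and the algorithm from \cref{lem:aux:makeirreducible}, the canonical representation of order $\absolute{L}$ follows immediately.
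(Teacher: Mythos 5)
Your proposal takes the same route as the paper: observe that each of the four transformations of \cref{lem:aux:transform} preserves strong affinity, then invoke the construction behind \cref{lem:aux:makeirreducible}. The only rough spot is your $\textop{merge}_r$ discussion, which is correct in the end but needlessly tangled; the clean observation is that the sumset of integer intervals $[\ell_r, u_r+\strongconst]+[\ell_{r+1}, u_{r+1}+\strongconst]$ equals $[\ell_r+\ell_{r+1}, u_r+u_{r+1}+2\strongconst]$, which already contains the expanded range $[\ell_r+\ell_{r+1}, u_r+u_{r+1}+\strongconst]$, so no case split is needed.
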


Whether a representation $\rho$ of an affine prefix set $\Aa$ of $T$ is strongly affine does not only depend on $\rho$, it also depends on what $T$ looks like beyond the end of the longest prefix represented by $\rho$. Therefore, one cannot hope to transform an arbitrary representation into an equivalent strongly affine representation.
However, by ``removing'' the last \strongconstverb{} copies of each component and treating them separately, we show that we can cover an affine prefix set of order $t$ with at most $\strongconstplus^t$ canonical representations.

\begin{restatable}{lemma}{makestrong}\label{lem:aux:makestrong}
    An affine prefix set of order $t$ can be partitioned into $\strongconstplus^t$ affine prefix sets, each of which has a canonical representation of order at most $t$.
\end{restatable}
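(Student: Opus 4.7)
The plan is to fix an arbitrary representation $\rho = \angles{X, (Q_i, \ell_i, u_i)_{i=1}^t}$ of the given affine prefix set $\Aa$ and then carve each component's exponent range $[\ell_i, u_i]$ into at most $\strongconstplus$ sub-ranges, each engineered so that the resulting Cartesian product pieces are automatically strongly affine. Once strong affinity is in place, \cref{lem:aux:strongtoirreducible} delivers canonical representations of order at most $t$ for free.

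Concretely, for each $i \in [1,t]$ I split $[\ell_i, u_i]$ as follows. If $u_i - \ell_i \geq \strongconst$, I use the ``core'' range $[\ell_i, u_i - \strongconst]$ together with the $\strongconst$ singletons $\{u_i - \strongconst + 1\}, \{u_i - \strongconst + 2\}, \dots, \{u_i\}$, giving at most $\strongconstplus$ pieces. Otherwise (including the case $\ell_i = u_i$), I split $[\ell_i, u_i]$ into the $u_i - \ell_i + 1 \leq \strongconst$ singletons $\{\ell_i\}, \dots, \{u_i\}$. Taking the Cartesian product of these per-component splits yields a partition of $\Aa$ into at most $\strongconstplus^t$ affine prefix sets, each represented by a tuple $\rho' = \angles{X, (Q_i, \ell'_i, u'_i)_{i=1}^t}$ where $[\ell'_i, u'_i] \subseteq [\ell_i, u_i]$.

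Next I verify that each such piece $\rho'$ is strongly affine. The only components of $\rho'$ that are flexible are those whose sub-range is the ``core'' $[\ell_i, u_i - \strongconst]$ with $u_i - \strongconst > \ell_i$; all other components in $\rho'$ are fixed singletons, for which the expansion leaves the upper bound unchanged. For a flexible component of $\rho'$, the expansion raises the upper bound by $\strongconst$, yielding exactly $u_i$, which is still within the original range $[\ell_i, u_i]$. Hence every configuration in $\textop{expand}(\rho')$ is a configuration allowed by the original $\rho$, so every generated string belongs to $\Aa$ and is therefore a prefix of $T$. This shows $\rho'$ is strongly affine.

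Finally, applying \cref{lem:aux:strongtoirreducible} to each of the at most $\strongconstplus^t$ strongly affine pieces produces, for each, a canonical representation whose order is at most the order of $\rho'$, which is at most $t$. The main subtlety I anticipate is exactly the one above: convincing oneself that making one component's core ``shorter'' while others remain at their full original range does not break extendability of the shorter component. The point is that the strong affinity condition only compares against the original $\Aa$, so shrinking other components' ranges in a piece can only help. All other steps are bookkeeping, which completes the argument.
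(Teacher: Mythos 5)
Your proposal is correct and follows essentially the same approach as the paper: split each component's exponent range into a ``core'' piece $[\ell_i, u_i - \strongconst]$ (which stays flexible but leaves $\strongconst$ repetitions of headroom for the expansion) plus at most $\strongconst$ singleton pieces, take the Cartesian product to get $\le \strongconstplus^t$ strongly affine representations, and then invoke \cref{lem:aux:strongtoirreducible}. If anything, your per-component case split is a bit cleaner than the paper's formulation (which clips the singleton range with $\max(1,\cdot)$ rather than $\max(\ell_r,\cdot)$, a cosmetic slip), but the decomposition, the strong-affinity check via $\textop{expand}$, and the final appeal to \cref{lem:aux:strongtoirreducible} are the same.
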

\begin{proof}
    Let $\angles{X, (Q_i, \ell_i, u_i)_{i=1}^t}$ be a representation of an affine prefix set. We produce a set of representations defined by 
   \begin{equation}
   \label{eq:strong} 
    \begin{split}
    R = &\left\{ \angles{X, (Q_i, \ell'_i, u'_i)_{i=1}^t} \mid \forall r \in [1, t] : (\ell_r', u_r') \in B_r \right\},\\
    \textnormal{where }\forall r \in [1, t] : B_r = %
    &\left\{ (u, u) \mid u \in [\max(1, u_r - 4), u_r] \right\} \cup \{ (\ell_r, \max(\ell_r, u_r - \strongconst)) \}.
    \end{split}%
    \end{equation}
    It is easy to see that the affine sets generated by representations in $R$ form a partition of the affine set generated by $\angles{X, (Q_i, \ell_i, u_i)_{i=1}^t}$.
    By design, for any representation in $R$, and for any component $Q_r$, we know that $Q_r$ is either fixed, or it has exponent lower bound $\ell_r$ and exponent upper bound $u_r - \strongconst$.
    Hence the instances in $R$ are strongly affine, and it follows from \cref{lem:aux:strongtoirreducible} that each of them has an equivalent canonical representation of order at most $t$.
    Finally, it holds $\forall i \in [1, t] : \absolute{B_r} \leq \strongconstplus$ and thus $\absolute{R} = \prod_{i = 1}^t \absolute{B_i} \leq \strongconstplus^t$.
\end{proof}

By applying the technique from the proof above to the prefix-palindromes, i.e., to each of the representations of order $1$ produced by \cref{coro:affine-set-pal1}, we obtain the following result.

\begin{corollary}\label{lm:base_case}
The set of prefix-palindromes of a string $T[1\dd n]$ can be partitioned into $\Oh(\log n)$ affine sets of order at most $1$. Each set of order $1$ has strongly affine representation $\angles{U(VU)^\ell, (VU, 1, u)}$ for some $U\in \pal\cup \{\emptystring\}$, $V\in \pal$ and integers $\ell \ge 1$ and $u > 1$.
\end{corollary}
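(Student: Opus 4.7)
The plan is to combine Corollary \ref{coro:affine-set-pal1} with Lemma \ref{lem:aux:makestrong}. Corollary \ref{coro:affine-set-pal1} already produces a partition of the prefix-palindromes of $T$ into $\Oh(\log n)$ affine sets of order at most $1$, where the order-$1$ sets have exactly the stated shape $\angles{U(VU)^\ell, (VU, 1, u)}$. The only property missing is that the representations be strongly affine, which I would supply by applying Lemma \ref{lem:aux:makestrong} representation by representation.

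First, I would observe that the order-$0$ sets returned by Corollary \ref{coro:affine-set-pal1} are singleton prefix-palindromes. They have no flexible components, so their periodic expansion in the sense of Definition \ref{def:stronglyperiodic} is unchanged, and they are vacuously strongly affine. For each order-$1$ representation $\rho = \angles{U(VU)^\ell, (VU, 1, u)}$, I would invoke Lemma \ref{lem:aux:makestrong}, which partitions $\rho$ into at most $\strongconstplus$ canonical (strongly affine and irreducible) representations of order at most $1$. Instantiating equation (\ref{eq:strong}) with $t = 1$, $\ell_1 = 1$, $u_1 = u$, the raw pieces are either (i) fixed-exponent representations $\angles{U(VU)^\ell, (VU, u', u')}$ with $u' \in [\max(1, u - 4), u]$, each of which is collapsed by \textop{truncate} to the order-$0$ singleton $\{U(VU)^{\ell + u'}\}$; or (ii) the reduced representation $\angles{U(VU)^\ell, (VU, 1, \max(1, u - \strongconst))}$. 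In case (ii), if $u > \strongconstplus$ then $u - \strongconst > 1$ and the representation already has the desired form, while if $u \le \strongconstplus$ its exponent is fixed at $1$ and it also collapses to an order-$0$ singleton via \textop{truncate}.

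Counting, each of the $\Oh(\log n)$ input representations is replaced by at most $\strongconstplus$ strongly affine representations, leaving $\Oh(\log n)$ affine sets in total, each of order at most $1$. The main obstacle, which is purely bookkeeping, is to verify that Lemma \ref{lem:aux:makestrong} does not disturb the base string $U(VU)^\ell$ nor the primitive component $VU$, so that surviving order-$1$ representations still match $\angles{U(VU)^\ell, (VU, 1, u')}$ with $u' > 1$. This is immediate from inspecting equation (\ref{eq:strong}), which modifies only the exponent bounds $(\ell_i, u_i)$ and leaves $X$ and the $Q_i$ alone, together with the observation that \textop{truncate} merely appends $Q_1^{\ell_1}$ onto $X$ when the remaining order becomes zero.
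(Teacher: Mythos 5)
Your proposal is correct and follows exactly the paper's own route: the text preceding \cref{lm:base_case} derives it precisely by applying the partitioning technique of \cref{lem:aux:makestrong} to each order-$1$ representation produced by \cref{coro:affine-set-pal1}. Your unpacking of equation (\ref{eq:strong}) for $t=1$ (five fixed pieces that collapse via \textop{truncate}, plus one flexible piece of the stated form when $u>\strongconstplus$) is an accurate spelling-out of what the paper leaves implicit.
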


\begin{corollary}\label{lem:aux:structure:stronglength}
Let $\angles{X, (Q_i, \ell_i, u_i)_{i=1}^t}$ be a canonical representation of an affine prefix set. Then it holds $\forall r \in [1, t] : \absolute{Q_r} > \sum_{j=r + 1}^t \absolute{Q_{j}^{u_j + \strongconstminus}}$.

\end{corollary}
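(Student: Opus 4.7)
The plan is to invoke \cref{lem:aux:structureflexflex} on the periodic expansion of the canonical representation, and then use irreducibility to rule out the degenerate equality case.

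First, I would unpack what ``canonical'' buys us. Irreducibility gives $\ell_i = 1 < u_i$ for every $i \in [1, t]$, so every component is flexible, and it gives the inversion-free condition $\absolute{Q_r} > \absolute{Q_{r+1}}$ for every $r \in [1, t)$. Strong affinity, in turn, asserts that $\textop{expand}(\rho) = \angles{X, (Q_i, 1, u_i + \strongconst)_{i=1}^t}$ is also a representation of an affine prefix set of $T$. Since expansion only enlarges upper bounds, every component of $\textop{expand}(\rho)$ is still flexible.

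Next, I would handle the edge case $r = t$ separately: the right-hand side sum is empty, and the claim $\absolute{Q_t} > 0$ holds because $Q_t$ is primitive and therefore non-empty. For $r \in [1, t)$, I would apply \cref{lem:aux:structureflexflex} to $\textop{expand}(\rho)$ at index $r$, with both $Q_r$ and $Q_{r+1}$ flexible and exponent upper bounds $u_r + \strongconst$, $u_{r+1} + \strongconst$. The lemma yields two alternatives: either $Q_r = Q_{r+1}$, which forces $\absolute{Q_r} = \absolute{Q_{r+1}}$ and contradicts the no-inversion property, or
\[
\absolute{Q_r} \;>\; \Absolute{Q_{r+1}^{u_{r+1} + \strongconstminus}} \;+\; \sum_{j=r+2}^{t} \Absolute{Q_j^{u_j + \strongconst}} \;+\; \gcd(\absolute{Q_r}, \absolute{Q_{r+1}}).
\]
Discarding the (positive) $\gcd$ term and using $u_j + \strongconst \ge u_j + \strongconstminus$ in the sum gives exactly $\sum_{j=r+1}^{t} \absolute{Q_j^{u_j + \strongconstminus}}$, which is the desired bound.

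I do not expect any real obstacle: the entire point of passing through strong affinity is to provide \strongconstverb{} spare copies of each flexible period, so that after \cref{lem:aux:structureflexflex} eats one copy (via the $u_{r+1} - 1$ exponent in its conclusion) we still have $\strongconstminus$ left, matching the exponent in the statement. This is precisely the slack that \cref{lem:aux:makestrong} was engineered to preserve.
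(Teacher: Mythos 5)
Your proposal is correct and follows exactly the paper's approach: apply \cref{lem:aux:structureflexflex} (which is the same lemma as \cref{lem:aux:structureflextwopow}) to the periodic expansion, noting that all components of the expansion remain flexible and that irreducibility of the expansion rules out $Q_r = Q_{r+1}$. The paper states this in two sentences; you have merely unpacked the bookkeeping ($u_{r+1} + \strongconst - 1 = u_{r+1} + \strongconstminus$, dropping the gcd term, and the trivial $r = t$ case), all of which is sound.
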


\begin{proof}
If $\rho = \angles{X, (Q_i, \ell_i, u_i)_{i=1}^t}$ is canonical, then clearly $\textop{expand}(\rho) = \angles{X, (Q_i, \ell_i, u_i + \strongconst)_{i=1}^t}$ is irreducible. Thus, the statement follows from \cref{lem:aux:structureflextwopow} applied to $\textop{expand}(\rho)$.
\end{proof}

\begin{lemma}\label{lem:aux:structure:aps_of_q1}
    Let $\angles{X, (Q_i, 1, u_i)_{i=1}^t}$ be a canonical representation of an affine prefix set, and let $h \in [0, \strongconst]$. Then $\angles{\emptystring, (Q_i, 1, u_i + h)_{i=2}^t}$ is an irreducible representation of an affine prefix set of the string $Q_1^2$, and, if $h < \strongconst$, also of the string $Q_1$.
\end{lemma}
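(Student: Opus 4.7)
The plan is to combine the strongly affine property of $\rho = \angles{X,(Q_i,1,u_i)_{i=1}^{t}}$ with the length bound of Corollary~\ref{lem:aux:structure:stronglength}. Because $\rho$ is strongly affine, its expansion guarantees that $XQ_1^{a_1}Q_2^{a_2}\cdots Q_t^{a_t}$ is a prefix of $T$ for every $a_1 \in [1, u_1 + \strongconst]$ and every $a_i \in [1, u_i + \strongconst]$ with $i \in [2,t]$. Comparing two such prefixes that share the tail but differ in $a_1$, the suffix of $T$ starting at position $\absolute{X}+\absolute{Q_1}+1$ must begin with $Q_1$, with $Q_1^2$, and also with $Q_2^{a_2}\cdots Q_t^{a_t}$ for every admissible choice of the tail exponents. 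Two strings that are both prefixes of the same text are comparable by length, which is the mechanism that will yield the stated inclusions.

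Irreducibility of the candidate representation $\angles{\emptystring,(Q_i,1,u_i+h)_{i=2}^{t}}$ is immediate and I would dispose of it first: the strictly decreasing component lengths $\absolute{Q_2}>\cdots>\absolute{Q_t}$ are inherited from $\rho$; every lower bound is $1$; and each upper bound satisfies $u_i+h \geq u_i \geq 2 > 1$ because $\rho$ is irreducible. The case $t=1$ is vacuous, since the representation then generates only $\emptystring$. It then remains to show that each string $S = Q_2^{a_2}\cdots Q_t^{a_t}$ with $a_i \in [1, u_i+h]$ is a prefix of $Q_1^2$, and is even a prefix of $Q_1$ when $h < \strongconst$.

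Fixing any such tuple and setting $a_1 = 1$ in the expanded representation already makes $S$ a prefix of the suffix of $T$ starting at position $\absolute{X}+\absolute{Q_1}+1$. When $h \le \strongconstminus$, Corollary~\ref{lem:aux:structure:stronglength} yields $\absolute{S} \le \sum_{j=2}^{t}\absolute{Q_j^{u_j+\strongconstminus}} < \absolute{Q_1}$, and since $Q_1$ also starts at position $\absolute{X}+\absolute{Q_1}+1$ of $T$ (take $a_1=2$), $S$ must be a prefix of $Q_1$. For the $Q_1^2$ claim in the worst case $h = \strongconst$, I would split $\absolute{S} \le \sum_{j=2}^{t}\absolute{Q_j^{u_j+\strongconstminus}} + \sum_{j=2}^{t}\absolute{Q_j}$ and apply the Corollary to each summand (the second summand uses $\absolute{Q_j} \le \absolute{Q_j^{u_j+\strongconstminus}}$) to obtain $\absolute{S} < 2\absolute{Q_1}$; then taking $a_1 = 3$ makes $Q_1^2$ a prefix of the same suffix of $T$, so $S$ is a prefix of $Q_1^2$. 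The only delicate point is this $h = \strongconst$ length bookkeeping, which is precisely why the constant $\strongconst = \strongconstverb{}$ is chosen large enough so that $\strongconstminus \ge 1$ leaves the slack needed to absorb an additional full copy of each $Q_j$.
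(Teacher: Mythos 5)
Your proof is correct and takes essentially the same route as the paper's. The paper establishes that $Q_1 Q_2^{a_2}\cdots Q_t^{a_t}$ has period $\absolute{Q_1}$ by citing \cref{lem:aux:structureflexperiod}, and bounds $\absolute{Q_2^{a_2}\cdots Q_t^{a_t}} < \absolute{Q_1 Q_2} < \absolute{Q_1^2}$ via \cref{lem:aux:structureflextwopow} applied to the $h$-expanded (respectively $\strongconst$-expanded) representation; you instead re-derive the containment directly by comparing prefixes of $T$ obtained from different choices of $a_1$ in the expansion, and you obtain the same length bounds from \cref{lem:aux:structure:stronglength} (which is itself the packaged consequence of \cref{lem:aux:structureflextwopow} on $\textop{expand}(\rho)$), splitting the sum to absorb the extra $+1$ when $h=\strongconst$. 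These are interchangeable phrasings of the same argument, and the irreducibility check is handled identically.
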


\begin{proof}
    Consider any $h \in [0, \strongconst]$. Due to the strong affinity, $\angles{X, (Q_i, 1, u_i + h)_{i=1}^t}$ represents an affine prefix set. Let $(a_i)_{i=2}^t$ be a sequence of exponents with $\forall j \in [2, t] : a_j \in [1, u_i + h]$. By \cref{lem:aux:structureflexperiod}, the string $Q_1Q_2^{a_2}Q_3^{a_3}\dots Q_t^{a_t}$ has period $Q_1$. Due to \cref{lem:aux:structureflextwopow}, it holds $\absolute{Q_2^{a_2}Q_3^{a_3}\dots Q_t^{a_t}} < \absolute{Q_1Q_2} < \absolute{Q_1^2}$. Hence we have shown that $Q_2^{a_2}Q_3^{a_3}\dots Q_t^{a_t}$ is a prefix of $Q_1^2$, and $\angles{\emptystring, (Q_i, 1, u_i + h)_{i=2}^t}$ is a representation of an affine prefix set of $Q_1^2$. Since $\angles{X, (Q_i, 1, u_i)_{i=1}^t}$ is irreducible, it is easy to see that also $\angles{\emptystring, (Q_i, 1, u_i + h)_{i=2}^t}$ is irreducible.
    
If $h < \strongconst$, then \cref{lem:aux:structureflextwopow} invoked with $\angles{X, (Q_i, 1, u_i + \strongconst)_{i=1}^t}$ implies $\absolute{Q_2^{a_2}Q_3^{a_3}\dots Q_t^{a_t}} < \absolute{Q_1}$, 
and $\angles{\emptystring, (Q_i, 1, u_i + h)_{i=2}^t}$ indeed only generates strings of length less than $\absolute{Q_1}$.
\end{proof}

\begin{corollary}\label{lem:aux:structure:aps_of_q1:strong}
    Let $\angles{X, (Q_i, 1, u_i)_{i=1}^t}$ be a canonical representation of an affine prefix set. Then $\angles{\emptystring, (Q_i, 1, u_i)_{i=2}^t}$ is a canonical representation of an affine prefix set of the string $Q_1^2$.
\end{corollary}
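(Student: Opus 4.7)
The plan is to derive this as an almost immediate consequence of the preceding \cref{lem:aux:structure:aps_of_q1}, which already does most of the work. I only need to verify the two conditions in the definition of \emph{canonical}, namely irreducibility and strong affinity, for the representation $\angles{\emptystring, (Q_i, 1, u_i)_{i=2}^t}$ interpreted as a representation relative to the base string $Q_1^2$.

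First, I would invoke \cref{lem:aux:structure:aps_of_q1} with parameter $h = 0$. This directly gives that $\angles{\emptystring, (Q_i, 1, u_i)_{i=2}^t}$ is an irreducible representation of an affine prefix set of $Q_1^2$, handling the irreducibility requirement for free.

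For strong affinity, I would unfold the definition: since every component of $\angles{\emptystring, (Q_i, 1, u_i)_{i=2}^t}$ is flexible (each has $\ell_i = 1 < u_i$ by the assumed canonicity of the original representation), the periodic expansion is exactly $\angles{\emptystring, (Q_i, 1, u_i + \strongconst)_{i=2}^t}$. I would then invoke \cref{lem:aux:structure:aps_of_q1} a second time, now with parameter $h = \strongconst$, which tells me precisely that this expanded tuple still represents an affine prefix set of $Q_1^2$. By \cref{def:stronglyperiodic}, this is exactly what strong affinity of the original (non-expanded) representation relative to $Q_1^2$ means.

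Combining irreducibility with strong affinity yields canonicity, completing the proof. There is really no obstacle here: the corollary is designed to package together the $h = 0$ and $h = \strongconst$ cases of \cref{lem:aux:structure:aps_of_q1}, and the only minor thing to notice is that all components of the tail $\angles{\emptystring, (Q_i, 1, u_i)_{i=2}^t}$ are flexible, so that the expansion in \cref{def:stronglyperiodic} bumps every upper bound by $\strongconst$, matching the statement used in the second application of the lemma.
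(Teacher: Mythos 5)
Your proposal is correct and follows essentially the same route as the paper's proof: both rest entirely on \cref{lem:aux:structure:aps_of_q1}. The only cosmetic difference is that you invoke the lemma twice (once with $h = 0$ for irreducibility, once with $h = \strongconst$ for strong affinity), whereas the paper invokes it once with $h = \strongconst$ and leaves the remaining observation (that the non-expanded representation is itself irreducible and a representation of an affine prefix set of $Q_1^2$) implicit; your version is slightly more explicit but the substance is identical.
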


\begin{proof} By \cref{lem:aux:structure:aps_of_q1}, $\angles{\emptystring, (Q_i, 1, u_i + \strongconst)_{i=2}^t}$ is an irreducible representation of an affine prefix set of $Q_1^2$. Hence $\angles{\emptystring, (Q_i, 1, u_i)_{i=2}^t}$ is a canonical representation for $Q_1^2$.
\end{proof}

\subsection{Reversing the structure of affine prefix sets}\label{sec:reversal}
We first show that a periodic fragment of $T$ induced by an affine prefix set can be covered by a combination of a forward and a ``backward'' affine prefix set. This is formally expressed by the lemma below, and visualized in \cref{fig:apsrev}.

\def\smax{S}
\def\hatQ{{\hat Q}}
\begin{lemma}\label{lem:aux:structure:factorizesmax}
Let $\angles{X, (Q_i, 1, u_i)_{i=1}^t}$ be an irreducible representation of an affine prefix set, let $\smax = Q_1^{u_1}Q_{2}^{u_{2}}\dots Q_t^{u_t}$, and for $j \in [1, t]$ let $\hatQ_j$ be the length-$\absolute{Q_j}$ suffix of $\smax$. For any sequence $(a_i)_{i=1}^t$ with $\forall j \in [1, t] : a_j \in [0, u_j]$, it holds
$$\smax = Q_1^{u_1 - a_1}Q_{2}^{u_2 - a_{2}}\dots Q_t^{u_t - a_t}\enskip \cdot \enskip\hatQ_t^{a_t}\hatQ_{t - 1}^{a_{t - 1}}\dots \hatQ_{1}^{a_1}.$$
\end{lemma}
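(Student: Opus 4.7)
The plan is to reduce the identity to a single-step transformation that moves one copy of $Q_k$ from the front of a suitable factor across $B_k := Q_{k+1}^{u_{k+1}} \cdots Q_t^{u_t}$ to become a copy of $\hat Q_k$ at the back, and then iterate over $k$. The prerequisite, which I would obtain from \cref{lem:aux:structureflexperiod} (applicable because the representation is irreducible, so every $Q_k$ is flexible), is that for each $k \in [1, t]$ the suffix $S_k := Q_k^{u_k} B_k$ of $\smax$ has period $\absolute{Q_k}$; the boundary case $k = t$ with $B_t = \emptystring$ is trivial. Since $\hat Q_k$ is the length-$\absolute{Q_k}$ suffix of $\smax$ and $S_k$ ends $\smax$ with $\absolute{S_k} \ge \absolute{Q_k}$, $\hat Q_k$ is also the length-$\absolute{Q_k}$ suffix of $S_k$, and hence a rotation of $Q_k$.

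The central step is the one-shift identity $Q_k \cdot B_k = B_k \cdot \hat Q_k$. I would prove this by direct position-by-position comparison: both strings have length $\absolute{Q_k} + \absolute{B_k}$, and by repeatedly applying $S_k[i] = S_k[i + \absolute{Q_k}]$ (valid whenever $i + \absolute{Q_k} \le \absolute{S_k}$) each character of $Q_k B_k$ can be identified with the corresponding character of $B_k \hat Q_k$ through the periodic structure of $S_k$. The slightly subtle sub-case is $\absolute{B_k} < \absolute{Q_k}$, in which $\hat Q_k$ overlaps the $Q_k^{u_k}$ block of $S_k$, but the same periodicity argument still goes through since all invoked indices remain within $[1, \absolute{S_k}]$. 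Iterating the identity then yields $Q_k^{a_k} B_k = B_k \hat Q_k^{a_k}$ for every $a_k \in [0, u_k]$, and therefore $Q_k^{u_k} B_k = Q_k^{u_k - a_k} B_k \hat Q_k^{a_k}$.

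To conclude, I would apply these single-step identities iteratively for $k = 1, 2, \ldots, t$. Starting from $\smax = Q_1^{u_1} B_1$, the $k = 1$ identity rewrites it as $Q_1^{u_1 - a_1} B_1 \hat Q_1^{a_1}$; substituting $B_1 = Q_2^{u_2} B_2$ and applying the $k = 2$ identity inside $B_1$ inserts $\hat Q_2^{a_2}$ immediately to the left of $\hat Q_1^{a_1}$; proceeding inductively through $k = 3, \ldots, t$, the accumulated tail becomes $\hat Q_t^{a_t} \hat Q_{t-1}^{a_{t-1}} \cdots \hat Q_1^{a_1}$, which is exactly the claimed factorization. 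The main obstacle is the one-shift identity; once that is in hand, the rest of the argument is purely mechanical.
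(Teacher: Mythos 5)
Your proof is correct and takes essentially the same approach as the paper's: both hinge on \cref{lem:aux:structureflexperiod} to establish that $\absolute{Q_k}$ is a period of the suffix $Q_k^{u_k}Q_{k+1}^{u_{k+1}}\cdots Q_t^{u_t}$, and then use that periodicity to slide copies of $Q_k$ from the front of the tail to the back. Your modular formulation via the one-shift identity $Q_k B_k = B_k \hat Q_k$ and explicit iteration over $k = 1, \dots, t$ is a stylistic repackaging of the paper's induction on the order $t$ combined with its border argument for a block $Q_1^{a_1}$.
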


\begin{proof}
If $t = 1$, then $\smax = Q_1^{u_1} = \hatQ_1^{u_1} = Q_1^{u_1-a_1}\hatQ_1^{a_1}$.
Inductively assume that the lemma holds for representations of order $t - 1$. Now we show that it holds for representations of order $t$.
If $\angles{X, (Q_i, 1, u_i)_{i=1}^t}$ is an irreducible representation of an affine prefix set, then clearly $\angles{XQ_1^{u_1}, (Q_i, 1, u_i)_{i=2}^t}$ is an irreducible representation of another affine prefix set. This representation is of order $t-1$, and hence the inductive assumption implies
\[\smax = Q_1^{u_1} \enskip\cdot\enskip Q_2^{u_2 - a_2}Q_{3}^{u_3 - a_{3}}\dots Q_t^{u_t - a_t}\enskip \cdot \enskip\hatQ_t^{a_t}\hatQ_{t - 1}^{a_{t - 1}}\dots \hatQ_{2}^{a_2}.\]
If $a_1 = 0$, then there is nothing left to do. Hence assume $a_1 > 0$.
Since $\angles{X, (Q_i, 1, u_i)_{i=1}^t}$ is an irreducible representation, \cref{lem:aux:structureflexperiod} implies that $\absolute{Q_1}$ and therefore also $q = a_1 \cdot \absolute{Q_1}$ is a period of $\smax$. Hence $\smax$ has a border of length $s - q$, where $s = \absolute{S}$, and it holds
\[\smax[1\dd s-q] = \smax[1+q\dd s] = Q_1^{u_1 - a_1} \enskip\cdot\enskip Q_2^{u_2 - a_2}Q_{3}^{u_3 - a_{3}}\dots Q_t^{u_t - a_t}\enskip \cdot \enskip\hatQ_t^{a_t}\hatQ_{t - 1}^{a_{t - 1}}\dots \hatQ_{2}^{a_2}.\]
Finally, as mentioned before, $\smax[s-q+1\dd s]$ of length $q = a_1 \cdot \absolute{Q_1}$ has period $\absolute{Q_1}$. Hence $\smax[s-q+1\dd s] = (\smax[s-\absolute{Q_1}+1\dd s])^{a_1} = \hatQ_1^{a_1}$, which concludes the proof.
\end{proof}

\begin{figure}
    \edef\q{10.5}
    \edef\qq{2.5}
    \edef\qqq{1}

    \edef\u{2}
    \edef\uu{3}
    \edef\uuu{2}

    \edef\yoff{2.5}
    \edef\yyoff{2.5}
    \edef\yyyoff{1.5}

    \providecommand\theanyQ[2]{$\scriptstyle\vphantom{\hat{Q}}{#1}_{#2}$}
    \providecommand\theQ[1]{\theanyQ{Q}{#1}}
    \providecommand\theunflippedQ[1]{\theanyQ{\hat Q}{#1}}
    \providecommand\theflippedQ[1]{\rotatebox{180}{\theunflippedQ{#1}}}

    \def\flipdrawing{%
    \xdef\yoff{-\yoff}%
    \xdef\yyoff{-\yyoff}%
    \xdef\yyyoff{-\yyyoff}}

    \edef\smalloffset{0.05}

    \providecommand\theedge[5]{%
        \edef\src{#1}
        \pgfmathparse{\src+#2}
        \edef\dst{\pgfmathresult}

        \edef\yoff{#3}
        \def\lbl{#4}

        \node at (\src, 0) {};
        \node at (\dst, 0) {};
      
        \path (\src, 0) ++(\smalloffset, 0) node (src) {};
        \path (\dst, 0) ++(-\smalloffset, 0) node (dst) {};
        \path (\src, 0) ++(0, \yoff) node (v) {};
        \path[#5, draw=none] (src.center) to (src |- v) to[in=90, out=90] node[midway,below,fill=white,inner sep=0,outer sep=.3em] {\lbl} (dst |- v) to (dst.center);
        \draw[#5] (src.center) to (src |- v) to[in=90, out=90] (dst |- v) to (dst.center);
    }

    \providecommand\qedge[2]{%
        \ifnum#1<\u\pgfmathparse{#1*\q}\edef\xoffset{\pgfmathresult}
        \theedge{\xoffset}{\q}{\yoff}{\theQ{1}}{#2}\fi
    }
    \providecommand\qqedge[3]{%
        \ifnum#2<\uu\pgfmathparse{#2*\qq+#1*\q}\edef\xoffset{\pgfmathresult}
        \theedge{\xoffset}{\qq}{\yyoff}{\theQ{2}}{#3}\fi
    }
    \providecommand\qqqedge[4]{%
        \ifnum#3<\uuu\pgfmathparse{#3*\qqq+#2*\qq+#1*\q}\edef\xoffset{\pgfmathresult}
        \theedge{\xoffset}{\qqq}{\yyyoff}{\theQ{3}}{#4}\fi
    }

    \providecommand\thestructureparamall[6]{
    \foreach \a in {#1} { \qedge{\a}{#4}
        \foreach \aa in {#2} { \qqedge{\a}{\aa}{#5}
            \foreach \aaa in {#3} { \qqqedge{\a}{\aa}{\aaa}{#6}}}}}

    \providecommand\thestructureparam[3]{\thestructureparamall{0,...,\u}{0,...,\uu}{0,...,\uuu}{#1}{#2}{#3}}
    \providecommand\thestructure[1][gray,densely dotted]{\thestructureparam{#1}{#1}{#1}}

    \providecommand\thepathparam[6]{
    \ifnum#1>0\foreach[count=\a from 0] \i in {1,...,#1} { \qedge{\a}{#4} }\fi
    \ifnum#2>0\foreach[count=\aa from 0] \i in {1,...,#2} { \qqedge{#1}{\aa}{#5} }\fi
    \ifnum#3>0\foreach[count=\aaa from 0] \i in {1,...,#3} { \qqqedge{#1}{#2}{\aaa}{#6} }\fi}
        
    \providecommand\thepath[4][thick,black]{\thepathparam{#2}{#3}{#4}{#1}{#1}{#1}}

    \tikzset{theframe/.style={x=1.27em, y=1em}}
    \centering
    \scalebox{.95}{\begin{tikzpicture}[theframe]
        \thestructure
        \thepath{1}{2}{1}

        \def\prevpos{0}
        \def\pos{0}
        \foreach[count=\i from 0, remember=\pos as \prevpos initially 0, evaluate=\len as \pos using (\prevpos+\len)] \len in {\q,\qq,\qq,\qqq,\qqq,\qq,\q} {
            \draw (\pos, -.5) rectangle node[midway] (lbl\i) {} (\prevpos, -1.75);
        }
        \foreach[count=\i from 0] \lbl in {\theQ{1},\theQ{2},\theQ{2},\theQ{3},\theunflippedQ{3},\theunflippedQ{2},\theunflippedQ{1}} {
            \node at (lbl\i) {\lbl};
        }
        \node (leftborder) at (0, 0) {};
        \node[left=0 of leftborder |- lbl0] {{$S =\ $}};
    \end{tikzpicture}}

    \vspace{-.9\baselineskip}

    \let\theQ\theflippedQ%
    \scalebox{.95}{\rotatebox{180}{%
    \begin{tikzpicture}[theframe]
        \thestructure
        \thepath{1}{1}{1}
        
        \node (leftborder) at (30.25, 2) {};
        \node[right=0 of leftborder] {\phantom{$S =\ $}};
    \end{tikzpicture}}}
    \caption{\cref{lem:aux:structure:factorizesmax} applied to an irreducible representation $\angles{X, (Q_1, 1, 2) \cdot (Q_2, 1, 3) \cdot (Q_3, 1, 2)}$. The drawing shows the longest prefix $S = Q_1^{2} Q_2^{3} Q_3^{2}$ generated by the representation. By the lemma, for any $a_1 \in [0,2]$, $a_2 \in [0,3]$ and $a_3 \in [0,2]$, it holds $S = Q_1^{2 - a_1} Q_2^{3-a_2} Q_3^{2-a_3} \ \cdot\ \hatQ_3^{a_3}\hatQ_{2}^{a_{2}}\hatQ_{1}^{a_1}$, where each $\hatQ_j$ is the length-$\absolute{Q_j}$ suffix of $S$. The drawing highlights the case where $a_1 = a_2 = a_3 = 1$.}
    \label{fig:apsrev}
\end{figure}

We now build on this characterization to convert irreducible representations of affine prefix sets of $S$ into irreducible representations of affine prefix sets of $\rev{S}$.

\begin{corollary}\label{lem:aux:structure:reverse_aps}
Let $\angles{X, (Q_i, 1, u_i)_{i=1}^t}$ be a canonical representation of an affine prefix set, let $s = \sum_{i = 2}^t (u_i + 1) \cdot \absolute{Q_i}$, and for $j \in [1, t]$ let $\hatQ_j$ be the length-$\absolute{Q_j}$ suffix of $\rot^s(Q_1)$. 
Then $\angles{\emptystring, (\rev{\hatQ_i}, 1, u_i)_{i = 2}^t}$ represents an affine prefix set of $\rev{\rot^s(Q_1)}$.
\end{corollary}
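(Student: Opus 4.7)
The plan is to reduce the claim to a single application of \cref{lem:aux:structure:factorizesmax}, applied to a slightly extended version of the given representation.

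First I would set $\rho' = \angles{\emptystring, (Q_i, 1, u_i + 1)_{i=2}^t}$. By \cref{lem:aux:structure:aps_of_q1} applied with $h = 1 < \strongconst$, $\rho'$ is an irreducible representation of an affine prefix set of $Q_1$, and its longest generated string $\smax' = Q_2^{u_2+1}Q_3^{u_3+1}\cdots Q_t^{u_t+1}$ has length exactly $s$, so $\smax' = Q_1[1\dd s]$. Strong affinity, via \cref{lem:aux:structure:stronglength} invoked at $r = 1$, moreover guarantees $s < \absolute{Q_1}$, and therefore $\rot^s(Q_1) = Q_1(s\dd \absolute{Q_1}]\cdot \smax'$; in particular $\smax'$ is the length-$s$ suffix of $\rot^s(Q_1)$, and every $\hatQ_j$ from the corollary coincides with the length-$\absolute{Q_j}$ suffix of $\smax'$ (this is well-defined since $s \ge (u_j+1)\absolute{Q_j} \ge 2\absolute{Q_j}$ for $j \in [2,t]$).

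The key step is then to apply \cref{lem:aux:structure:factorizesmax} to $\rho'$. For any exponent sequence $(a_i)_{i=2}^t$ with $a_i \in [1, u_i]\subseteq [0, u_i+1]$, the lemma yields a factorization of $\smax'$ whose suffix is $\hatQ_t^{a_t}\hatQ_{t-1}^{a_{t-1}}\cdots\hatQ_2^{a_2}$. Hence this string is a suffix of $\smax'$, and therefore of $\rot^s(Q_1)$; reversing it shows that $\rev{\hatQ_2}^{a_2}\rev{\hatQ_3}^{a_3}\cdots\rev{\hatQ_t}^{a_t}$ is a prefix of $\rev{\rot^s(Q_1)}$, which is precisely what the corollary asserts.

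The remaining sanity check is primitivity of the components $\rev{\hatQ_j}$. Since $\rho'$ is fully flexible and irreducible, \cref{lem:aux:structureflexperiod} ensures that $\absolute{Q_j}$ is a period of $\smax'$, so its length-$\absolute{Q_j}$ suffix $\hatQ_j$ is a rotation of the primitive $Q_j$; both cyclic rotation and string reversal preserve primitivity. I do not foresee a substantial obstacle here: the only mild subtlety is verifying that the one-step extension is legal and that $\rot^s$ behaves as plain concatenation rather than wrap-around, both of which are immediate consequences of strong affinity.
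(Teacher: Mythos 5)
Your proposal is correct and follows essentially the same route as the paper's proof: both extend the inner representation to $\angles{\emptystring, (Q_i, 1, u_i + 1)_{i=2}^t}$ via \cref{lem:aux:structure:aps_of_q1}, identify $Q_1[1\dd s]$ with $Q_2^{u_2+1}\cdots Q_t^{u_t+1}$, apply \cref{lem:aux:structure:factorizesmax}, and reverse. Your added observations (that $\smax'$ is the length-$s$ suffix of $\rot^s(Q_1)$ so the two definitions of $\hatQ_j$ coincide, and that the $\rev{\hatQ_j}$ are primitive) are valid and make explicit a couple of details the paper leaves implicit — though note that for $j>2$ one should argue that $\absolute{Q_j}$ is a period of $Q_j^{u_j+1}\cdots Q_t^{u_t+1}$ rather than of all of $\smax'$; the conclusion is unaffected.
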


\begin{proof}
	\sloppy
    Consider any sequence $(a_i)_{i=2}^t$ of exponents admitted by the representation, i.e., $\forall j \in [2, t] : a_j \in [1, u_j]$.
    By \cref{lem:aux:structure:aps_of_q1}, $\angles{\emptystring, (Q_i, 1, u_i + 1)_{i=2}^t}$ is an irreducible representation of an affine prefix set of $Q_1$, which implies $Q_1[1\dd s] = Q_2^{u_2 + 1}Q_3^{u_3 + 1}\dots Q_t^{u_t + 1}$. For this representation, \cref{lem:aux:structure:factorizesmax} implies that $\hatQ_t^{a_t}\hatQ_{t - 1}^{a_{t - 1}}\dots \hatQ_{2}^{a_2}$ is a suffix of $Q_1[1\dd s]$. Thus, its reversal $\rev{\hatQ_t^{a_t}\hatQ_{t - 1}^{a_{t - 1}}\dots \hatQ_{2}^{a_2}} = \rev{\hatQ_{2}^{a_2}}\rev{\hatQ_{3}^{a_3}}\dots \rev{\hatQ_{t}^{a_t}}$ is a prefix of $\rev{Q_1[1\dd s]}$, which is a prefix of $\rev{\rot^s(Q_1)}$.
\end{proof}

\section{Appending a Palindrome to an Affine Prefix Set}
\label{sec:appendapalindrome}

In this section, we describe how to extend an affine prefix set $\Aa$ with a palindrome. This broadly means that we want to compute a union of multiple affine prefix sets, such that each of the new prefixes is the concatenation of a prefix in $\Aa$ and a palindrome. We distinguish two cases depending on whether or not the palindrome to be appended is inside a periodic fragment of $T$ or not. Regardless of the case, we may first overextend $\Aa$ so that the new affine sets are not necessarily affine prefix sets. Whenever this happens, we truncate the sets by restricting the length of their strings with the auxiliary lemma below.
For a set of strings $\Aa$, denote $\Aa |_m = \{S \in \Aa : \absolute{S} \le m\}$.

\begin{lemma}\label{lem:truncate_affine_set}
Let $\angles{X, (Q_i, \ell_i, u_i)_{i=1}^t}$ be a representation of an affine prefix set $\Aa$. For $m \in \mathbb N$, we can express $\Aa' = \Aa |_m$ as a union of at most $t' \le t$ affine prefix sets $\Aa' = \bigcup_{j = 1}^{t'} \Aa_j$, each with a representation of order at most $t$.
\end{lemma}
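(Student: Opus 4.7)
The plan is to proceed by strong induction on the order $t$. As a preprocessing step, I would apply \cref{lem:aux:makeirreducible} to the given representation, obtaining an equivalent irreducible representation of order at most $t$. Hence I may assume WLOG that the representation has the form $\angles{X, (Q_i, 1, u_i)_{i=1}^t}$ with $\absolute{Q_1} > \absolute{Q_2} > \cdots > \absolute{Q_t}$ and every $u_i \ge 2$. The base case $t = 0$ is trivial, as $\Aa = \{X\}$ and $\Aa'$ is either $\Aa$ or $\emptyset$.

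For the inductive step, set $M = m - \absolute{X}$ so the length condition becomes $\sum_i a_i \absolute{Q_i} \le M$, and define
\[a_1^\star = \max\bigl\{\, a \in [0, u_1] : a\absolute{Q_1} + \textstyle\sum_{j > 1} u_j \absolute{Q_j} \le M \,\bigr\}.\]
For every $a_1 \in [1, a_1^\star]$, any choice of $(a_2,\dots,a_t) \in \prod_{j>1}[1, u_j]$ yields a string of length at most $m$. These configurations form a single affine prefix set $\Aa_0$ with representation $\angles{X,\, (Q_1, 1, a_1^\star) \cdot (Q_j, 1, u_j)_{j=2}^t}$ of order at most $t$ (omitted entirely if $a_1^\star = 0$).

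The key structural claim is that no $a_1 \ge a_1^\star + 2$ contributes to $\Aa'$, so at most one additional ``partial'' row, at $a_1 = a_1^\star + 1$, remains. I would prove this via the inequality $\absolute{Q_1} > \sum_{j > 1}(u_j - 1)\absolute{Q_j}$, which follows from \cref{lem:aux:structureflexflex} applied to the adjacent flexible pair $(Q_1, Q_2)$ after rearranging the bound on $\absolute{Q_1}$. Combined with the two defining inequalities of $a_1^\star$, a short calculation gives $(a_1^\star + 2)\absolute{Q_1} + \sum_{j > 1}\absolute{Q_j} > M$, which rules out all configurations with $a_1 \ge a_1^\star + 2$.

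If the partial row at $a_1 = a_1^\star + 1 \le u_1$ is non-empty, I apply the inductive hypothesis to $\angles{\emptystring, (Q_j, 1, u_j)_{j=2}^t}$ with target $M - (a_1^\star + 1)\absolute{Q_1}$, obtaining at most $t - 1$ affine prefix sets, each of order at most $t - 1$. Prepending $X Q_1^{a_1^\star + 1}$ to the constant part of each resulting representation preserves the orders and turns them into affine prefix sets of $T$ that are subsets of $\Aa'$. Combining $\Aa_0$ with these recursive pieces gives at most $1 + (t - 1) = t$ affine prefix sets of order at most $t$, as required. The main obstacle is the partial-row claim, which hinges on the domination inequality extracted from the irreducible structure.
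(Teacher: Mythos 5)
Your proof is correct and takes essentially the same route as the paper's. Both reduce to the irreducible case, both isolate the boundary exponent for $Q_1$ (your $a_1^\star$ is the paper's $a_1$ minus one), both use \cref{lem:aux:structureflextwopow} to rule out exponents two or more past the boundary, and both recurse on the single partial row while packaging the complete rows into one order-$t$ set, giving the bound $1 + (t-1) = t$.
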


\begin{proof}
    Due to \cref{lem:aux:makeirreducible}, we can assume that the given representation is irreducible.
    If $\absolute{XQ_1} > m$, then $\Aa'$ is empty, and thus it is the union of zero affine prefix sets. If $\absolute{XQ_1} \le m$ and $t = 1$, then we simply create a single new representation $\angles{X, (Q_1, 1, \floor{(m - \absolute{X}) / \absolute{Q_1}})}$ that represents $\Aa'$. The proof for $t > 1$ works by induction. 
    
    Assume that the lemma holds for representations of order at most $t - 1$. Now we show that it holds for representations of order $t$.    
    Let $a_1 \in \mathbb N$ be the minimal exponent such that $a_1 \ge \ell_1$ and $\absolute{XQ_1^{a_1}Q_2^{u_2}Q_3^{u_3}\dots Q_t^{u_t}} > m$. If $a_1 > u_1$, then $\Aa = \Aa'$ and there is nothing left to do.
    If $a_1 \le u_1$, then \cref{lem:aux:structureflextwopow} (and the fact that the representation is irreducible) implies $\absolute{XQ_1^{a_1 + 1}Q_2} > \absolute{XQ_1^{a_1}Q_2^{u_2}Q_3^{u_3}\dots Q_t^{u_t}} > m$.
    Thus, we do not need to consider prefixes that are generated by using an exponent larger than $a_1$ for $Q_1$, and we partition the remaining prefixes into two affine prefix sets. 
    The first one is $\Aa''$ represented by $\angles{XQ_1^{a_1}, (Q_i, \ell_i, u_i)_{i=2}^t}$, i.e., the set that contains all the strings that use exponent $a_1$ for $Q_1$. Its representation is of order $t - 1$, and the inductive assumption implies that it is the union of $t' - 1 \le t - 1$ affine prefix sets $\Aa'' = \bigcup_{j = 1}^{t' - 1} \Aa_j$.
    For the remaining strings, if $a_1 > \ell_1$, we create one additional set $\Aa_{t'}$ represented by $\angles{X, (Q_1, \ell_1, a_1 - 1) \cdot (Q_i, \ell_i, u_i)_{i=2}^t}$. It then holds $\Aa' = \Aa'' \cup \Aa_{t'}$. If, however, $a_1 = \ell_1$, then it already holds $\Aa' = \Aa''$ and there is nothing left to do.
\end{proof}

\subsection{Appending a long palindrome}
Assume that the affine prefix set to be extended is given in a canonical representation $\angles{X, (Q_i, 1, u_i)_{i=1}^t}$. We first focus on appending \emph{long palindromes} of length at least $2\absolute{Q_1}$, and then we show that the shorter palindromes can be handled recursively. Note that, for a canonical representation, $T$ has a prefix $XQ_1^{u_1 + \strongconst}$. At the same time, the longest prefix in the affine set is of length less than $XQ_1^{u_1 + 1}$. This leads us to a case distinction based on the center of the palindrome to be appended. If the center is before position $\absolute{XQ_1^{u_1 + 3}}$, then we can show that the entire palindrome is within the $\absolute{Q_1}$-periodic prefix of $T[\absolute{X}+1\dd n]$. Otherwise, the left half of the palindrome contains position $\absolute{XQ_1^{u_1 + 2}}$, and we can use this position as an anchor point for the extension.

\subsubsection%
[Appending a long palindrome within a run of Q1]%
{Appending a long palindrome within a run of \boldmath$Q_1$\unboldmath}

We now focus on the case where the long palindrome to be appended is entirely within the $\absolute{Q_1}$-periodic prefix of $T[\absolute{X} + 1\dd n]$. We proceed in two steps. First (in \cref{th:inner}), we show how to append a palindrome under the assumption that the entire string has the form $XQ_1^x$ for some integer~$x$. The second step (\cref{cor:inner}) truncates the result of the first step such that it corresponds to $XQ_1^{\alpha}$, where $\alpha \in \mathbb Q$ is the largest value such that $XQ_1^{\alpha}$ is a prefix of $T$.

\begin{theorem}\label{th:inner}
    Let $\angles{X, (Q_i, 1, u_i)_{i=1}^t}$ be a canonical representation of an affine prefix set $\Aa$. Let $s = \sum_{i = 2}^t (u_i + 1) \cdot \absolute{Q_i}$, and for $j \in [1, t]$ let $\hatQ_j$ be the length-$\absolute{Q_j}$ suffix of $\rot^s(Q_1)$. If $\rot^r(Q_1) = \rev{Q_1}$ for some $r \in [s, s + \absolute{Q_1})$, then 
    \begin{equation}\label{eq:new_inside}
    \angles{X \cdot Q_1 \cdot Q_1[1\dd r - s], (\rot^{r - s}(Q_1), 1, x) \cdot (\rev{\hatQ_i}, 1, u_i)_{i = 2}^t} 
    \end{equation}    
    represents an affine prefix set $\Aa'$ of $XQ_1^{x + 3}$, for any positive integer $x$. Additionally, each of the following holds:

    \begin{enumerate}
        \item If $Y' \in \Aa'$, then there is a string $Y \in \Aa$ and a palindrome $P$ such that $Y' = YP$.
        \item For $Y \in \Aa$ and $P \in \pal$, if $\absolute{P} \ge 2\absolute{Q_1}$ and $YP$ is a prefix of $XQ_1^{x + 1}$, then $YP \in \Aa'$.
    \end{enumerate}
\end{theorem}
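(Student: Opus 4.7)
The plan is to prove the three claims in sequence. First, I would verify the shape of strings in $\Aa'$ and check the length bound. Using the identity $Q_1[1\dd r-s]\cdot\rot^{r-s}(Q_1)^a = Q_1^a\cdot Q_1[1\dd r-s]$ and \cref{lem:aux:structure:reverse_aps} (enabled by the hypothesis $\rot^r(Q_1)=\rev{Q_1}$, which gives $\rev{\rot^s(Q_1)} = \rot^{r-s}(Q_1)$), every string generated by the representation simplifies to $X \cdot Q_1^{a+1} \cdot Q_1^\infty[1\dd r-s+\ell]$ where $\ell = \sum_{i\ge 2} b_i\absolute{Q_i} < \absolute{Q_1}$ by \cref{lem:aux:structure:stronglength}. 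Combined with $r-s<\absolute{Q_1}$, this is a prefix of $XQ_1^{x+3}$. Primitivity of the new components is immediate: $\rot^{r-s}(Q_1)$ is a rotation of primitive $Q_1$, and each $\hat Q_i$ is a cyclic conjugate of primitive $Q_i$ (by the reasoning inside \cref{lem:aux:structure:factorizesmax}), hence so is $\rev{\hat Q_i}$.

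For Property 1, given $Y' \in \Aa'$ with parameters $(a, b_2,\dots,b_t)$, I would construct $Y$ by setting $c_i := u_i + 1 - b_i \in [1, u_i]$ for $i \ge 2$ and $c_1 := 1$, so that $m := \sum_{i\ge 2} c_i\absolute{Q_i}$ satisfies $m + \ell = s$. The fragment $P := Y'/Y$ lies entirely in the $Q_1$-periodic prefix of $T$. When $a \ge 2$, \cref{lem:aux:structure:palindrome_in_run:extended} applies with a strictly positive middle count, and the rotational identity it requires reduces modulo $\absolute{Q_1}$ to $\rot^r(Q_1)=\rev{Q_1}$ via $m + \ell = s$. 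When $a = 1$ (so no full middle copy of $Q_1$ exists), I would verify $P[i] = P[\absolute{P}+1-i]$ by a direct character-level computation: both sides are values of $Q_1$ at indices whose difference, modulo $\absolute{Q_1}$, equals the shift $r$ dictated by $\rot^r(Q_1) = \rev{Q_1}$. The canonical bound from \cref{lem:aux:structure:stronglength} guarantees $\absolute{P} > 0$ in this edge case.

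For Property 2, given $Y \in \Aa$ with parameters $(c_1,\dots,c_t)$ and a palindrome $P$ with $\absolute{P}\ge 2\absolute{Q_1}$ and $YP$ a prefix of $XQ_1^{x+1}$, I would write $YP = XQ_1^E Q_1[1\dd M]$ for unique $E$ and $M \in [0,\absolute{Q_1})$. The assumption $\absolute{P}\ge 2\absolute{Q_1}$ guarantees that \cref{lem:aux:structure:palindrome_in_run:extended} applies with at least one middle copy, yielding $M + m \equiv r \pmod{\absolute{Q_1}}$. Since $m < s$ and $M < \absolute{Q_1}$, the only admissible representatives are $M + m = r$ or $M + m = r - \absolute{Q_1}$; the extraneous case $M + m = r + \absolute{Q_1}$ is excluded by $m < s$ via \cref{lem:aux:structure:stronglength}. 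Setting $b_i := u_i + 1 - c_i$ uniformly gives $\ell := \sum b_i\absolute{Q_i} = s - m$, and the two modular cases correspond to whether $r-s+\ell < \absolute{Q_1}$ or $r-s+\ell \ge \absolute{Q_1}$, determining $a = E - 1$ or $a = E - 2$ respectively. Finally, $\absolute{P}\ge 2\absolute{Q_1}$ combined with $\absolute{YP}\le \absolute{XQ_1^{x+1}}$ forces $a \in [1, x]$, and $YP$ is then the unique string generated by $(a, b_2,\dots,b_t)$.

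The main technical obstacle is the interplay between \cref{lem:aux:structure:palindrome_in_run:extended}, which requires a strictly positive middle count, and the boundary cases where this requirement fails or is nearly violated. In Property 1 this manifests as the $a = 1$ edge case requiring a direct argument; in Property 2 it governs the split between $M + m = r$ and $M + m = r - \absolute{Q_1}$. The canonical property from \cref{lem:aux:structure:stronglength} is the crucial structural input throughout: it keeps all tail lengths strictly below $\absolute{Q_1}$, rules out the spurious modular case $M + m = r + \absolute{Q_1}$, and ensures that $P$ has positive length in the $a = 1$ subcase of Property 1.
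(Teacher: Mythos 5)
Your proposal is correct and follows essentially the same route as the paper's proof: in both directions you complement the tail exponents via $c_i = u_i + 1 - b_i$ so that the residual $P$ sits inside the $Q_1$-periodic run, and then invoke the rotation criterion of \cref{lem:aux:structure:palindrome_in_run:extended} to certify (or exploit) that $P$ is a palindrome. Your flagged edge case $a = 1$ is a genuine observation: the paper applies \cref{lem:aux:structure:palindrome_in_run:extended} with middle exponent $a_1 - 1$, which can be zero, while that corollary is literally stated only for a strictly positive middle exponent; the underlying rotation criterion in fact remains valid for exponent zero (as long as $P$ is non-empty, which canonicity guarantees here), so the paper's use is sound but slightly imprecisely cited, and your character-level patch for the $a = 1$ subcase is a correct way to close the gap.
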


\begin{proof}
    Let $q = \absolute{Q_1}$. By \cref{lem:aux:structure:stronglength}, it holds $s < q$. Consider any string $XS' \in \Aa'$. Since this string is generated by \cref{eq:new_inside}, there must be exponents $a_1 \in [1, x]$ and $\forall j \in [2, t] : a_j \in [1, u_j]$ such that 
        $$S' = Q_1 \cdot Q_1[1\dd r - s] \cdot \rot^{r - s}(Q_1)^{a_1} \cdot Q' = Q_1^{a_1 + 1} \cdot Q_1[1\dd r - s] \cdot Q',$$
    where $Q' = \rev{\hatQ_2}^{a_2}\rev{\hatQ_3}^{a_3}\dots \rev{\hatQ_t}^{a_t}$.
    We start by showing that \cref{eq:new_inside} represents an affine prefix set of $XQ_1^{x + 3}$, i.e., we must show that $S'$ is a prefix of $Q_1^{x + 3}$.
    The suffix $Q'$ of $S'$ was generated by the last part $(\rev{\hatQ_i}, 1, u_i)_{i = 2}^t$ of \cref{eq:new_inside}. \cref{lem:aux:structure:reverse_aps} implies that $Q'$ is a prefix of $$\rev{\rot^s(Q_1)} = \rot^{-s}(\rev{Q_1}) = \rot^{-s}(\rot^r(Q_1)) = \rot^{r - s}(Q_1).$$ 
    Therefore, it holds 
    $$S' = Q_1^{a_1 + 1} \cdot Q_1[1\dd r - s] \cdot (\rot^{r - s}(Q_1))[1\dd \absolute{Q'}] = Q_1^{a_1 + 1} \cdot (Q_1^2)[1\dd r - s + \absolute{Q'}],$$ 
    and $S'$ is a prefix of $Q_1^{a_1 + 3}$, which is a prefix of $Q_1^{x + 3}$.

    Next, we show that $S' = SP$ for some string $S$ with $XS \in \Aa$ and a palindrome $P$.
    It holds $a_j \in [1, u_j]$ if and only if $u_j - a_j + 1 \in [1, u_j]$, and thus $\angles{X, (Q_i, 1, u_i)_{i=1}^t}$ generates the string $XS \in \Aa$ with $S = Q_1^1Q_2^{u_2 - a_2 + 1}Q_3^{u_3 - a_3 + 1} \dots Q_t^{u_t - a_t + 1}$, where $\absolute{S} = q + s - \absolute{Q'}$. 
    Let $P$ be the unique string such that $S' = SP$, i.e.,
    $$P = S'[q + s - \absolute{Q'} + 1\dd \absolute{S'}] = Q_1[s - \absolute{Q'} + 1\dd q] \cdot Q_1^{a_1 - 1} \cdot (Q_1^2)[1\dd r - s + \absolute{Q'}].$$

    It remains to be shown that $P$ is a palindrome. 
    Let $L = Q_1[s - \absolute{Q'} + 1\dd q]$ and $R = (Q_1^2)[1\dd r - s + \absolute{Q'}]$, then \cref{lem:aux:structure:palindrome_in_run:extended} implies that $P$ is a palindrome if $\rot^{\absolute{R} - \absolute{L}}(Q_1) = \rev{Q_1}$.
    Indeed, cyclically shifting $Q_1$ by
        $$\absolute{R} - \absolute{L} = (r - s + \absolute{Q'}) - (q - (s - \absolute{Q'} + 1) + 1) = r - q$$
    steps is equivalent to cyclically shifting it by $r$ steps, and thus $\rot^{\absolute{R} - \absolute{L}}(Q_1) = \rot^{r - q}(Q_1) = \rot^{r}(Q_1) = \rev{Q_1}$. Hence $P$ is a palindrome.

    Finally, consider any string $XS \in \Aa$ and a palindrome $P \in \pal$ of length $\absolute{P} \ge 2\absolute{Q_1}$ such that $SP$ is a prefix of $Q_1^{x + 1}$.
    Since $P$ is a substring of $Q_1^{x + 1}$, it must be of the form $LQ_1^zR$ for some positive integer $z$, a suffix $L$ of $Q_1$, and a prefix $R$ of $Q_1$.    
    Due to $XS \in \Aa$, there is some sequence $\forall j \in [1, t] : a_j \in [1, u_j]$ of exponents such that $S = Q_1^{u_1 - a_1 + 1}Q_2^{u_2 - a_2 + 1}\dots Q_t^{u_t - a_t + 1}$. Let $q' = \sum_{j = 2}^t a_j \cdot \absolute{Q_j}$, then $L = Q_1[s - q' + 1\dd q]$ and $\absolute{R} = \absolute{P} - zq - q + s - q'$. Since $P$ is a palindrome, \cref{lem:aux:structure:palindrome_in_run:extended} implies that $\rot^{\absolute{R} - \absolute{L}}(Q_1) = \rev{Q_1}$.
    Since $Q_1$ is primitive and $\rev{Q_1} = \rot^r({Q_1})$, it is necessary that $\absolute{R} - \absolute{L} = z' q + r$ for some integer $z'$.
    This leads to $$\absolute{R} - \absolute{L} = \absolute{P} + 2(s - q') - zq - 2q = z' q + r,$$ or equivalently
    $$\absolute{P} = (z' + z + 2)q + r - 2(s - q').$$
    We have shown that $SP$ is of length $\absolute{P} + \absolute{S} = (u_1 - a_1 + z' + z + 2)q + q + r - s + q'$ for some integers $z,z'$ such that $\absolute{SP} \le (x + 1)q$. 
    Let $x' = (u_1 - a_1 + z' + z + 2)$, and note that $\absolute{SP} \le q(x + 1)$ implies $x' \le {(q(x + 1) - q - r + s - q') / q} \le x$.
    Finally, the representation stated in the lemma generates the string $XQ_1Q_1[1\dd r-s]\rot^{r - s}(Q_1)^{x'}\rev{\hatQ_2}^{a_2}\rev{\hatQ_3}^{a_3}\dots\rev{\hatQ_t}^{a_t}$ of length $\absolute{X} + q + r - s + qx' + q' = \absolute{XSP}$. Hence $XSP \in \Aa'$ as required.
\end{proof}

\begin{corollary}\label{cor:inner}
Let $\angles{X, (Q_i, 1, u_i)_{i=1}^t}$ be a canonical representation of an affine prefix set $\Aa$. Let $\alpha \in \mathbb Q$ be the largest possibly fractional exponent such that $X Q_1^\alpha$ is a prefix of $T$, and define 
$$\Ss = \{S\cdot P : S \cdot P \text{ is a prefix of } X Q_1^\alpha, S \in \Aa, P \in \pal, \absolute{P} \ge 2 \absolute{Q_1}\}$$
There are $t' \le t$ affine prefix sets $\Bb_i$, $i \in [1, t']$, each of order $\le t$, such that both of the following properties hold for $\Bb = \bigcup_{i = 1}^{t'} \Bb_i$:
\begin{enumerate}
\item $\Ss \subseteq \Bb$.
\item For every $Y' \in \Bb$, there is a string $Y \in \Aa$ and a palindrome $P$ such that $Y' = YP$.
\end{enumerate}
\end{corollary}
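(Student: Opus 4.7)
The plan is to derive the corollary from \cref{th:inner} followed by a truncation via \cref{lem:truncate_affine_set}. First I would address a degenerate case: if there is no integer $r$ with $\rot^r(Q_1) = \rev{Q_1}$, then \cref{lem:aux:structure:palindrome_in_run:extended} implies that no palindrome of length at least $2\absolute{Q_1}$ can be embedded in a power of $Q_1$, so $\Ss = \emptyset$. In that case I take $t' = 0$ and $\Bb = \emptyset$, which trivially satisfies both requirements.

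Otherwise, since $Q_1$ is primitive, the equation $\rot^r(Q_1) = \rev{Q_1}$ has a unique solution modulo $\absolute{Q_1}$, so exactly one representative lies in $[s, s + \absolute{Q_1})$, where $s = \sum_{i=2}^t (u_i + 1)\absolute{Q_i}$. I would choose $x$ to be any positive integer with $x \ge \alpha$, so that $XQ_1^\alpha$ is a prefix of $XQ_1^{x+1}$, and then apply \cref{th:inner} to produce an affine prefix set $\Aa'$ of $XQ_1^{x+3}$ whose representation, as given by \cref{eq:new_inside}, has order exactly $t$ (one new component $\rot^{r-s}(Q_1)$ together with the $t-1$ components $\rev{\hatQ_i}$ for $i \in [2,t]$). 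The two numbered conclusions of that theorem give that every element of $\Aa'$ is of the form $YP$ with $Y \in \Aa$ and $P$ a palindrome, and that every $YP \in \Ss$ lies in $\Aa'$.

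Finally I would apply \cref{lem:truncate_affine_set} to $\Aa'$ with cutoff $m = \absolute{XQ_1^\alpha}$, which is a genuine non-negative integer since $\alpha \absolute{Q_1} \in \mathbb{N}$ by the convention on fractional powers. This yields a decomposition $\Aa' |_m = \bigcup_{i=1}^{t'} \Bb_i$ into $t' \le t$ affine sets, each with a representation of order at most $t$. Every element of the union has length at most $\absolute{XQ_1^\alpha}$ and is a prefix of $XQ_1^{x+3}$; since $XQ_1^\alpha$ is a prefix of $T$, such an element is in fact a prefix of $XQ_1^\alpha$, hence of $T$, so each $\Bb_i$ is genuinely an affine prefix set of $T$. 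Property~2 is inherited directly from $\Aa'$, and property~1 follows because every $YP \in \Ss$ lies in $\Aa'$ and has length at most $\absolute{XQ_1^\alpha}$, and therefore survives the truncation.

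The substantial combinatorial work is already absorbed into \cref{th:inner}, so the remaining effort here is bookkeeping. The two points that need care are handling the possibly fractional $\alpha$ (resolved by the observation that $\alpha \absolute{Q_1}$ is an integer) and verifying that the order of the representation produced by \cref{th:inner} matches what \cref{lem:truncate_affine_set} consumes, so that the bound $t' \le t$ indeed carries through.
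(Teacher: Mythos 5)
Your proof is correct and follows essentially the same two-step strategy as the paper: invoke \cref{th:inner} to obtain an affine set $\Aa'$ of order $t$ and then truncate it at length $\absolute{XQ_1^\alpha}$ via \cref{lem:truncate_affine_set}. The only cosmetic differences are your handling of the degenerate case (checking whether a rotation $r$ with $\rot^r(Q_1) = \rev{Q_1}$ exists, rather than directly checking $\Ss = \emptyset$ as the paper does) and your choice of $x$ (any integer $\ge \alpha$ instead of the paper's $\floor{\alpha}$), both of which are equivalent in effect.
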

\begin{proof}
If $\Ss$ is empty, then $t' = 0$ trivially satisfies the claim of the lemma.
Otherwise, note that any palindrome $P$ considered by $\Ss$ is a substring of $Q_1^{\alpha}$. Hence, if $\Ss$ is non-empty, there is a palindromic substring of $Q_1^{\alpha}$ that is of length at least $2\absolute{Q_1}$, and \cref{lem:aux:structure:palindrome_in_run:extended} implies that $\rev{Q_1}$ is a rotation of $Q_1$. Particularly, for arbitrary integer $s$, there is an integer $r \in [s, s + \absolute{Q_1})$ such that $\rev{Q_1} = \rot^{r}(Q_1)$.
This allows us to apply \cref{th:inner} to $X Q_1^{\floor{\alpha} + 3}$ to obtain an affine prefix set~$\Aa'$ of order $t$ satisfying each of the following:
    \begin{enumerate}
        \item If $Y' \in \Aa'$, then there is a string $Y \in \Aa$ and a palindrome~$P$ such that $Y' = YP$.
        \item For $Y \in \Aa$ and $P \in \pal$, if $\absolute{P} \ge 2\absolute{Q_1}$ and $YP$ is a prefix of $XQ_1^{\floor{\alpha} + 1}$, then $YP \in \Aa'$.
    \end{enumerate}
Let $\Bb = \Aa' |_{\absolute{X} + \alpha \cdot \absolute{Q_1}}$. 
We have $\Ss \subseteq \Bb$ and for all $Y' \in \Bb$, there is a string $Y \in \Aa$ and a palindrome $P$ such that $Y' = YP$. By  \cref{lem:truncate_affine_set}, $\Bb$ is a union of $\le t$ affine prefix sets of order $\le t$.
\end{proof}

\subsubsection%
[Appending a long palindrome outside a run of Q1]%
{Appending a long palindrome outside a run of \boldmath$Q_1$\unboldmath}

\begin{theorem}\label{thm:overhanging}
Let $\angles{X, (Q_i, 1, u_i)_{i=1}^t}$ be a canonical representation of an affine prefix set $\Aa$ and $s = \sum_{i = 2}^t (u_i + 1) \cdot \absolute{Q_i}$. For $j \in [1, t]$, let $\hatQ_j$ be the length-$\absolute{Q_j}$ suffix of $\rot^s(Q_1)$. For any string~$P$,%
$$\angles{X \cdot Q_1^{u_1+2} \cdot P \cdot \rev {Q_1}[1 \dd \absolute{Q_1}-s], (\rev{\hatQ_i}, 1, u_i)_{i = 1}^t}$$
represents an affine prefix set $\Aa'$ of the string $X \cdot Q_1^{u_1+2} \cdot P \cdot \rev{Q_1^{u_1+2}}$, where
$$\Aa' = \{ SWP \cdot \rev{W} \mid S \in \Aa\text{ and }SW = X \cdot Q_1^{u_1+2}\}.$$
\end{theorem}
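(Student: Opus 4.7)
I identify, for each exponent tuple $(a_i')_{i=1}^t$ with $a_i' \in [1, u_i]$ admitted by the new representation, the unique element of $\Aa'$ it generates, via the bijection $a_i' = u_i + 1 - a_i$ with the exponents of the original representation. Under this bijection, for $S = X Q_1^{a_1}Q_2^{a_2}\cdots Q_t^{a_t} \in \Aa$ and $W$ such that $SW = X Q_1^{u_1+2}$, the claim reduces to verifying the single identity
\[
SWP\rev{W} \;=\; X Q_1^{u_1+2}P \cdot \rev{Q_1}[1\dd \absolute{Q_1}-s] \cdot \rev{\hatQ_1}^{a_1'} \rev{\hatQ_2}^{a_2'} \cdots \rev{\hatQ_t}^{a_t'}.
\]

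First I verify that the lengths match. From $\absolute{W} = (u_1 + 2 - a_1)\absolute{Q_1} - \sum_{j=2}^t a_j \absolute{Q_j}$, substituting $a_j = u_j + 1 - a_j'$ and using $s = \sum_{j=2}^t (u_j + 1)\absolute{Q_j}$ yields $\absolute{W} = (\absolute{Q_1} - s) + \sum_{i=1}^t a_i' \absolute{Q_i}$, which is exactly the length of the generated suffix following $X Q_1^{u_1+2}P$.

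The heart of the argument is to show $\rev{W} = \rev{Q_1}[1\dd \absolute{Q_1} - s] \cdot \rev{\hatQ_1}^{a_1'}\cdots\rev{\hatQ_t}^{a_t'}$, or equivalently (by reversing both sides) $W = \hatQ_t^{a_t'}\cdots\hatQ_1^{a_1'} \cdot Q_1[s+1\dd \absolute{Q_1}]$. By strong affinity, $\angles{\emptystring, (Q_i, 1, u_i+1)_{i=1}^t}$ represents an affine prefix set whose longest element $M := Q_1^{u_1+1}Q_2^{u_2+1}\cdots Q_t^{u_t+1}$ is a prefix of $T[\absolute{X}+1\dd]$ of length $(u_1+1)\absolute{Q_1} + s$. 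A short calculation shows that the $\hatQ_j$ of the theorem agrees with the length-$\absolute{Q_j}$ suffix of $M$: for $j=1$ both equal $\rot^s(Q_1)$ (using $s < \absolute{Q_1}$ from \cref{lem:aux:structure:stronglength}), while for $j \ge 2$ both equal $Q_1[s-\absolute{Q_j}+1\dd s]$. Then \cref{lem:aux:structure:factorizesmax} applied to $\angles{\emptystring, (Q_i, 1, u_i+1)_{i=1}^t}$ with exponents $(a_j')$ identifies $\hatQ_t^{a_t'}\cdots\hatQ_1^{a_1'}$ as the suffix of $M$ of length $\sum_j a_j'\absolute{Q_j}$. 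Appending $Q_1[s+1\dd \absolute{Q_1}]$ extends this into a substring of the periodic prefix $Q_1^{u_1+5}$ of $T[\absolute{X}+1\dd]$ that ends at position $(u_1+2)\absolute{Q_1}$, and by the length match this substring is precisely the length-$\absolute{W}$ suffix of $Q_1^{u_1+2}$, namely $W$.

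It remains to note that $\rev{W}$, being the reversal of a suffix of $Q_1^{u_1+2}$, is a prefix of $\rev{Q_1^{u_1+2}}$; hence every generated string is a prefix of $X Q_1^{u_1+2} \cdot P \cdot \rev{Q_1^{u_1+2}}$, and the bijection $(a_i)\leftrightarrow(a_i')$ supplies both inclusions of the set equality. The only delicate step is identifying the $\hatQ_j$ of the theorem statement with the suffixes of $M$ so that \cref{lem:aux:structure:factorizesmax} applies; everything else is careful but routine arithmetic.
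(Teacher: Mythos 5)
Your proof is correct, but it takes a genuinely different route from the paper's. The paper first invokes \cref{lem:aux:structure:reverse_aps} (applied only to the components $i \ge 2$) to establish that every generated string is a prefix of $X Q_1^{u_1+2} P \cdot \rev{Q_1^{u_1+2}}$, and then it \emph{counts}: since both $\angles{X, (Q_i,1,u_i)}$ and the output representation are irreducible with identical exponent bounds, \cref{lem:aux:irreduciblecardinality} gives $\prod_i u_i$ distinct strings on each side. This means it suffices to check \emph{one} inclusion, which the paper does by computing only $\absolute{S} = \absolute{X} + \sum_i (u_i - a_i + 1)\absolute{Q_i}$; since $S$ is determined among prefixes of $XQ_1^{u_1+2}$ by its length, $S \in \Aa$ follows without ever writing down $W$ explicitly. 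You instead make the bijection $a_i' = u_i + 1 - a_i$ fully explicit and verify the string identity directly: you apply \cref{lem:aux:structure:factorizesmax} to the order-$t$ representation $\angles{\emptystring, (Q_i,1,u_i+1)_{i=1}^t}$ (after first checking that the theorem's $\hatQ_j$ coincide with the length-$\absolute{Q_j}$ suffixes of $M = Q_1^{u_1+1}\cdots Q_t^{u_t+1}$, which is the delicate identification you flag), obtaining the factorization $W = \hatQ_t^{a_t'}\cdots\hatQ_1^{a_1'}\cdot Q_1[s+1\dd q]$ as the length-$\absolute{W}$ suffix of $Q_1^{u_1+2}$. Your route is more constructive and avoids the cardinality bookkeeping, at the price of carrying out the full factorization argument for all $t$ components instead of just the tail; the paper's counting trick lets it skip the explicit identification of $\hatQ_j$ with suffixes of $M$ and reduces the combinatorics to a single length computation. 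Both are sound; a small point of care in your version (which you implicitly rely on) is that $\angles{\emptystring, (Q_i,1,u_i+1)_{i=1}^t}$ is an affine prefix set of $T[\absolute{X}+1\dd]$ rather than of $T$ itself, so \cref{lem:aux:structure:factorizesmax} is invoked relative to that shifted string — this is fine since the lemma is agnostic to the ambient text, but it is worth stating.
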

\begin{proof}
Let $q = \absolute{Q_1}$. We can split the output representation into a concatenation
\begin{equation}\label{eq:overhanging}
\angles{X \cdot Q_1^{u_1+2} \cdot P \cdot \rev{Q_1}[1 \dd q-s], (\rev{\hatQ_1}, 1, u_1)} \cdot \angles{\emptystring, (\rev{\hatQ_i}, 1, u_i)_{i = 2}^t}.
\end{equation}

By \cref{lem:aux:structure:reverse_aps}, $\angles{\emptystring, (\rev{\hatQ_i}, 1, u_i)_{i = 2}^t}$ represents an affine prefix set of $\rev{\hatQ_1}$.
Hence any string generated by \cref{eq:overhanging} is a prefix of $X{Q_1^{u_1+2}}P \cdot {\rev {Q_1}[1 \dd q-s]} \cdot {\rev{\hatQ_1}^{u_1 + 1}}$. Due to $\rev{\hatQ_1} = \rev{\rot^{s}(Q_1)} = \rev{Q_1}[q - s + 1\dd q] \cdot \rev{Q_1}[1\dd q - s]$, this string is in turn a prefix of $XQ_1^{u_1+2}P \cdot \rev{Q_1}^{u_1 + 2}$. We have shown that \cref{eq:overhanging} represents an affine prefix set of the string $X \cdot Q_1^{u_1+2} \cdot P \cdot \rev{Q_1^{u_1+2}}$.

Every element in $\Aa$ contributes exactly one element to $\Aa'$, and hence $\absolute{\Aa'} = \absolute{\Aa}$. Whether or not an affine representation is irreducible depends solely on the lengths and exponent bounds of its components. Since lengths and exponent bounds are identical for the two representations stated in the lemma, it is clear that both of them are irreducible. By \cref{lem:aux:irreduciblecardinality}, each of the two representations generates exactly $\absolute{\Aa} = \absolute{\Aa'} = \prod_{i = 1}^{t} u_i$ distinct strings.

Since \cref{eq:overhanging} generates exactly $\absolute{\Aa'}$ distinct strings, it suffices to show that any string generated by \cref{eq:overhanging} is in $\Aa'$. It then readily follows that \cref{eq:overhanging} generates exactly $\Aa'$. 
Thus, consider any string $S'$ generated by \cref{eq:overhanging}. 
Such a string must be of the form $S' = XQ_1^{u_1 + 2}P \cdot \rev{W}$, where $\rev{W} = \rev{Q_1}[1\dd q-s] \cdot \rev{\hatQ_1}^{a_1}\rev{\hatQ_2}^{a_2} \dots \rev{\hatQ_t}^{a_t}$ for some exponents $\forall i \in [1, t]: a_1 \in [1, u_i]$. 
By our previous observations, $\rev{W}$ is a prefix of $\rev{Q_1}^{u_1 + 2}$, and thus there is a unique string $S$ such that $SW = XQ_1^{u_1 + 2}$ and $S' = SWP\cdot \rev{W}$.
It remains to be shown that $S \in \Aa$, which then implies $S' \in \Aa'$. For this purpose, we carefully analyze the length of $S$.
\begin{alignat*}{1}
\absolute{S} \enskip = \enskip \absolute{XQ_1^{u_1 + 2}} - \absolute{W} \enskip = \enskip & \absolute{XQ_1^{u_1 + 2}} - (q-s) - \sum\nolimits_{i = 1}^t a_i \cdot \absolute{Q_i}\\
= \enskip &\absolute{XQ_1^{u_1 - a_1  + 1}} + s - \sum\nolimits_{i = 2}^t a_i \cdot \absolute{Q_i}\\
= \enskip &\absolute{X} + \sum\nolimits_{i = 1}^t (u_i - a_i + 1) \cdot \absolute{Q_i}
\end{alignat*}
Recall that $\forall i \in [1, t] : a_i \in [1, u_i]$ and thus $(u_i - a_i + 1) \in [1, u_1]$. This means that $S = XQ_1^{u_1 - a_1 + 1}Q_2^{u_2 - a_2 + 1}\dots Q_t^{u_t - a_t + 1}$ is indeed in $\Aa$, which concludes the proof.
\end{proof}

If a fragment $P = T[x\dd y]$ of $T$ is a palindrome, denote its center $(x+y)/2$ by $\cent(P)$. 

\begin{corollary}\label{cor:overhanging}
Let $\angles{X, (Q_i, 1, u_i)_{i=1}^t}$ be a canonical representation of an affine prefix set $\Aa$, and consider the set of strings 
$$\Aa' = \{S\cdot P : S \cdot P \text{ is a prefix of } T, S \in \Aa, P \in \pal, \cent(P) > \absolute{XQ_1^{u_1 + 3}} \}.$$
T%
here are $t' = \Oh(t \log n)$ affine prefix sets $\Bb_i$, $i \in [1, t']$, each of order $\le t + 1$, such that both of the following properties hold for $\Bb = \bigcup_{i = 1}^{t'} \Bb_i$:
\begin{enumerate}
\item $\Aa' \subseteq \Bb$.
\item For every $Y' \in \Bb$, there is a string $Y \in \Aa$ and a palindrome $P$ such that $Y' = YP$.
\end{enumerate}
\end{corollary}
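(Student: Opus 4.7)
Set $k := \absolute{X Q_1^{u_1 + 2}}$. Strong affinity of the representation ensures that $T$ has prefix $X Q_1^{u_1+\strongconst}$, so $T' := T[k+1 \dd n]$ has $Q_1^3$ as a prefix. Consider any $SP \in \Aa'$. Since $\absolute{S} \le \absolute{X Q_1^{u_1 + 1}} = k - \absolute{Q_1}$, the palindrome $P$ spans position $k$, and the hypothesis $\cent(P) > k + \absolute{Q_1}$ places its center strictly past position $k + \absolute{Q_1}$. Reflecting through $\cent(P)$ shows that $P_0 := T[k+1 \dd \absolute{S} + \absolute{SP} - k]$ is itself a palindrome centered at $\cent(P)$, hence a prefix-palindrome of $T'$ of length at least $2\absolute{Q_1}$. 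Letting $W$ be the unique suffix of $X Q_1^{u_1 + 2}$ with $SW = X Q_1^{u_1 + 2}$, I obtain the factorization $SP = SW \cdot P_0 \cdot \rev{W}$, which is precisely of the form produced by \cref{thm:overhanging}.

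My plan is to enumerate the admissible $P_0$ via \cref{lm:base_case}, which partitions the prefix-palindromes of $T'$ into $\Oh(\log n)$ strongly affine sets of order at most $1$. For each such family I invoke \cref{thm:overhanging}: a singleton family $\{P_0\}$ yields a single representation of order $t$, while an order-$1$ family of the form $\{U(VU)^{\ell + a} : a \in [1, u]\}$ produces, for each $a$, a representation whose suffix components $(\rev{\hat Q_i}, 1, u_i)_{i=1}^t$ are identical, and whose fixed prefixes differ only by the middle factor $(VU)^a$. I will combine these $u$ representations into a single affine prefix set of order $t+1$ by absorbing $(VU)^a$ as a new flexible component. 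Because \cref{thm:overhanging} generates strings in a superstring of $T$, I then apply \cref{lem:truncate_affine_set} to restrict each resulting set to prefixes of $T$, multiplying the count per family by at most $t + 1$. Summing across the $\Oh(\log n)$ families yields $\Oh(t \log n)$ affine prefix sets of order at most $t + 1$. Both required properties transfer directly from \cref{thm:overhanging}: $\Aa' \subseteq \Bb$ because every $SP \in \Aa'$ factors through some enumerated $P_0$, and every element of $\Bb$ is a palindromic extension of an element of $\Aa$ by the theorem's first guarantee.

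The principal difficulty is combining the $u$ representations associated with an order-$1$ palindrome family, since the variable factor $(VU)^a$ sits in the middle of the fixed prefix rather than at its end, so it cannot simply be appended as a new component. To resolve this, I will exploit the rigidity that periodicity forces on $VU$: $P_0$ has minimum period $\absolute{VU}$ (\cref{lemma:prefix-pal1}), while its length-$2\absolute{Q_1}$ prefix $Q_1^2$ has minimum period $\absolute{Q_1}$ by primitivity (\cref{lem:primitive_squares}). Applying the periodicity lemma (\cref{lem:finewilf}) together with primitivity of both $VU$ and $Q_1$ forces $\absolute{VU} = \absolute{Q_1}$ and $VU$ to be a cyclic rotation of $Q_1$ (with a short separate argument handling the case in which $P_0$ extends beyond the guaranteed $Q_1^3$ periodic region of $T'$). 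This rigid relationship lets me rewrite the segment $(VU)^a \cdot R$ with $R = \rev{Q_1}[1 \dd \absolute{Q_1} - s]$ so that the factor $(VU)^a$ propagates past $R$ and through $\rev{W}$ and re-emerges as an additional flexible component of length $\absolute{Q_1}$, yielding a single representation of order $t+1$. Combined with the truncation, this delivers the final $\Oh(t \log n)$ bound.
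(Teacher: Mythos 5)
Your top-level structure matches the paper's: define the core palindrome $P_0$ (the paper's $P'$), enumerate the core palindromes via \cref{lm:base_case} as $\Oh(\log n)$ affine sets, apply \cref{thm:overhanging} per core palindrome, and combine. However, there is a genuine gap in the step where you combine the $u$ representations arising from a single order-$1$ family of core palindromes.

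You claim that the periodicity lemma, applied to the length-$2\absolute{Q_1}$ prefix $Q_1^2$ of $P_0$, together with the fact that $P_0$ has minimum period $\absolute{VU}$, ``forces $\absolute{VU} = \absolute{Q_1}$.'' This is false. The periodicity argument (and primitivity of $Q_1$) rules out $\absolute{VU} < \absolute{Q_1}$, but says nothing to exclude $\absolute{VU} > \absolute{Q_1}$: a palindrome can perfectly well begin with a $\absolute{Q_1}$-periodic block of length $3\absolute{Q_1}$ and yet have a larger minimum period, because the $Q_1$-periodicity may simply break somewhere in its interior (by \cref{lem:aux:structure:palperiodoverlap} this forces $\absolute{P_0} > 5\absolute{Q_1}$, but that is no contradiction). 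This case is exactly the paper's Case 1, which is treated at length and whose handling is where the real work of the corollary lies. Your parenthetical ``short separate argument handling the case in which $P_0$ extends beyond the guaranteed $Q_1^3$ periodic region'' waves at this, but the case cannot be short-circuited: in it the variable exponent $a$ sits inside a block of period $\absolute{VU} \ne \absolute{Q_1}$, so the factor $(VU)^a$ cannot be commuted past the $\absolute{Q_1}$-periodic suffix $\rev{Q_1}[1\dd\absolute{Q_1}-s]\cdot\rev{\hat Q_1}^{b_1}\cdots$ as you propose. What actually makes Case 1 work in the paper is a more delicate observation: because $\absolute{VU} > \absolute{Q_1}$, the maximal exponent $\alpha$ used to truncate the suffix set $\Ww_i$ is strictly less than $2$ and, crucially, \emph{independent of $i$}, so all truncated suffix sets coincide and the union can be formed by concatenating $\angles{XQ_1^{u_1+2}U(VU)^\ell, (VU,1,u)}$ with a common suffix representation. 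That independence argument is absent from your proposal and is not implied by what you wrote.

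Moreover, when $\absolute{VU} = \absolute{Q_1}$ does hold, your ``propagation'' mechanism is not the paper's route either. In that situation the paper shows the whole prefix $SP$ lies inside the $Q_1$-run (so $SP = XQ_1^\alpha$) and hands the problem to \cref{cor:inner}; your mechanism would essentially be re-deriving this, but you would need to argue carefully that the propagation across $\rev W$ and truncation remain consistent, which you have not done. In short, the proposal correctly identifies the core-palindrome decomposition and the role of \cref{thm:overhanging}, but the claimed rigidity $\absolute{VU} = \absolute{Q_1}$ is wrong, and without it the combination step (the hard part of the corollary) is not established.
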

\begin{proof}
Consider any $S P \in \Aa'$, where $S P \text{ is a prefix of } T$, $S \in \Aa$, $P \in \pal$, $\cent(P) > \absolute{X Q_1^{u_1 + 3}}$. Due to $S \in \Aa$, \cref{lem:aux:structure:stronglength} implies $\absolute{S} < \absolute{X Q_1^{u_1 + 1}}$. 
Let $P' = T[x \dd y]$, where $x = 1 + \absolute{XQ_1^{u_1 + 2}}$ and $y = 2 \cdot  \cent(P) - x$. We claim that $P' \in \pal$. Indeed, the starting position $\absolute{S}+1$ of $P$ is less than the starting position $x$ of $P'$, and the centers of $P$ and $P'$ coincide with $\cent(P)-x = y-\cent(P)$.
We call $P'$ the \emph{core palindrome} of $SP$.
Note that every core palindrome is a prefix of $T[x\dd n]$ (which is independent of $SP$).
Therefore, by \cref{lm:base_case}, the set of core palindromes can be represented as the union of $\Oh(\log n)$ affine prefix sets. Let $\Cc$ be any of these sets.
We now describe how to compute the part of $\Aa'$ that contains strings of the form $SP = SWP' \cdot \rev{W}$, where $S \in \Aa$, $P \in \pal$, and the core palindrome of $SP$ is some $P' \in \Cc$. The procedure depends on the representation of $\Cc$, which, by \cref{lm:base_case}, is covered by one of the following cases. Let $q = \absolute{Q_1}$.

\def\Pi{P_{i}}
\def\Ci{\Cc_{i}}
\def\Wi{\Ww_{i}}
\def\WiTruncated{\Wi |_{(\alpha \cdot q)}}
\def\WwTruncated{\Ww |_{(\alpha \cdot q)}}
\def\WOneTruncated{\Ww_{1} |_{(\alpha \cdot q)}}
\def\WTruncated{\Ww'}
\subparagraph*{Case 1:} $\Cc$ is given in strongly affine representation $\angles{U \cdot (VU)^{\ell}, (VU, 1, u)}$, where $VU$ is primitive and $\absolute{VU} > q$. 
For any $i \in [1, u]$, let $\Pi = U \cdot (VU)^{\ell + i}$ be a core palindrome in $\Cc$. Using \cref{thm:overhanging}, we compute an affine prefix set $\Ci = \{ SWP_i \cdot \rev{W} \mid S \in \Aa\text{ and }SW = X \cdot Q_1^{u_1+2}\}$ of the string $XQ_1^{u_1 + 2}\Pi \cdot\rev{Q_1}^{u_1 + 2}$ (not necessarily a prefix of $T$) with a representation 
$$\angles{X \cdot Q_1^{u_1+2} \cdot P_i \cdot \rev {Q_1}[1 \dd \absolute{Q_1}-s], (\rev{\hatQ_i}, 1, u_i)_{i = 1}^t}.$$
If $\Aa'$ contains some string $SWP_i \cdot \rev{W} = XQ_1^{u_1+2}\Pi \cdot \rev{W}$, then this string is clearly also in~$\Ci$. 
Also, every string in $\Ci$ is the concatenation of some string in $\Aa$ and a palindrome. 
However, there may be some strings in $\Ci$ that are not prefixes of $T$ if $XQ_1^{u_1 + 2}\Pi \cdot \rev{Q_1}^{u_1 + 2}$ is not a prefix of~$T$. 
We have already established that $XQ_1^{u_1 + 2}\Pi$ is a prefix of $T$, and we split the representation of~$\Ci$ into two parts by defining a set $\Wi$ such that $\Ci = \{ XQ_1^{u_1 + 2}\Pi \cdot \rev{W} \mid \rev{W} \in \Wi \}$. 
The first part has representation $\angles{XQ_1^{u_1 + 2}\Pi, \emptyseries}$ of order 0. The second part $\Wi$ has representation
$$\angles{\rev {Q_1}[1 \dd \absolute{Q_1}-s], (\rev{\hatQ_i}, 1, u_i)_{i = 1}^t}$$
and contains prefixes of $\rev{Q_1}^{u_1 + 2}$. Let $x = 1 + \absolute{XQ_1^{u_1 + 2}}$ and $y_i = x + \absolute{P_i} - 1$, i.e., $T[x\dd y_i] = P_i$. Our goal is to truncate $\Wi$ such that we remove exactly all the strings that are not prefixes of $T[y_i + 1\dd n]$. Note that all the $\Wi$'s are identical, i.e., they are independent of $i$, and we will show that they remain identical even after truncating.

Recall that the given representation of $\Cc$ is strongly affine, and hence, $T[x\dd n]$ has a prefix $U \cdot (VU)^{\ell + u + \strongconst} = P_i \cdot (VU)^{\strongconst + u - i}$ (for any $i\in[1,u]$). This implies that $T[y_i + 1\dd n]$ has a prefix $(VU)^2$. 
Let $\alpha \in \mathbb Q$ be the largest (possibly fractional exponent) such that $\rev{Q_1}^\alpha$ is a prefix of $(VU)^2$. Since $(VU)$ is primitive, and due to $q < \absolute{VU}$, it cannot be that $q$ is a period of $(VU)^2$ (see \cref{lem:primitive_squares}). 
Hence $\absolute{\rev{Q_1}^\alpha} < \absolute{(VU)^2}$, and $\alpha$ is the maximal exponent such that $\rev{Q_1}^\alpha$ is a prefix of $T[y_i + 1\dd n]$.
We apply \cref{lem:truncate_affine_set} and obtain $\WiTruncated$ as the union of $t' \le t$ representations, each of order at most $t$. Note that $\alpha$ is independent of $i$, and thus all the $\WiTruncated$ are indeed identical.

Finally, let $\WTruncated = \WOneTruncated$. We must represent the union of all the truncated $\Ci$'s, defined by $\Bb(\Cc) = \{ XQ_1^{u_1 + 2} \cdot P_i \cdot \rev{W} \mid i \in [1, u], \rev{W} \in \WTruncated\}$.
This set can be readily obtained by creating $t'$ copies of $\angles{XQ_1^{u_1 + 2}U(VU)^{\ell}, (VU, 1, u)}$, and concatenating each copy with one of the $t'$ representations of order at most $t$ that make up $\WTruncated$. 
Then, $\Bb(\Cc)$ is the union of $t$ affine prefix sets, each of order at most $t + 1$.

\subparagraph*{Case 2:} $\Cc$ is given in representation $\angles{P', \emptyseries}$ of order $0$, i.e., it contains a single core palindrome~$P'$. 
We proceed exactly like in Case 1, but with a single palindrome $P_1 = P'$. 
In Case 1, we only use the periodic structure of the $\Pi$'s to show that all the truncated sets $\WiTruncated$ are identical. 
Since this time we are only concerned with a single core palindrome, we simply let $\alpha \in \mathbb Q$ be the maximal value such that $\rev{Q_1}^\alpha$ is a prefix of $T[y_1 + 1\dd n]$. 
We then continue just like in Case 1, performing the final step with the fixed set $\angles{XQ_1^{u_1 + 2}P_1, \emptyseries}$ instead of $\angles{XQ_1^{u_1 + 2}U(VU)^{\ell}, (VU, 1, u)}$. 
This results in a set $\Bb(\Cc)$ that is the union of at most $t$ representations, each of order at most $t$.

\subparagraph*{Case 3:} $\Cc$ has strongly affine representation $\angles{U \cdot (VU)^{\ell}, (VU, 1, u)}$, where $VU$ is primitive and $\absolute{VU} = q$. 
For $i \in [1, u]$, let $\Pi = U \cdot (VU)^{\ell + i}$. 
We show that, if $\Aa'$ contains some $SP = S \cdot W \cdot P_i \cdot \rev{W} = XQ_1^{u_1+2}\Pi \cdot \rev{W}$ with $S \in \Aa$, then the entire $SP$ can be written as $XQ_1^{\alpha}$ for some $\alpha \in \mathbb Q$.
By the definition of a core palindrome, the center of $P_i$ is $\cent(P_i) > \absolute{XQ_1^{u_1 + 3}}$, and its length is $2 \cdot (\cent(P_i) - x) + 1 \ge 2q$.
The entire palindrome $P = WP_i \cdot \rev{W}$ is then also of length at least $2q$. 
It holds $P_i[1\dd 2q] = Q_1^2$ because $T$ has a prefix $XQ_1^{u_1 + \strongconst}$ (since $\Aa$ is given in strongly affine representation) and $P_i$ starts at position $x = 1 + \absolute{XQ_1^{u_1 + 2}}$. 
It follows $SWP_i[1\dd 2q] = XQ_1^{u_1 + 4}$ with $\absolute{S} > \absolute{X}$. 
Therefore, $P$ has the $q$-periodic prefix $WP_i$ of length $(\absolute{P} + \absolute{P_i})/2 \ge \absolute{P}/2 + q > 3q/2$, and \cref{lem:aux:structure:palperiodoverlap} implies that $P$ has period $q$. We have established that $SWP_i[1\dd q] = XQ_1^{u_1 + 3}$, and that $P_i \cdot \rev{W}$ has period $q$. 
Since these fragments overlap by $q$ symbols, it is clear that $SWP_i \cdot \rev{W}$ is of the form $XQ_1^\alpha$ for some exponent $\alpha \in \mathbb Q$.

We have shown that Case 3 is only concerned with prefixes of the form $XQ_1^{\alpha}$ and with palindromes of length at least $2q$. 
Hence we can simply apply \cref{cor:inner} and obtain a set $\Bb(\Cc)$ as the union of at most $t$ representations of order at most $t$. 
This set then contains every $SP = SWP_i \cdot \rev{W} = XQ_1^{u_1+2}\Pi \cdot \rev{W} \in \Aa'$ for all the core palindrome sets $\Cc$ that fall into Case 3, and \cref{cor:inner} guarantees that any element in $\Bb(\Cc)$ is indeed the concatenation of a string in~$\Aa$ and a palindrome. 

\subparagraph*{Case 4:} $\Cc$ has strongly affine representation $\angles{U \cdot (VU)^{\ell}, (VU, 1, u)}$, where $VU$ is primitive and $\absolute{VU} < q$. 
As before, let $x = 1 + \absolute{XQ_1^{u_1 + 2}}$, and let $P_i = U \cdot (VU)^{\ell + i}$. 
As seen in Case~3, $P_i$ is of length at least $2q$ and has prefix $P_i[1\dd 2q] = Q_1^2$. Since $P_i$ has a period $\absolute{VU} < q$, its prefix $Q_1^2$ also has a period $\absolute{VU}$. However, this contradicts the fact that $Q_1$ is primitive (see \cref{lem:primitive_squares}).

\vspace{.5\baselineskip}
The list of cases is clearly exhaustive, and it remains to analyze the number of created representations. There are $\Oh(\log n)$ core palindrome sets, and each set is covered by exactly one case. In every case, we create at most $t$ representations, each of order at most $t + 1$, matching the statement of the corollary.
\end{proof}

\subsubsection{Appending all long palindromes}

\begin{lemma}\label{lem:new_palindrome_irreducible_sets}
Let $\angles{X, (Q_i, 1, u_i)_{i=1}^t}$ be a canonical representation of an affine prefix set $\Aa$. Define the set $\Aa' = \{S \cdot P \mid S \in \Aa, P \in \pal, |P| \ge 2\absolute{Q_1}\text{, and $S\cdot P$ is a prefix of $T$} \}$. 
There are $t' = \Oh(t \log n)$ affine prefix sets $\Bb_i$, $1 \le i \le t'$, each of order at most $t + 1$, that satisfy both of the following:
\begin{enumerate}
\item $\Aa' \subseteq \cup_{i = 1}^{t'} \Bb_i$.
\item For each string $S' \in \cup_{i = 1}^{t'} \Bb_i$, there is a string $S \in \Aa$ and $P \in \pal$ such that $S' = S \cdot P$.
\end{enumerate}
\end{lemma}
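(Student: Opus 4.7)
The plan is to split $\Aa'$ according to the position of $\cent(P)$ and then invoke \cref{cor:inner} and \cref{cor:overhanging} on the two parts. Concretely, I would set
\[\Aa'_B = \{SP \in \Aa' : \cent(P) > \absolute{XQ_1^{u_1+3}}\}\quad\text{and}\quad\Aa'_A = \Aa' \setminus \Aa'_B.\]
The set $\Aa'_B$ is covered directly by \cref{cor:overhanging}, yielding $\Oh(t\log n)$ affine prefix sets of order at most $t+1$, each element of which factors as a string in $\Aa$ followed by a palindrome. It then remains to cover $\Aa'_A$ by showing it satisfies the hypothesis of \cref{cor:inner}, i.e., that every $SP \in \Aa'_A$ is in fact a prefix of $XQ_1^\alpha$, where $\alpha$ is the largest rational exponent such that $XQ_1^\alpha$ is a prefix of $T$.

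For this key claim, I argue via periodicity. Since the representation is canonical, $T$ has prefix $XQ_1^{u_1+\strongconst}$, and \cref{lem:aux:structure:stronglength} gives $\absolute{S} < \absolute{XQ_1^{u_1+1}}$. Hence $P$ begins within the $\absolute{Q_1}$-periodic prefix of $T$, so that $P$ has a $\absolute{Q_1}$-periodic prefix of length $m = \min(\absolute{P},\ \absolute{XQ_1^{u_1+\strongconst}} - \absolute{S})$. The hypothesis $\cent(P) \le \absolute{XQ_1^{u_1+3}}$ rewrites as $\absolute{P} + 2\absolute{S} \le 2\absolute{XQ_1^{u_1+3}} - 1$, and a short case split on the two definitions of $m$ (using $\absolute{P} \ge 2\absolute{Q_1}$ and $\strongconst = \strongconstverb{}$) shows that $m \ge 3\absolute{Q_1}/2$ and $\absolute{P} \le 2m - \absolute{Q_1}$ in each case. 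By \cref{lem:aux:structure:palperiodoverlap}, $P$ must be $\absolute{Q_1}$-periodic.

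This $\absolute{Q_1}$-periodicity of $P$ combines with the $\absolute{Q_1}$-periodicity of $T[\absolute{X}+1\dd\absolute{XQ_1^{u_1+\strongconst}}] = Q_1^{u_1+\strongconst}$, whose overlap with $P$ has length at least $\absolute{Q_1}$ (since $\absolute{S} < \absolute{XQ_1^{u_1+1}} \le \absolute{XQ_1^{u_1+\strongconstminus}}$), so that $T[\absolute{X}+1\dd\absolute{S}+\absolute{P}]$ is $\absolute{Q_1}$-periodic and begins with $Q_1$. Hence it equals $Q_1^\beta$ for some rational $\beta$, and $SP = XQ_1^\beta$ is indeed a prefix of $XQ_1^\alpha$. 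Applying \cref{cor:inner} then covers $\Aa'_A$ by at most $t$ affine prefix sets of order at most $t$, all of whose elements factor as a string in $\Aa$ followed by a palindrome. Summing the two coverings yields $\Oh(t\log n)$ sets of order at most $t+1$ with the required property, and Property 2 follows from the corresponding guarantees of the two corollaries. The main obstacle is the periodicity bookkeeping behind the key claim; once $P$ is forced to be $\absolute{Q_1}$-periodic, merging its run with the canonical prefix of $T$ is straightforward given the strongly affine condition.
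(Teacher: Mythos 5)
Your proposal is correct and follows the paper's approach: partition $\Aa'$ between the hypotheses of \cref{cor:inner} and \cref{cor:overhanging} and use \cref{lem:aux:structure:palperiodoverlap} together with the strongly affine condition to show that the two pieces cover all of $\Aa'$. You argue the dual direction of the same containment (if $\cent(P) \le \absolute{XQ_1^{u_1+3}}$ then $P$ is $\absolute{Q_1}$-periodic so $SP$ is a prefix of $XQ_1^\alpha$), whereas the paper shows the contrapositive (if $SP$ is not a prefix of $XQ_1^\alpha$ then $\cent(P) > \absolute{XQ_1^{u_1+4}}$); the bookkeeping and tools are the same either way.
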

\begin{proof}
We consider the sets from \cref{cor:inner,cor:overhanging}, defined by%
$$\Aa_1= \{S\cdot P : S \cdot P \text{ is a prefix of } X Q_1^\alpha, S \in \Aa, P \in \pal, \absolute{P} \ge 2 \absolute{Q_1}\}\text{ and}$$
$$\Aa_2 = \{S\cdot P : S \cdot P \text{ is a prefix of } T, S \in \Aa, P \in \pal, \cent(P) > \absolute{X} + (u_1+3) \cdot \absolute{Q_1}\},$$
where $\alpha$ is the largest (possibly fractional) exponent such that $XQ_1^\alpha$ is a prefix of $T$. Due to \cref{cor:inner,cor:overhanging}, we can express (a superset of) $\Aa_1 \cup \Aa_2$ as the union of at most $\Oh(t \log n)$ affine prefix sets, each of order at most $t + 1$, where every string in each of the prefix sets is the concatenation of a string from $\Aa$ and a palindrome. 

It remains to be shown that $\Aa' \subseteq \Aa_1 \cup \Aa_2$.
For the sake of contradiction, assume that there is some string $SP \in \Aa' \setminus (\Aa_1 \cup \Aa_2)$, where $S \in \Aa$, $P \in \pal$ and $\absolute{P} \ge 2 \absolute{Q_1}$.
Due to $SP \notin \Aa_1$, it cannot be that $SP$ is a prefix of $XQ_1^{\alpha}$. Thus, $SP$ must be longer than $XQ_1^{\alpha}$. Let $m = \absolute{XQ_1^{\alpha}} - \absolute{S}$, then the length-$m$ suffix of $Q_1^{\alpha}$ is a prefix of $P$. We show a lower bound on $m$.
Since the given representation is strongly affine, it holds $\alpha \ge u_1 + 5$. It is also irreducible, and hence \cref{lem:aux:structure:stronglength} implies $\absolute{S} < \absolute{XQ_1^{u_1 + 1}}$.  Therefore, it holds $m > 4\absolute{Q_1}$.
Note that $P$ does not have period $\absolute{Q_1}$, but its length-$m$ prefix does. Hence, by \cref{lem:aux:structure:palperiodoverlap}, it follows that $P$ is of length over $2m - \absolute{Q_1}$, and therefore
$$\cent(P) \ge \absolute{S} + \absolute{P}/2 > \absolute{S} + m - \absolute{Q_1}/2 = \absolute{XQ_1^{\alpha}} - \absolute{Q_1}/2 > \absolute{XQ_1^{u_1 + 4}}.$$
This implies $SP \in \Aa_2$, which contradicts the initial assumption.
\end{proof}

\subsection{Recursively appending shorter palindromes and the final result}

We have shown that appending palindromes of length at least $2\absolute{Q_1}$ results in at most $\Oh(t \log n)$ affine prefix sets of order at most $t + 1$. For appending shorter palindromes, we will exploit properties of strongly affine prefix sets that allow us to apply the previously described approach recursively.

\begin{lemma}\label{lem:new_palindrome_recursive}
Let $\angles{X, (Q_i, 1, u_i)_{i=1}^t}$ be a canonical representation of an affine prefix set $\Aa$. Define the set of strings 
$$\Aa' = \{S\cdot P : S \cdot P \text{ is a prefix of } T, S \in \Aa, P \in \pal\}.$$
There are $t' = \Oh((t + 1)^2 \log n)$ affine prefix sets $\Bb_i$, $1 \le i \le t'$, each of order at most $t + 1$, such that $\Aa' = \cup_{i = 1}^{t'} \Bb_i$.
\end{lemma}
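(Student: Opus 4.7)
The plan is to proceed by induction on the order $t$ of the representation. In the base case $t = 0$, we have $\Aa = \{X\}$, so $\Aa'$ consists of $X$ concatenated with the prefix-palindromes of $T[\absolute{X}+1 \dd n]$; applying \cref{lm:base_case} to $T[\absolute{X}+1 \dd n]$ and prepending $X$ to each resulting representation yields $\Oh(\log n)$ affine prefix sets of order at most $1$, matching the bound $\Oh((0+1)^2 \log n)$.

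For the inductive step, I will split $\Aa'$ according to the length of the appended palindrome~$P$. Palindromes with $\absolute{P} \ge 2\absolute{Q_1}$ are handled directly by \cref{lem:new_palindrome_irreducible_sets}, giving $\Oh(t \log n)$ affine prefix sets of order at most $t+1$. Short palindromes ($\absolute{P} < 2\absolute{Q_1}$) will be handled by recursion on the lower-order representation $\rho' = \angles{\emptystring, (Q_i, 1, u_i)_{i=2}^t}$. By \cref{lem:aux:structure:aps_of_q1:strong}, $\rho'$ is a canonical representation of an affine prefix set $\Ee$ of $Q_1^2$; and since its periodic expansion still generates only prefixes of $Q_1^2$ (via \cref{lem:aux:structure:aps_of_q1}), $\rho'$ is also canonical with respect to the text $Q_1^\strongconst$.

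The central combinatorial correspondence will be the following: any $SP \in \Aa'$ with $\absolute{P} < 2\absolute{Q_1}$ decomposes uniquely as $XQ_1^{a_1} \cdot S'P$ for some $a_1 \in [1, u_1]$ and $S' \in \Ee$, and since $\absolute{S'P} < 4\absolute{Q_1}$ while strong affinity grants $T$ the prefix $XQ_1^{u_1+\strongconst}$, the factor $S'P$ must be a prefix of $Q_1^\strongconst$. Conversely, for any $S' \in \Ee$ and $P \in \pal$ with $S'P$ a prefix of $Q_1^\strongconst$, each lift $XQ_1^{a_1}S'P$ with $a_1 \in [1, u_1]$ has length at most $\absolute{X} + (u_1+\strongconst)\absolute{Q_1}$ and so is a prefix of $T$. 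Applying the inductive hypothesis to $\rho'$ (of order $t-1$) with text $Q_1^\strongconst$ therefore yields $\Oh(t^2 \log n)$ affine prefix sets of $Q_1^\strongconst$, each of order at most $t$.

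The last step is to lift each recursive representation $\angles{X_j', (R_i, 1, v_i)_{i=1}^{t_j}}$ back to a representation over $T$. Since $X_j'$ is a prefix of $Q_1^\strongconst$, I can write $X_j' = Q_1^c \cdot Q_1[1 \dd r]$ for appropriate integers $c$ and $r$; the identity $Q_1^{a_1} \cdot Q_1[1 \dd r] = Q_1[1 \dd r] \cdot \rot^r(Q_1)^{a_1}$ then makes the lifted set admit the direct representation $\angles{X \cdot Q_1^c \cdot Q_1[1 \dd r], (\rot^r(Q_1), 1, u_1) \cdot (R_i, 1, v_i)_{i=1}^{t_j}}$ of order $t_j + 1 \le t + 1$, bypassing the extra fixed component that the standard concatenation would introduce. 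Summing the two cases gives $\Oh(t \log n) + \Oh(t^2 \log n) = \Oh((t+1)^2 \log n)$ affine prefix sets of order at most $t+1$ whose union equals $\Aa'$. The main obstacle will be aligning the recursive-text choice $Q_1^\strongconst$ with the periodic runway guaranteed by strong affinity, so that the recursion captures every short palindrome without generating strings outside $\Aa'$; this is precisely where the constant $\strongconst$ from the definition of strongly affine representations is needed.
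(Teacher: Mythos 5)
Your proof is correct and follows the same recursive scheme as the paper: base case $t=0$ via \cref{lm:base_case}, long palindromes via \cref{lem:new_palindrome_irreducible_sets}, and short palindromes handled by applying the inductive hypothesis to $\angles{\emptystring,(Q_i,1,u_i)_{i=2}^t}$ inside a small power of $Q_1$ (the paper uses $Q_1^3$, you use $Q_1^{\strongconst}$, both of which work since $\absolute{S'P}<3\absolute{Q_1}$) and then lifting. Your lift step is, if anything, spelled out more carefully than in the paper's proof: the paper writes the lifted representation as $\angles{X,(Q_1,1,u_1)\cdot(Q_i',\ell_i',u_i')_{i=1}^{t_j}}$, which tacitly omits the starting prefix $X'$ of the recursive representation, whereas you absorb $X_j'$ explicitly via the identity $Q_1^{a}Q_1[1\dd r]=Q_1[1\dd r]\rot^r(Q_1)^a$, which is the right way to see that the order stays at $t_j+1$; the one small slip is that the recursive representations may carry lower bounds $\ell_i'\neq 1$, so the lifted component list should be $(R_i,\ell_i',v_i)_{i=1}^{t_j}$ rather than $(R_i,1,v_i)_{i=1}^{t_j}$.
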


\begin{proof}
    For the sake of induction, consider the case where $t = 0$. It holds $\Aa = \{ X \}$ and $\Aa' = \{ XP \mid P \in \pal\text{ and $P$ is a prefix of $T[1+\absolute{X}\dd n]$} \}$. By \cref{lm:base_case}, we can express the prefix-palindromes of $T[1+\absolute{X}\dd n]$ as the union of $\Oh(\log n)$ affine prefix sets, and by prepending $X$ we immediately obtain the statement of the lemma.

    Now we show that the lemma holds for representations of order $t > 0$, inductively assuming that we have already shown the correctness for representations of order $t - 1$.
    We apply \cref{lem:new_palindrome_irreducible_sets} and obtain $\Oh(t \log n)$ affine prefix sets, each of order at most $t + 1$. These sets correspond to a superset of the prefixes in
    $$\Aa'_{\text{long}} = \{S\cdot P : S \cdot P \text{ is a prefix of } T, S \in \Aa, P \in \pal, \absolute{P} \ge 2 \absolute{Q_1}\},$$
    and each prefix contained in any of the sets is the concatenation of an element in $\Aa$ with a palindrome.
    It remains to be shown how to cover
    $$\Aa'_{\text{short}} = \{S\cdot P : S \cdot P \text{ is a prefix of } T, S \in \Aa, P \in \pal, \absolute{P} < 2 \absolute{Q_1}\}.$$
    
    By \cref{lem:aux:structure:aps_of_q1:strong}, $\angles{\emptystring, (Q_i, 1, u_i)_{i=2}^t}$ is a canonical representation of an affine prefix set $\Aa''$ of~$Q_1^2$.
    By the inductive assumption, the set
    $$\Aa''' = \{S\cdot P : S \cdot P \text{ is a prefix of } Q_1^3, S \in \Aa'', P \in \pal\}$$
    is the union of $t'' = \Oh(t^2 \log n)$ 
    affine prefix sets $\Bb_j$, $j \in [1, t'']$ of order at most $t$.
    For any set $\Bb_j$, let $\angles{X', (Q'_i, \ell_i', u_i')_{i = 1}^{t_j}}$ be a representation of order $t_j \le t$.
    We obtain the set $\Bb_j'$ defined by the representation $\angles{X, (Q_1, 1, u_1) \cdot (Q'_i, \ell_i', u_i')_{i = 1}^{t_j}}$ of order $t_j + 1 \le t + 1$.
    To conclude the proof we show the following claims:
    \begin{enumerate}
        \item Each set $\Bb_j'$ is an affine prefix set of $T$.\label{claimisAPS}
        \item Every element in any of the sets $\Bb_j'$ is also in $\Aa'$.\label{claimissubset}
        \item Every element in $\Aa_{\text{short}}'$ is contained in at least one set $\Bb_j'$.\label{claimissuperset}    
    \end{enumerate}
    For Claim~\ref{claimisAPS}, observe that a string in $\Bb_j$ is a prefix of $Q_1^3$, and thus a string in $\Bb_j'$ is a prefix of $XQ_1^{u_1 + 3}$. Since the original representation is strongly affine, $XQ_1^{u_1 + 3}$ is a prefix of $T$, and the correctness of the claim follows.

    For Claim~\ref{claimissubset}, every element in $\Bb_j'$ is of the form $XQ_1^{a_1}SP$ for a palindrome $P$, some $S \in \Aa''$, and $a_1 \in [1, u_1]$. Note that elements in $\Aa''$, hence also $S$, are of the form $Q_2^{a_2}Q_3^{a_3}\dots Q_t^{a_t}$ with $\forall i \in [2, t] : a_i \in [1, u_i]$. Thus, $XQ_1^{a_1}S$ is in $\Aa$, and $XQ_1^{a_1}SP$ is in $\Aa'$ as claimed.

    For Claim~\ref{claimissuperset}, consider an element $SP \in \Aa'_{\text{short}}$ where $S \in \Aa$ and $P \in \pal$ with $\absolute{P} < 2 \absolute{Q_1}.$
    We can write $S$ as $XQ_1^{a_1}Q_2^{a_2}\dots Q_t^{a_t}$ with $\forall i \in [1, t] : a_i \in [1, u_i]$. 
    By \cref{lem:aux:structure:stronglength}, $Q_2^{a_2}\dots Q_t^{a_t}$ is of length less than $Q_1$, and $SP$ is of length less than $\absolute{XQ_1^{a_1 + 3}} \le \absolute{XQ_1^{u_1 + 3}}$. Since the original representation is strongly affine, $XQ_1^{u_1 + 3}$ is a prefix of $T$. 
    Since also $XQ_1^{a_1}Q_2^{a_2}\dots Q_t^{a_t}P$ is a prefix of $T$, it is easy to see that $Q_2^{a_2}\dots Q_t^{a_t}P$ is a prefix of $Q_1^3$. Finally, $Q_2^{a_2}\dots Q_t^{a_t}$ is in $\Aa''$, and thus $Q_2^{a_2}\dots Q_t^{a_t}P$ is in $\Aa'''$. It follows that $SP = XQ_1^{a_1}Q_2^{a_2}\dots Q_t^{a_t}P$ is in one of the $\Bb_j'$.

    We created $\Oh(t \log n)$ representations of order at most $t + 1$ that cover $\Aa_{\text{long}}'$, and $t'' = \Oh(t^2 \log n)$ representations of order at most $t + 1$ that cover $\Aa_{\text{short}}'$. Hence we created $\Oh((t + 1)^2 \log n)$ representations of order at most $t + 1$, as required by the lemma.
\end{proof}

\structuralthm*

\begin{proof}
    We start with the empty affine prefix set representing $\pal^0$. We proceed in $k$ levels $k' \in  [0, k)$, and, on each level $k'$, we consider affine prefix sets of order $k'$. 
    The union of all the affine prefix sets of level $k'$ are exactly all of the prefixes of $T$ that are in $\pal^{k'}$. 
    For each affine prefix set of the current level $k'$, we first apply \cref{lem:aux:makeirreducible} and \cref{lem:aux:makestrong} to obtain $6^{k'}$ canonical representations of order at most $k'$. 
    Then, for each of the representations, we append a palindrome using \cref{lem:new_palindrome_recursive}, resulting in $c \cdot (k' + 1)^2 \log n$ affine prefix sets of order at most $k' + 1$, which we move to level $k' + 1$. 
    Here, $c$ is a positive constant that depends on the precise complexity analysis of \cref{lem:new_palindrome_recursive}. 
    Hence, after processing level $k - 1$, the total number of affine prefix sets is bounded by $$\prod_{k' = 0}^{k - 1} (6^{k'} \cdot c \cdot (k' + 1)^2 \log n) \le (k!)^2 \cdot c^k \cdot 6^{(k^2/2)} \cdot \log^k n.$$
    Let $\smallconst \in \mathbb R^+$ with $\smallconst < 1$ be constant. If $k$ exceeds another constant that depends solely on $\smallconst$ and $c$, then $(k!)^2 \cdot c^k < 6^{\smallconst' \cdot k^2}$ with $\smallconst' = \frac{1}{2-\smallconst} - \frac12 > 0$. Hence the bound becomes $6^{\smallconst' \cdot k^2} \cdot 6^{(k^2/2)} \cdot \log^k n = 6^{k^2/(2-\smallconst)}$.
\end{proof}

\section%
[Read-only Algorithm for Recognising k-palindromic Prefixes and Computing the Palindromic Length]%
{Read-only Algorithm for Recognising \boldmath$k$\unboldmath-palindromic Prefixes and Computing the Palindromic Length}
\label{sec:algo}
Finally, using the combinatorial properties we showed above, we develop a read-only algorithm that receives a string $T$ of length $n$ and an integer $k$, and computes a small space-representation of $i$-palindromic prefixes of a string for all $1 \le i \le k$. Additionally, it can computes the palindromic length of $T$ if its at most $k$.

The following lemma gives a small-space implementation of the procedure behind \cref{coro:affine-set-pal1} to compute the prefix-palindromes of a string as $\Oh(\log n)$ affine prefix sets of order at most 1. The general idea is to enumerate the prefix-palindromes in the increasing order of length, using constant-space exact pattern matching.

\begin{lemma}\label{lem:enum_prefixpali}
There is a read-only algorithm that enumerates all prefix-palindromes of a string $T[1 \dd n]$ in the increasing order of length in $\Oh(n)$ time and $\Oh(1)$ space.
\end{lemma}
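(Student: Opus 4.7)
I would maintain, in $\Oh(1)$ words, the lengths $\ell_{i-1}$ and $\ell_i$ of the two most recently found prefix-palindromes and iteratively emit the next one, driven by the progression structure of \cref{lemma:prefix-pal1} and \cref{coro:affine-set-pal1}. The algorithm repeatedly decides between a cheap \emph{periodic extension} and a more involved \emph{non-periodic search}.

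In the periodic case, where $\ell_i \le 3\ell_{i-1}/2$, \cref{lemma:prefix-pal1} implies that $P_{i-1}$ and $P_i$ share the minimal period $q = \ell_i - \ell_{i-1}$, and any subsequent prefix-palindrome of length at most $3\ell_i/2$ must have length exactly $\ell_i + q$. I would verify whether $T[1\dd \ell_i + q] \in \pal$ via only $\Oh(q)$ character comparisons: for indices $k > q$ in the palindromicity test, the required equality $T[k] = T[\ell_i + q + 1 - k]$ is forced by the periodicity of $P_i$ combined with its palindromicity, so only the $q$ newly exposed positions need to be tested. Each such extension thus costs $\Oh(q)$ while growing the palindrome by $q$ symbols, so the total cost inside a single progression telescopes to be proportional to the length of its final palindrome; summing over the $\Oh(\log n)$ progressions from \cref{coro:affine-set-pal1} gives $\Oh(n)$ overall.

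When a periodic extension fails (or when the two-palindrome regime has not yet kicked in), the algorithm enters a non-periodic search. The contrapositive of \cref{lemma:prefix-pal1} forces $\ell_{i+1} > 3\ell_i/2$. The plan is to run a constant-space palindromic-prefix detector that returns the smallest $j > \ell_i$ with $T[1\dd j] \in \pal$, using $\Oh(\ell_{i+1})$ time and $\Oh(1)$ space. Because $\ell_{i+1} > 3\ell_i/2$, the costs of all non-periodic searches telescope geometrically to $\Oh(n)$, which, combined with the periodic-extension bookkeeping above, gives the claimed overall $\Oh(n)$ time.

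The hard part is realising the non-periodic search in time $\Oh(\ell_{i+1})$ with only $\Oh(1)$ auxiliary space: a naive independent check of each candidate length $j$ costs $\Omega(j)$ per candidate and hence $\Omega(\ell_{i+1}^2)$ per transition. I would resolve this by reformulating prefix-palindrome detection as a border-search in the concatenation of $T$ with $\rev{T}$ (a prefix-palindrome of $T$ of length $j$ corresponds precisely to a border of the appropriate truncation) and applying a Galil--Seiferas-style constant-space, linear-time pattern-matching primitive, which amortises character comparisons across successive candidates. Reconciling this primitive with the periodic/non-periodic case split and the geometric telescoping outlined above is the delicate point of the argument.
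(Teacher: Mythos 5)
The periodic-extension half of your argument is sound: within a run where successive prefix-palindromes grow by the common minimal period $q$, testing the next candidate $T[1\dd\ell_i+q]$ indeed reduces to $\Oh(q)$ fresh comparisons, and the costs telescope. However, the crux of the lemma is precisely what you isolate as the ``delicate point,'' and your proposal does not actually resolve it. The non-periodic search --- given $\ell_i$, return the smallest $j > \ell_i$ with $T[1\dd j]\in\pal$ in $\Oh(j)$ time and $\Oh(1)$ space --- \emph{is} the lemma, just restarted from a non-trivial threshold; invoking it as a subroutine is circular. Your border-search reformulation ($T[1\dd j]\in\pal$ iff $j$ is a border length of $T\cdot\#\cdot\rev{T}$) is a correct translation but does not make the problem easier: enumerating borders of a single string under a length lower bound in $\Oh(1)$ working space is not a black-box primitive, and the Galil--Seiferas / Crochemore--Perrin machinery you gesture at is designed to locate occurrences of a \emph{fixed} pattern in a \emph{fixed} text, not to iterate through borders. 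To close the gap you would need to actually build that border-enumeration routine, which is substantially more work than the rest of your argument.

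The paper's proof sidesteps the periodic/non-periodic dichotomy entirely. It partitions candidate lengths into dyadic ranges $[2^j, 2^{j+1})$ and observes that for $m$ in this range, $T[1\dd m]\in\pal$ iff $\rev{T[1\dd 2^j]}$ occurs at position $m-2^j+1$ of $T$ (the fixed half-window $T[1\dd 2^j]$ covers more than half of $T[1\dd m]$). This reduces each range to a single instance of \emph{fixed}-pattern, \emph{fixed}-text matching, for which the standard constant-space linear-time algorithms (Breslauer--Grossi--Mignosi and references therein) apply directly, giving $\Oh(2^j)$ time per range and $\Oh(n)$ in total by the geometric sum. No auxiliary structure on the palindromic prefixes is needed, and the whole argument rests on a well-established primitive rather than on one that still has to be built. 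Your route, if completed, would be a legitimate alternative that exploits the affine structure of \cref{coro:affine-set-pal1}, but as written it has a real missing piece where the paper has none.
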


\begin{proof}
For a fixed $j \in [0, \ceil{\log_2 n}]$, we show how to enumerate the prefix-palindromes of length within range $[2^{j}, 2^{j + 1})$. Consider any length $m \in [2^{j}, 2^{j + 1})$. It is easy to see that $T[1\dd m]$ is a palindrome if and only if $\rev{T[1\dd 2^j]} = T(m - 2^j \dd m]$ because $T[1 \dd 2^j]$ spans more than half of $T[1 \dd m]$. Recall that for strings $X,Y$, a fragment $Y[i \dd j] = X$ is an occurrence of $X$ in $Y$. 
We use constant space and linear time pattern matching (see~\cite{DBLP:journals/tcs/BreslauerGM13} and references therein) to enumerate all occurrences of a string $\rev{T[1 \dd 2^j]}$ in $T[1 \dd 2^{j + 1})$ in the left-to-right order.
Whenever the pattern matching algorithm outputs an occurrence $T[i \dd i+2^j) = \rev{T[1\dd 2^j]}$, we output that $T[1\dd m]$ with $m = i + 2^j - 1$ as a palindromic prefix.
By the previous observation, this reports all palindromic prefixes of length within range $[2^{j}, 2^{j + 1})$ in the increasing order of length, using constant space and $\Oh(2^{j})$ time. Hence the total time for all $j \in [0, \ceil{\log_2 n}]$ is $\Oh(\sum_{j = 1}^{\ceil{\log_2 n}} 2^j) = \Oh(n)$.
\end{proof}

\begin{restatable}[{Implementation of \cref{lm:base_case}}]{algo}{roalgopalone}\label{lem:ro-algo-pal1}
    Given a string $T[1\dd n]$, there is a read-only algorithm that uses $\Oh(\log n)$ space and $\Oh(n)$ time and outputs all prefixes of $T$ that belong to $\pal$ as $\Oh(\log n)$ affine sets of order at most 1. Each set of order $1$ is reported in canonical representation $\angles{U(VU)^\ell, (VU, 1, u)}$ for some $U\in \pal\cup \{\emptystring\}$, $V\in \pal$ and integers $\ell \ge 1$ and $u > 1$.
\end{restatable}
\begin{proof}
    We use the procedure described in the proof of \cref{coro:affine-set-pal1}.
    We start with a single prefix-palindrome represented by $\angles{T[1], \emptyseries}$.
    Then, we use \cref{lem:enum_prefixpali} to enumerate the remaining prefix-palindromes in increasing order of length.
    Let $P'$ be the most recently reported prefix-palindrome (initially $P' = T[1]$).
    Whenever some prefix $P = T[1\dd m]$ is reported to be a palindrome, we consider two cases based on whether or not $\absolute{P} > 3\absolute{P'}/2$. We proceed as in the proof of \cref{coro:affine-set-pal1}, storing all affine sets computed during the procedure (where creating or modifying a set takes constant time, performing simple arithmetic operations that depend solely on $\absolute{P}$ and $\absolute{P'}$). 
    The representations may not be strongly affine, but the postprocessing described in \cref{lem:aux:makestrong} (and used to obtain \cref{lm:base_case}) can be easily performed in constant time for each affine prefix set. (This holds because the present representations are of order 1.)
    There are at most $\Oh(\log n)$ affine sets, hence the algorithm uses $\Oh(\log n)$ space and $\Oh(n)$ time.
\end{proof}

\roalgopalk*
\begin{proof}
We first provide algorithmic implementations of the combinatorial lemmas we showed above (rather straightforward, but we still provide them for completeness), and then combine them into the final algorithm. 

We assume that a representation $\angles{X, (Q_i, \ell_i, u_i)}$ of an affine set is stored as a list of the components $Q_i$, associated with $\ell_i, u_i$, and $X$ is stored separately. 
\subsubsection*{Transforming representations}
\begin{algo}[Implementation of \cref{lem:aux:transform}]\label{algo:transform}
    Given a pointer to the $i$-th component of the representation of an affine set,
    the four operations $\textop{switch}_i$, $\textop{merge}_i$, $\textop{split}_i$, and $\textop{truncate}$ can each be performed in constant time and space.
\end{algo}
These four operations only change $\ell_i, u_i$, or replace $Q_i$ with its rotation, which can be computed via a constant number of arithmetic operations. In case of $\textop{merge}_i$, the $(i + 1)$-th element in the list of components needs to be deleted, while $\textop{split}_i$ requires an insertion between the $i$-th and $(i + 1)$-th element. For $\textop{switch}_i$, we need to swap the $i$-th and $(i + 1)$-th element. Given a pointer to the $i$-th element, the required deletions, insertions, and swaps take constant time. Hence all four operations can be implemented in $\Oh(1)$ time and $\Oh(1)$ extra space.
As a corollary, we obtain the following:
\begin{algo}[Implementation of \cref{lem:aux:removefixed}]\label{algo:removefixed}
    Given the representation of an affine prefix set of order $t$, we can remove all fixed components in $\Oh(t^2)$ time and $\Oh(1)$ space.
\end{algo}

\begin{restatable}[{Implementation of \cref{lem:aux:makeirreducible}}]{algo}{algoirreducible}\label{algo:make-irreducible}
    Let $\Aa$ be an affine prefix set of $T[1\dd n]$, given in representation $\rho$ of order $t$. There is a read-only algorithm that, given $\rho$ and random access to $T$, transforms $\rho$ into an irreducible representation $\rho^*$ of $\Aa$ in $\Oh(t^2)$ time and $\Oh(1)$ additional space. The representation~$\rho^*$ is of order at most $t$.
\end{restatable}

We assume to receive an affine prefix set with a representation of order $t$, and transform it into an irreducible representation. The transformation starts with an application of \cref{lem:aux:removefixed}, which requires $\Oh(t^2)$ time and $\Oh(1)$ extra space. It then performs $\le t$ $\textop{merge}$ operations and $\le t$ $\textop{split}$ operations in $\Oh(t^2)$ total time and $\Oh(1)$ extra space. Finally, the transformation applies \cref{lem:aux:removefixed} again in $\Oh(t^2)$ time and $\Oh(1)$ extra space. 

\subsubsection*{Strongly affine representations}

\begin{restatable}[Implementation of \cref{lem:aux:makestrong}]{algo}{makestrongalgo}\label{algo:makestrong}
    Given an affine prefix set $\Aa$ as a representation of order $t$, there is a read-only algorithm that computes, in $\Oh(\strongconstplus^t \cdot t^2)$ total time and $\Oh(\strongconstplus^t \cdot t)$ space,
    at most $6^t$ canonical representations of order at most $t$ of affine sets that form a partition of $\Aa$.
\end{restatable}

We start by applying \cref{algo:make-irreducible} to make the representation irreducible in $\Oh(t^2)$ time and $\Oh(1)$ space. We then create $6^t$ representations of \cref{eq:strong} naively in $\Oh(6^t)$ time and $\Oh(6^t \cdot t)$ space, and finally once more apply \cref{algo:make-irreducible} to each representation in total $\Oh(6^t \cdot t^2)$ time and $\Oh(6^t \cdot t)$ space. As observed in \cref{sec:strongaffine}, this indeed results in canonical representations.

\subsubsection*{Appending a new palindrome}
\begin{algo}[Implementation of \cref{lem:truncate_affine_set}]\label{algo:truncate_affine_set}
    Given an affine prefix set $\Aa$ with its representation $\angles{X, (Q_i, \ell_i, u_i)_{i = 1}^t}$ and an integer $m \in \mathbb{N}$, there is a read-only algorithm that finds a decomposition of $\Aa|_m$ into at most $t$ affine prefix sets, each of order at most $t$, in $\Oh(t^2)$ time and space.
\end{algo}
The representations of the sets are computed in a loop over $i = 1, 2, \ldots, t$.
For a fixed $i$, the task is, given exponents $a_1, a_2, \ldots, a_{i-1}$, to compute the minimal $a_i \in [\ell_i, u_i]$ such that $$\absolute{XQ_1^{a_1} Q_2^{a_2} \ldots Q_{i-1}^{a_{i-1}} Q_i^{a_i} Q_{i+1}^{u_{i+1}} \ldots Q_{t}^{u_{t}}} \ge m.$$
We then add a set with a representation $\angles{XQ_1^{a_1}Q_2^{a_2}\dots Q_{i - 1}^{a_{i - 1}}, (Q_i, \ell_i,a_i-1) \cdot (Q_j, \ell_j, u_j)_{j = i+1}^t}$ to the union, but only if $a_i > \ell_i$. Either way, we continue with $i+1$.
We can obtain $a_1$ by first computing $m' = \absolute{XQ_1^{a_1} Q_2^{a_2} \ldots Q_{i-1}^{a_{i-1}}Q_{i+1}^{u_{i+1}} \ldots Q_{t}^{u_{t}}}$, which clearly takes $\Oh(t)$ time. Then, it holds $a_i = \ceil{(m - m')/\absolute{Q_i}}$ (unless this value is not in $[\ell_i, u_i]$, which can be handled trivially). Creating the new representation takes $\Oh(t)$ time and space, and the overall time and space for all $i$ is $\Oh(t^2)$.

\begin{algo}[Implementation of \cref{cor:inner}]\label{algo:inner}
    Let $\angles{X, (Q_i, \ell_i, u_i)_{i = 1}^t}$ be a canonical representation of an affine prefix set $\Aa$. Let $\alpha$ be the largest (possibly fractional) exponent such that $X Q_1^\alpha$ is a prefix of $T$ and define 
    \[\Ss = \{S \cdot P: S \cdot P \text{ is a prefix of } X Q_1^\alpha, S \in \Aa, P \in \pal, \absolute{P} \ge 2 \absolute{Q_1}\}\]
There is a read-only algorithm that computes in $\Oh(n)$ time and $\Oh(t^2)$ space at most $t$ affine prefix sets $\Bb_i$ such that their union covers $\Ss$ and such that for every string $Y' \in \cup_i \Bb_i$ there is a string $Y \in \Aa$ and $P \in \pal$ satisfying $Y' = Y \cdot P$. Each set is reported in representation of order at most~$t$.
\end{algo}
We first compute $\alpha$, which takes $\Oh(n)$ time by naive scanning.
We then apply \cref{th:inner}, which gives an affine set $\Bb$ in representation of order $t$ in $\Oh(t)$ time and space. Finally, we truncate $\Bb$ to prefixes of length $\absolute{XQ_1^\alpha}$ in $\Oh(t^2)$ time and space with \cref{algo:truncate_affine_set}, resulting in at most $t$ representations of order at most $t$.

\begin{algo}[Implementation of \cref{cor:overhanging}]\label{algo:overhanging}
    Let $\angles{X, (Q_i, \ell_i, u_i)_{i = 1}^t}$ be a canonical representation of an affine prefix set $\Aa$. There is a read-only algorithm that computes a decomposition of a superset of 
    \[\Aa' = \{S \cdot P: S \cdot P \text{ is a prefix of } T, S \in \Aa, P \in \pal, \cent(P) \ge \absolute{X} + (u_1+3) \cdot \absolute{Q_1}\}\]
    (with the properties claimed in \cref{cor:overhanging}) into $\Oh(t \log n)$ affine prefix sets of order $\le t+1$ in $\Oh(n)$ time and $\Oh(t^2 \log n)$ space.
\end{algo}

We start by searching the prefixes of $T[\absolute{XQ_1^{u_1 + 2}} + 1 \dd]$ that belong to $\pal$. By \cref{lem:ro-algo-pal1}, we can find $\Oh(\log n)$ affine sets of order at most 1 containing all such prefixes in $\Oh(n)$ time and $\Oh(\log n)$ space. 
The sets of order $1$ are reported in strongly affine representation $\angles{U(VU)^\ell, (VU, 1, u)}$ for some $U\in \pal\cup \{\emptystring\}$, $V\in \pal$ and integers $\ell \ge 1$ and $u > 1$.

As seen in Case 4 of the proof of \cref{cor:overhanging}, it cannot be that $\absolute{VU} < \absolute{Q_1}$. Also, as shown in Case 3, if $\absolute{VU} = \absolute{Q_1}$, then we can simply apply \cref{algo:inner}, which takes $\Oh(n)$ time and $\Oh(t^2)$ space. Hence we only have to consider the cases where $\absolute{VU} > \absolute{Q_1}$ (Case 1), or where the affine prefix set is of order $0$ (Case 2). 
For Case 1, let $\rho_1 = \angles{XQ_1^{u_1 + 2}U(VU)^{\ell}, (VU, 1, u)}$, and let $P' = U(VU)^{\ell + u}$. 
We obtain $\rho_2 = \angles{\rev {Q_1}[1 \dd \absolute{Q_1}-s], (\rev{\hatQ_i}, 1, u_i)_{i = 1}^t}$ in $\Oh(t)$ time, where~$s$ and the $\hatQ_i$ are defined as in \cref{thm:overhanging}.
Then, we compute the maximal $\alpha \in \mathbb Q$ such that $XQ_1^{u_1 + 2}P' \cdot \rev{Q_1}^\alpha$ is a prefix of $T$, which can be done naively in $\Oh(n)$ time.
Now we truncate $\rho_2$ such that the generated strings are of length at most $\absolute{Q_1^\alpha}$, which takes $\Oh(t^2)$ time and space with \cref{algo:truncate_affine_set}, and results in at most $t$ representations of order at most $t$.
Finally, we concatenate a copy of $\rho_1$ with each of these representations.
In theory, when concatenating $\rho_1$ with some representation $\rho_2' = \angles{X', (Q'_i, \ell'_i, u'_i)}$, we have to compute the primitive root $Y$ of $X'$. 
This is because $(Y, \frac{\absolute{X'}}{\absolute{Y}}, \frac{\absolute{X'}}{\absolute{Y}})$ will become a component of the concatenation.
However, we can omit this step by observing that the removal of fixed components with \cref{algo:removefixed} does not depend on the primitiveness of fixed components, and hence we can create and immediately remove a non-primitive component $(X', 1, 1)$ instead.
Thus, we take $\Oh(t^2)$ time per concatenation, or $\Oh(t^3)$ time overall.
We followed the computational steps in Case 1 of the proof of \cref{cor:overhanging}, and the total time and space complexity (for one affine set of core palindromes) are respectively $\Oh(n + t^3)$ and $\Oh(t^2)$.

As seen in the proof of \cref{cor:overhanging}, if an affine set of prefix-palindromes of $T[\absolute{XQ_1^{u_1 + 2}} + 1 \dd]$ is given in representation $\angles{P', \emptyseries}$ (i.e., in Case 2), then we need a subset of the operations used for Case 1, leading to (at most) the same time and space complexity.

Summing over the $\Oh(\log n)$ sets of core palindromes, the total time and space complexity are respectively $\Oh(n \log n + t^3 \log n)$ and $\Oh(t^2 \log n)$. 
This can be improved by observing that processing a single set of core palindromes takes $\Oh(t^3)$ time if we ignore the $\Oh(n)$ time needed to find the maximal $\alpha \in \mathbb Q$ such that $XQ_1^{u_1 + 2}P' \cdot \rev{Q_1}^\alpha$ is a prefix of $T$.
We only use $\alpha$ to truncate $\rho_2$ to strings of length at most $\absolute{Q_1^\alpha}$, and every string generated by $\rho_2$ is of length over $\absolute{Q_1}$. Hence we only have to compute $\alpha$ if $\alpha > 1$.
In \cref{lem:corepali_extension} (below), we show how to perform this task in batch in overall $\Oh(n)$ time, using $\Oh(\log n)$ space. Hence the total time
becomes $\Oh(n + t^3 \cdot \log n)$, which is $\Oh(n)$ due to \cref{lem:irreducible_maxorder}.

\begin{lemma}\label{lem:corepali_extension}
Let $T[1\dd n]$ be a string and let $Q$ be a fragment of $T$. Let $S_1, \dots S_h$ be prefixes of~$T$, in increasing order of length. There is a read-only algorithm that, in $\Oh(n)$ time and $\Oh(h)$ space, computes for each $i \in [1, h]$ either the maximal $\alpha \in \mathbb Q$ with $\alpha \geq 1$ such that $S_i \cdot \rev{Q}^{\alpha}$ is a prefix of $T$, or reports that such $\alpha$ does not exist.
\end{lemma}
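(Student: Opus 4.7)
The plan is to reduce the task to two standard subroutines that together run in $\Oh(n)$ time and $\Oh(1)$ space, and to interleave them with the (already sorted) query stream in a single left-to-right sweep of $T$. Let $q = \absolute{Q}$ and $p_i = \absolute{S_i} + 1$. Then $S_i \cdot \rev{Q}^{\alpha}$ is a prefix of $T$ with $\alpha \geq 1$ iff (i) $p_i + q - 1 \leq n$ and $T[p_i \dd p_i + q - 1] = \rev{Q}$, and (ii) $T[p_i \dd p_i + \alpha q - 1]$ has period $q$. Letting $b$ denote the smallest \emph{break} position in $[p_i, n - q]$ at which $T[b] \neq T[b + q]$, the maximal admissible length $\alpha q$ equals $b - p_i + q$ if such $b$ exists and $n - p_i + 1$ otherwise; dividing by $q$ yields the required $\alpha$.

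The first subroutine I would employ is a constant-space, linear-time exact matcher (for instance Galil--Seiferas or Crochemore--Perrin, as used in \cref{lem:enum_prefixpali}) for the pattern $\rev{Q}$ over the text $T$. Since $Q = T[\ell \dd r]$ is a fragment of $T$, symbols of $\rev{Q}$ can be accessed via $\rev{Q}[j] = T[r - j + 1]$, so the matcher fits the read-only model and reports each occurrence of $\rev{Q}$ as soon as its last character has been consumed. The second subroutine is the sliding predicate $T[j - q] \overset{?}{=} T[j]$ for $j = q + 1, \ldots, n$, which pinpoints every break position $j - q$ in $\Oh(1)$ space.

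I would then sweep $j = 1, \ldots, n$ while maintaining a FIFO queue of \emph{pending} queries: indices $i$ for which the matcher has already certified $\rev{Q}$ at $p_i$ but no break $b \geq p_i$ has yet been encountered. When the sweep passes position $p_i + q - 1$ (the earliest moment at which the matcher can commit to the presence or absence of an occurrence starting at $p_i$), query $i$ is either enqueued or, if there is no occurrence or $p_i + q - 1 > n$, immediately resolved as ``no $\alpha$''. Whenever a break $b$ is detected, the queue is flushed in order, outputting $\alpha_i = (b - p_i + q)/q$ for every queued $i$ with $p_i \leq b$; queries still pending at the end of the sweep receive $\alpha_i = (n - p_i + 1)/q$.

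The queue never holds more than $h$ entries, giving $\Oh(h)$ space, while the matcher together with the break predicate perform $\Oh(n)$ work, plus $\Oh(h)$ amortized queue operations, for $\Oh(n)$ total time. The main point of care is synchronizing the three streams (matcher output, break events, query positions), all of which advance monotonically from left to right; once one accepts to delay the resolution of query $i$ by up to $q - 1$ sweep steps past $p_i$, the remaining bookkeeping is constant per event and correctness follows directly from the characterization in the first paragraph.
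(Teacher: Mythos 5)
Your proposal is correct and rests on the same combinatorial observation as the paper's proof: the set of \emph{break positions} $\{b : T[b] \neq T[b+q]\}$ determines, for every candidate start $p_i$, the end of the maximal $q$-periodic fragment starting at $p_i$, and this information can be collected in a single pass.  Where the two arguments diverge is in how the pass is organized.  The paper proceeds query-by-query: it computes the periodic endpoint $y_i$ for each surviving $x_i$ and argues, via the overlap condition $x_i \leq y_{i-1} - |Q| + 1 \Rightarrow y_i = y_{i-1}$, that a fresh forward scan is needed only when it starts beyond the previous scan's endpoint, giving $\Oh(n)$ by amortization.  Your version is text-driven: you sweep $j = 1,\dots,n$ once, running the constant-space matcher for $\rev{Q}$ and the sliding predicate $T[j-q] \overset{?}{=} T[j]$ in lockstep, and let a FIFO queue of pending queries absorb the synchronization.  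This buys you a structurally linear pass with no amortization argument to check, at the cost of the extra bookkeeping around enqueueing at position $p_i + q - 1$ and flushing on break events with the guard $p_i \le b$ (which you correctly include to avoid resolving a query against a break that precedes it).  Both are $\Oh(n)$ time and $\Oh(h)$ space, and your characterization $\alpha q = b - p_i + q$ (or $n - p_i + 1$ when no break remains) matches the required maximality.
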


\begin{proof}
For $i \in [1, h]$, let $x_{i} = \absolute{S_i}+1$.
Using constant space and linear time pattern matching (see~\cite{DBLP:journals/tcs/BreslauerGM13}), we enumerate all the occurrences of $\rev{Q}$ in $T$ in increasing or. 
This makes it easy to filter out all the $x_i$ for which $T[x_i\dd ]$ does not have prefix $\rev{Q}$. Hence, from now on we can assume that all the $T[x_i\dd ]$ have prefix $\rev{Q_1}$.
Let $y_i \in [x_i + \absolute{Q}, n]$ be the maximal index such that $T[x_i\dd y_i]$ has period $\absolute{Q}$.
We argue that, if for any $i \in [2, h]$ it holds $x_i \leq y_{i - 1} - \absolute{Q} + 1$, then $y_{i - 1} = y_{i}$. This is easy to see, because then $T[x_{i - 1}\dd y_{i - 1}]$ and $T[x_i\dd y_i]$ have overlap $T[x_i\dd y_{i - 1}]$ of length at least $\absolute{Q}$, such that $T[x_{i - 1}\dd y_{i - 1}]$ and $T[x_i\dd y_i]$ must lie within a single $\absolute{Q}$-periodic fragment.
For every $i \in [1, h]$ in increasing order, we compute $y_i$ as follows. If $i = 1$ or $x_i > y_{i - 1} - \absolute{Q} + 1$, then we naively scan $T[x_i + q\dd ]$ until we reach the end position $y_i$ of the $\absolute{Q}$-periodic fragment (recall that $T[x_i\dd ]$ has prefix $\rev{Q}$ due to the earlier filtering). Otherwise, it holds $x_i \leq y_{i - 1} - \absolute{Q} + 1$, and we can assign $y_i = y_{i - 1}$ in constant time. We scan each position of $T$ at most once (because a scan always starts at a position $x_i + q > y_{i -1}$), and thus the time is $\Oh(n)$, while the space is $\Oh(h)$.
\end{proof}

\subsubsection*{Recursively appending shorter palindromes} 

\begin{restatable}[{Implementation of \cref{lem:new_palindrome_recursive}}]{algo}{newpalrec}\label{algo:new-pal-rec}
    Let $\Aa$ be an affine prefix set of $T[1\dd n]$, given in canonical representation $\rho$ of order $t$.
    Define the set $\Ss$ as
    \[\Ss = \{S \cdot P \mid S \in \Aa, P \in \pal\text{, and $S\cdot P$ is a prefix of $T$} \}.\]
    There is a read-only algorithm that, given $\rho$ and $T$, computes $\Oh((t+1)^2 \log n)$ affine prefix sets whose union is $\Ss$, using $\Oh(n)$ time and $\Oh((t + 1)^3 \log n)$ space. Each affine prefix set is reported in representation of order at most $t + 1$.
\end{restatable}

As explained in \cref{lem:new_palindrome_recursive}, we can build $\Oh((t+1)^2 \log n)$ affine prefix sets such that their union equals $\Ss$ by a recursive application of \cref{algo:inner} and \cref{algo:overhanging}. At every step of the recursion, the order of the affine set decreases by one, and hence the depth of the recursion is $t$. More precisely, the initial step at depth $0$ takes $\Oh(n)$ time and $\Oh((t+1)^2 \log n)$ space by applying \cref{algo:inner} and \cref{algo:overhanging} to the original canonical representation. Then, at depth $i > 0$ of the recursion, we apply the algorithms to an affine prefix set of $Q_{i}^3$, and this set is in canonical representation of order $t - i$. Hence we use $\Oh(\absolute{Q_{i}})$ time and $\Oh((t-i+1)^2 \cdot \log\, \absolute{Q_{i}})$ space at depth $i$. 
This results in $\Oh((t-i+1) \cdot \log\, \absolute{Q_{i}})$ representations of order at most $(t-i+1)$. Then, each of these representations is appended to the same representation of order $i$, resulting in $\Oh((t-i+1) \cdot \log\, \absolute{Q_i})$ representations of order at most $t+1$, taking $\Oh((t + 1) \cdot (t-i+1) \cdot \log\, \absolute{Q_i})$ time and space. Summing over all $i$, the space complexity is
$$\Oh((t+1)^2 \cdot \log n + \sum\nolimits_{i = 1}^{t} (t + 1) \cdot (t-i+1) \cdot \log\, \absolute{Q_i}) = \Oh((t + 1)^3 \log n).$$
The time complexity is the same, with an additional $\Oh(n + \sum_{i = 1}^t \absolute{Q_i})$ needed. Since the representation is canonical, \cref{lem:aux:structureflexflex} implies $\absolute{Q_1} > \sum_{i=2}^t \absolute{Q_i}$ and thus $\Oh(n + \sum_{i = 1}^t \absolute{Q_i}) = \Oh(n)$. Note that $\Oh(n)$ also dominates $\Oh((t + 1)^3 \log n)$ due to \cref{lem:irreducible_maxorder}.

\subsubsection*{Implementation of \cref{th:structure}} 
The final component of the algorithm is a for-loop that goes over all $1 \le i \le k$ to construct a collection $\Cc_i$ of affine sets containing all prefixes of $T$ that belong to $\pal^i$, where each affine set has canonical representation of order at most $i$. 
To construct $\Cc_1$ with $\absolute{\Cc_1} = \Oh(\log n)$, we use \cref{lem:ro-algo-pal1} in $\Oh(n)$ time and $\Oh(\log n)$ space.

Now we inductively show how to compute $\Cc_{i}$ with $i > 1$ from $\Cc_{i - 1}$.
For each affine prefix set in $\Cc_{i - 1}$, or more precisely for its canonical representation (necessarily of order at most $i - 1$), we proceed as follows. We apply \cref{algo:new-pal-rec} to append a palindrome, which takes $\Oh(\absolute{\Cc_{i - 1}} \cdot n)$ time and $\Oh(\absolute{\Cc_{i - 1}} \cdot i^3 \log n)$ space in total, and results in $\Oh(\absolute{\Cc_{i-1}} \cdot i^2 \cdot \log n)$ representations of order at most $i$. 
Next, we have to make the new representations canonical. We apply \cref{algo:makestrong} to each of them, which takes $\Oh(\strongconstplus^{i} \cdot i^2)$ time and $\Oh(\strongconstplus^{i} \cdot i)$ space per representation. The total time is $\Oh(\absolute{\Cc_{i - 1}} \cdot i^4 \cdot \log n \cdot \strongconstplus^{i})$, while the total space is $\Oh(\absolute{\Cc_{i - 1}} \cdot i^3 \cdot \log n \cdot \strongconstplus^{i})$. The result are $\Oh(\absolute{\Cc_{i-1}} \cdot i^2 \cdot \log n \cdot \strongconstplus^{i})$ canonical representations, each of which is of order at most $i$. Finally, $\Cc_{i}$ consists exactly of the sets represented by these representations.

Now we analyze the total time and space complexity. The algorithm terminates after computing $\Cc_k$, where $k$ is the palindromic length of $T$. We focus on a fixed $i \in [2, t]$ and analyze the complexity of computing $\Cc_i$. If we ignore the $\Oh(\absolute{\Cc_{i - 1}} \cdot n)$ time needed to append a palindrome, then the time and space for computing $\Cc_i$, and also its cardinality $\absolute{\Cc_i}$ are bounded from above by some value $U_i = \Oh(\absolute{\Cc_{i - 1}} \cdot i^4 \cdot \log n \cdot \strongconstplus^{i})$. Hence we can find a constant $c \geq 2$ independent of $i$ and $n$ such that $U_i = c \cdot \absolute{\Cc_{i - 1}} \cdot i^4 \cdot \log n \cdot \strongconstplus^{i}$. This also holds for $\Cc_1$ if we define $U_0 = \absolute{\Cc_0} = 1$, and hence we consider $i \in [1, k]$ from now on. We resolve the recurrence and obtain the bound
$$U_i 
= c \cdot \absolute{\Cc_{i - 1}} \cdot i^4 \cdot \log n \cdot \strongconstplus^{i} 
\leq c \cdot U_{i - 1} \cdot i^4 \cdot \log n \cdot \strongconstplus^{i} 
\leq \prod_{j = 1}^i c \cdot j^4 \cdot 6^j \cdot \log n
< c^i \cdot (i!)^4 \cdot 6^{i(i+1)/2} \cdot \log^i n.$$
Let $\smallconst \in \mathbb R^+$ with $\smallconst < 2$ be an arbitrary constant, and let $\smallconst' = \frac1{2-\smallconst} - \frac12 - \frac{1}{2c} > 0$ for $c$ large enough. 
If $i \leq c$ or $(i^i)^6 \geq 6^{\smallconst' \cdot i^2}$, then $i$ is constant and it holds $U_i = \Oh(\log ^i n) = \Oh(6^{i^2/(2-\smallconst)}/i \cdot \log^i n)$. Otherwise, $i > c \geq 2$ implies $i < c^i < i^i$ and thus $c^i \cdot (i!)^4 < (i^i)^6 / i < 6^{\smallconst' \cdot i^2} / i$. We continue with
$$U_i 
< c^i \cdot (i!)^4 \cdot 6^{i(i+1)/2} \cdot \log^i n 
< (6^{\smallconst' \cdot i^2}/i) \cdot 6^{i(i+1)/2} \cdot \log^i n 
= (6^{i^2/(2-\smallconst)}/i) \cdot \log^i n$$
We have shown $U_i = \Oh((6^{i^2/(2-\smallconst)}/i) \cdot \log^i n)$. By summing over all $i \in [1, t]$, we obtain
$$\sum_{i = 1}^k U_i = \Oh(\sum_{i=1}^k (6^{i^2/(2-\smallconst)}/i) \cdot \log^i n) \subseteq \Oh(\sum_{i=1}^k (6^{k^2/(2-\smallconst)}/k) \cdot \log^k n) = \Oh(6^{k^2/(2-\smallconst)} \cdot \log^k n),$$
where the middle step follows from the fact that $(6^{i^2/(2-\smallconst)}/i) \cdot \log^i n \leq (6^{k^2/(2-\smallconst)}/k) \cdot \log^k n$ for arbitrary $n \in \mathbb N^+$ and $i \in [1, k]$.
We have shown that the total space for computing all $\Cc_i$ with $i \in [1, k]$ is $\Oh(6^{k^2/(2-\smallconst)} \cdot \log^k n)$. For the time bound, we ignored the $\Oh(\absolute{\Cc_{i - 1}} \cdot n)$ time needed to compute $\Cc_i$. However, we already know that $\sum_{i = 1}^k \absolute{\Cc_i} < \sum_{i = 1}^k U_i = \Oh(6^{k^2/(2-\smallconst)} \cdot \log^k n)$, and thus the time is bounded by $\Oh(n \cdot 6^{k^2/(2-\smallconst)} \cdot \log^k n)$.
\end{proof}

\subsection{Computing the palindromic length}

The algorithm of \cref{th:ro-algo-palk} can be used to test if the palindromic length of $T$ is at most $k$ by checking whether $T$ is a $k$-palindromic prefix.
We now show how to improve the complexity by using two copies of the data structure of \cref{th:ro-algo-palk}.

\palindromiclength*

\begin{proof}
Let $0 < \smallconst < 1$ be a constant.
	We will start with $k = 1$, and increment $k$ until we can verify that the palindromic length is indeed $k$.
	We will consider strings in $\pal^{\floor{k/2}}$ and $\pal^{\ceil{k/2}}$, and during the complexity analysis we encounter terms of the form $6^{\ceil{k/2}^2/(2-\smallconst)}$.
	Before we describe the algorithm, we point out that these terms can be bounded from above using	
	$$6^{\ceil{k/2}^2/(2-\smallconst)} \leq 6^{((k + 1)/2)^2/(2-\smallconst)} = 6^{(k^2 + 2k + 1)/(8-4\smallconst)} < 6^{(k^2)/(8-4\smallconst)} \cdot 6^{3k}< 6^{k^2/(8-6\smallconst)},$$
	where the last step holds if $k$ exceeds a constant that depends solely on $\smallconst$.
	
    The algorithm calls \cref{th:ro-algo-palk} to build a set $\Pp$ that contains $\Oh(6^{k^2/(8-6\smallconst)} \cdot \log^{\ceil{k/2}} n)$ affine prefix sets that describe the $\ceil{k/2}$-palindromic prefixes of $T$.
	It then calls \cref{th:ro-algo-palk} again to compute the set $\Ss$ that contains $\Oh(6^{k^2/(8-6\smallconst)} \cdot \log^{\floor{k/2}} n)$ affine prefix sets of $\rev{T}$ that describe the $\floor{k/2}$-palindromic prefixes of $\rev{T}$.
	Each set is given in canonical representation of order at most $\ceil{k/2}$ for $\Pp$, and at most $\floor{k/2}$ for $\Ss$.
	The required time is $\Oh({n \cdot 6^{k^2/(8-6\smallconst)}} \cdot \log^{\ceil{k/2}} n)$, and the overall space is $\Oh(6^{k^2/(8-6\smallconst)} \cdot \log^{\ceil{k/2}} n)$.
	
    Then, we iterate over each possible combination of a set $\Aa \in \Pp$ and a set $\Bb \in \Ss$.
   	We use \cref{lem:verifyprefixsuffix} (below) to check whether there exists $A \in \Aa$ and $B\in\Bb$
    such that $T = A \cdot \rev{B}$.
    Then, $T$ is in $\palk$ if and only if at least one of these checks is succesful.
    There are $\Oh(6^{k^2/(4-3\smallconst)} \cdot \log^{k} n)$ possible combinations, and each combination can be verified in $\Oh(10^k)$ time.
    Hence, the total time needed for verification is $\Oh(10^k \cdot 6^{k^2/(4-3\smallconst)} \cdot \log^{k} n)$.
    
    For the final part of the analysis, recall that we have to increment $k$ and rerun the algorithm until the verification is successful. From now on, let $k$ be the actual palindromic length of $T$. The space is dominated by the final run, which requires $\Oh(6^{k^2/(8-6\smallconst)} \cdot \log^{\ceil{k/2}} n)$ space. The total time is bounded by 
    \begin{alignat*}{3}
    	&\Oh(k\ \cdot\ &&n \cdot 6^{k^2/(8-6\smallconst)} \cdot \log^{\ceil{k/2}} n + k \cdot 10^k\ \cdot\  &&6^{k^2/(4-3\smallconst)} \cdot \log^{k} n)\\
    	\subseteq\ &\Oh(&&n \cdot 6^{k^2/(8-7\smallconst)} \cdot \log^{\ceil{k/2}} n\qquad + &&6^{k^2/(4-4\smallconst)} \cdot \log^{k} n)\\
    	\subseteq\ &\Oh(&&n \cdot \mathrlap{6^{k^2} \cdot \log^{\ceil{k/2}} n}\phantom{6^{k^2/(8-7\smallconst)} \cdot \log^{\ceil{k/2}} n}\qquad + &&6^{k^2} \cdot \log^{k} n),
    \end{alignat*}
    where we used the same trick as before to hide the factors $k$ and $10^k$ by increasing the coefficient of $\smallconst$. 
    If $k \leq \sqrt{\log_{\strongconstplus} n}$, then $\log^k n = o(n)$ and the time complexity is clearly dominated by the term $n \cdot 6^{k^2} \cdot \log^{\ceil{k/2}} n$. 
    If $k > \sqrt{\log_{\strongconstplus} n}$, then $6^{k^2} > n$ and both time and space are superlinear.
    Hence, when running the algorithm, we stop increasing $k$ as soon as it exceeds $\sqrt{\log_{\strongconstplus} n}$.
    If we terminate before, i.e., if the palindromic length is less than $\sqrt{\log_{\strongconstplus} n}$, then we achieve the claimed time complexity.
    Otherwise, i.e., if we terminate because $k$ exceeds $\sqrt{\log_{\strongconstplus} n}$, we finish the computation using the algorithm by Borozdin et al.~\cite{borozdin2017linear}, which takes $\Oh(n) \subseteq \Oh(6^{k^2})$ time and space.    
\end{proof}

\begin{lemma}\label{lem:verifyprefixsuffix}
Let $\Aa$ be an affine prefix set of $T$, and let $\Bb$ be affine prefix set $\rev{T}$. If $\Aa$ and $\Bb$ are given in canonical representation of orders respectively at most $t$ and $t'$, then we can decide if there are $A \in \Aa$ and $B \in \Bb$ with $T = A \cdot \rev{B}$ in $\Oh(10^{t + t'})$ time and space.
\end{lemma}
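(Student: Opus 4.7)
The plan is as follows. First I would observe that, since every $A \in \Aa$ is a prefix of $T$ and every $B \in \Bb$ is a prefix of $\rev T$ (so $\rev B$ is a suffix of $T$), the condition $T = A \cdot \rev B$ is equivalent to the purely numerical condition $\absolute{A} + \absolute{B} = n$: if the lengths match, then $A = T[1 \dd \absolute A]$ and $\rev B = T[\absolute A + 1 \dd n]$ automatically. Writing the canonical representations as $\angles{X, (Q_i, 1, u_i)_{i=1}^t}$ and $\angles{X', (Q'_j, 1, u'_j)_{j=1}^{t'}}$, the lemma then reduces to deciding feasibility of
\[
\sum_{i=1}^{t} a_i \absolute{Q_i} + \sum_{j=1}^{t'} b_j \absolute{Q'_j} \enskip = \enskip n - \absolute{X} - \absolute{X'},\qquad a_i \in [1, u_i],\ b_j \in [1, u'_j].
\]

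I would then solve this by a divide-and-conquer recursion on the combined order $t + t'$. At each call the algorithm identifies the longest remaining component (either $Q_1$ or $Q'_1$), branches over candidate values of its top exponent, and recurses on the resulting sub-instance in which that component has been fixed. The sub-instance on the affected side remains canonical by \cref{lem:aux:structure:aps_of_q1:strong}, so the invariant is preserved throughout the recursion. The base case $t = t' = 0$ is a single equality check and the target is updated in $\Oh(1)$ time after each branch. The key combinatorial claim is that only $\Oh(1)$ (say, at most $10$) values of the branching exponent need to be tried at each level. When $\absolute{Q_1} \geq \absolute{Q'_1}$ and we branch on $a_1$, \cref{lem:aux:structure:stronglength} gives $\sum_{i \geq 2} u_i \absolute{Q_i} < \absolute{Q_1}$ and $\sum_{j \geq 2} u'_j \absolute{Q'_j} < \absolute{Q'_1}$, so the residual target $N - a_1 \absolute{Q_1}$ must lie in a window of length $\Oh(\absolute{Q_1})$ for the reduced sub-instance to be feasible, leaving only a constant number of viable $a_1$. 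When $\absolute{Q_1}$ and $\absolute{Q'_1}$ are of comparable magnitude and both $u_1, u'_1$ are large, I would instead branch jointly on the pair $(a_1, b_1)$ and use a two-dimensional lattice argument: the feasible pairs sit inside a window of width $\Oh(\absolute{Q_1})$ in the $a_1\absolute{Q_1} + b_1\absolute{Q'_1}$ axis, and combining both strong-affinity bounds shows that only $\Oh(1)$ lattice points in this window are compatible with the full constraint. Either way the branching is absorbed into the constant $10$, so with recursion depth $t + t'$ the total time and space are $\Oh(10^{t+t'})$.

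The hard part will be making the constant branching bound rigorous in the comparable-top-components regime, because the \emph{merged} length sequence across $\Aa$ and $\Bb$ is not super-decreasing even though each side is individually. A careful case distinction separately treats the regime where $\absolute{Q_1}$ dominates the full $\Bb$-contribution $u'_1 \absolute{Q'_1}$ (branch on $a_1$ alone), and the regime where these quantities are comparable (branch jointly on $(a_1, b_1)$ and bound the admissible lattice using both strong-affinity bounds together with the trivial $\absolute{A}, \absolute{B} \leq n$). Once this case analysis is pushed through, the resulting recursion tree has at most $10^{t+t'}$ nodes and $\Oh(1)$ work per node, yielding the claimed complexity.
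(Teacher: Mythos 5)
Your reduction to the numerical feasibility problem $\sum_i a_i\absolute{Q_i} + \sum_j b_j\absolute{Q'_j} = n - \absolute{X} - \absolute{X'}$ is correct, but it is misleading, and the branching bound you sketch does not actually hold. The trouble is exactly what you flag yourself: the merged length sequence is not super-decreasing. If $\absolute{Q_1} = \absolute{Q'_1} = q$, then the constraint on $(a_1, b_1)$ pins down $a_1 + b_1$ to at most two consecutive integers, but each such sum value admits up to $\min(u_1, u'_1)$ lattice points $(a_1, b_1)$, and the strong-affinity bounds (which control tails, not heads) give no reason this count is $\Oh(1)$. Similarly, when $\absolute{Q_1} > \absolute{Q'_1}$ but $u'_1\absolute{Q'_1}$ is comparable to or larger than $\absolute{Q_1}$, the residual window for $a_1\absolute{Q_1}$ has width $\Theta(u'_1\absolute{Q'_1})$, leaving $\Theta(u'_1\absolute{Q'_1}/\absolute{Q_1})$ viable values of $a_1$. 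Your ``two-dimensional lattice'' argument cannot push this to $\Oh(1)$ with purely arithmetic information; the numerical feasibility instance really can have that many solutions.

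The paper's proof avoids this by not treating the problem as purely numerical at all: it exploits that $A$ and $\rev{B}$ live in the same text $T$. When $\absolute{Q_1} = \absolute{Q'_1}$, the paper does not enumerate $(a_1, b_1)$ pairs; it observes that any feasible pair $(a_1, a'_1)$ can be shifted by $a = \min(u_1 - a_1, a'_1 - 1)$ copies of $Q_1$ (legal because $\absolute{Q_1} = \absolute{Q'_1}$) so that afterward $a_1 = u_1$ or $a'_1 = 1$, and therefore only two subcalls are needed. When $\absolute{Q_1} > \absolute{Q'_1}$, the paper uses strong affinity to place a primitive square $(\rot^s(Q_1))^2$ just after $A$ as a fragment of $T$; if $A$ ended too early, this square would lie inside the $\absolute{Q'_1}$-periodic suffix region coming from $\Bb$, contradicting \cref{lem:primitive_squares}. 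This \emph{textual} argument forces $\absolute{A}$ to land in a window of width $3\absolute{Q_1}$, so at most three candidate $a_1$ survive. Both steps are crucially about periodicity in $T$, not about the integer programme. You would need an argument of this kind (using the shared text, not just the lengths and exponent bounds) to justify the constant branching factor, so the proposal as written has a real gap.
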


\begin{proof}
Let $\rho = \angles{X, (Q_i, 1, u_i)_{i = 1}^t}$ and $\rho' = \angles{X', (Q_i', 1, u_i')_{i = 1}^{t'}}$ be the respective canonical representations of $\Aa$ and $\Bb$.
We implement the procedure recursively. If $t = t' = 0$, then we can check in constant time if the lengths of the two generated strings sum to $n$.
If $t > 0$ and $t' = 0$ (the case $t = 0$ and $t' > 0$ is symmetric), then $\Bb$ contains a single string of some length $m$, and we only have to check if $\Aa$ contains a string of length $n' = n - m$. 
If $\absolute{XQ_1Q_2Q_3\dots Q_t} > n'$, then every string generated by $\Aa$ is of length over $n'$, and we can terminate with a negative answer.
Otherwise, let 
$$
a_{\min} = {(n' - \absolute{X} - \absolute{Q_2^{u_2}Q_3^{u_3}\dots Q_t^{u_t}}) / \absolute{Q_1}}\text{\quad and \quad}
a_{\max} = {(n' - \absolute{X} - \absolute{Q_2Q_3\dots Q_t}) / \absolute{Q_1}}
$$
(both in $\mathbb Q$), and note that a string $XQ_1^{a_1}Q_2^{a_{2}}\dots Q_t^{a_t}$ with $\forall i \in [1, t] : a_i \in [1, u_i]$ can only be of length $n'$ if $a_{\min} \leq a_1 \leq a_{\max}$. By \cref{lem:aux:structure:stronglength}, it holds $a_{\max} - a_{\min} < 1$, hence there is at most one $a \in \mathbb N$ such that $a_{\min} \leq a \leq a_{\max}$. If $a \in [1, u_1]$, then we replace $\rho$ with $\angles{XQ_1^a, (Q_i, 1, u_i)_{i = 2}^t}$ and recurse. Otherwise, we terminate with a negative answer.

It remains the most general case $t > 0$ and $t' > 0$. If there are $A \in \Aa$ and $B \in \Bb$ such that $T = A \cdot \rev{B}$, then it holds $A=XQ_1^{a_1}Q_2^{a_{2}}\dots Q_t^{a_t}$ and $B={X'Q_1'^{a_1'}Q_2'^{a_{2}'}\dots Q_{t'}^{a_{t'}}}$ for some exponents satisfying $\forall i \in [1, t] : a_i \in [1, u_i]$ and $\forall i \in [1, t'] : a_i' \in [1, u_i']$. For now, assume $\absolute{Q_1} = \absolute{Q_1'}$. Let $a = \min(u_1 - a_1, a_1' - 1)$, then the representations generate strings such that
$$T = XQ_1^{a_1+a}Q_2^{a_{2}}\dots Q_t^{a_t} \enskip \cdot \enskip \rev{X'Q_1'^{a_1'-a}Q_2'^{a_{2}'}\dots Q_{t'}^{a_{t'}}}.$$
Note that either $a_1 + a = u_1$ or $a_1' - a = 1$ (or both). We proceed with two recursive calls. In the first one, we replace $\rho$ with $\angles{XQ_1^{u_1}, (Q_i, 1, u_i)_{i = 2}^t}$. In the second one, we replace $\rho'$ with $\angles{X'Q_1', (Q_i', 1, u_i')_{i = 1}^{t'}}$. If both recursive calls have negative answer, then we terminate with negative answer. Otherwise, we terminate with a positive answer. 

Now we can assume $t > 0$, $t' > 0$, and $\absolute{Q_1} \neq \absolute{Q_1'}$. We only consider $\absolute{Q_1} > \absolute{Q_1'}$, as the other case is symmetric. We again assume that there are $A \in \Aa$ and $B \in \Bb$ such that $T = A \cdot \rev{B}$, where $A$ and $B$ are defined as before.
Since a canonical representation is strongly affine, $A' = XQ_1^{a_1 + 2}Q_2^{a_{2}}\dots Q_t^{a_t}$ is a prefix of $T$.
By \cref{lem:aux:structure:stronglength,lem:aux:structureflexperiod}, it is clear that $A = XQ_1^{a_1}Q_1[1\dd s]$ and $A' = XQ_1^{a_1 + 2}Q_1[1\dd s] = A\cdot (\rot^s(Q_1))^2$ for some $s \in [1, \absolute{Q_1})$. 
If $\absolute{A'} < n - \absolute{X'}$, then the primitive square $(\rot^s(Q_1))^2$ is not only a prefix of $\rev{B} = \rev{X'(Q_1')^{a_1'}(Q_2')^{a_{2}'}\dots (Q_{t'}')^{a_{t'}}}$, but also a prefix of $B' = \rev{(Q_1')^{a_1'}(Q_2')^{a_{2}'}\dots (Q_{t'}')^{a_{t'}}}$. 
However, by \cref{lem:aux:structureflexperiod}, we know that $B'$ and thus also $(\rot^s(Q_1))^2$ has period $\absolute{Q_1'} < \absolute{Q_1}$, which contradicts \cref{lem:primitive_squares}.
Hence we have shown that $\absolute{A'} \geq n - \absolute{X'}$, which implies 
$$a_1 \geq (n - \absolute{X'} - \absolute{X} - \absolute{Q_2^{a_2}Q_3^{a_3}\dots Q_t^{a_t}})/\absolute{Q_1} - 2 > (n - \absolute{X'} - \absolute{X})/\absolute{Q_1} - 3,$$
where the second inequality is due to \cref{lem:aux:structure:stronglength}. We define
$$
a_{\min} = {(n - \absolute{X} - \absolute{X'}) / \absolute{Q_1}} - 3\text{\qquad and \qquad}
a_{\max} = {(n - \absolute{X} - \absolute{X'}) / \absolute{Q_1}}
$$
(both in $\mathbb Q$), and observe that $a_1 < a_{\max}$ because otherwise $\absolute{A} \geq n - \absolute{X'}$ and thus $\absolute{A \cdot \rev{B}} > n$.
We have established $a_{\min} < a_1 < a_{\max}$. It holds $a_{\max} - a_{\min} = 3$, which means that there are at most three possible $a \in \mathbb N$ such that $a_{\min} < a < a_{\max}$. For each $a \in [1, u_1]$ with $a_{\min} < a < a_{\max}$, we recurse by replacing $\rho$ with $\angles{XQ_1^a, (Q_i, 1, u_i)_{i = 2}^t}$. If all of the at most three recursive calls have a negative answer, we terminate with a negative answer. Otherwise, we terminate with a positive answer.

Regardless of the case, we perform at most three recursive calls, and with each call the combined order of the representations decreases by one. Thus, the total number of calls is at most $3^{t+t'}$. In each call, computing the new representations takes $\Oh(t + t')$ time with a naive implementation (only simple arithmetic operations are needed). At all times, the required space is linear in the time spent. Hence the total time and space are $\Oh(3^{t+t'} \cdot (t + t'))$, which is less than $\Oh(10^{(t + t')/2})$ if $t + t'$ exceeds a sufficiently large constant.
\end{proof}

\section{Acknowledgements}
We thank Paweł Gawrychowski for his participation in the initial discussion and, in particular, the idea of the lower bound.

\bibliographystyle{plainurl}
\bibliography{main-full.bib}

\end{document}